\documentclass[letterpaper,11pt]{article}
\pdfoutput=1
\usepackage[letterpaper,margin=1in]{geometry}

\usepackage[utf8]{inputenc}
\usepackage{amsmath,amsthm,amssymb}
\usepackage[margin=1in]{geometry}
\usepackage{xcolor,xspace}
\usepackage{cleveref}
\usepackage{paralist}

\newtheorem{theorem}{Theorem}[section]
\newtheorem{definition}[theorem]{Definition}
\newtheorem{lemma}[theorem]{Lemma}

\newtheorem*{claim*}{Claim}
\newtheorem{remark}[theorem]{Remark}

\newtheorem{proposition}[theorem]{Proposition}

\newcommand{\LOCAL}{LOCAL\xspace}
\newcommand{\CONGEST}{CONGEST\xspace}

\DeclareMathOperator*{\E}{\mathbb{E}}
\DeclareMathOperator{\polylog}{polylog}
\DeclareMathOperator{\polyloglog}{polyloglog}
\DeclareMathOperator{\poly}{poly}

\DeclareMathOperator{\ID}{ID}
\newcommand{\alg}[1]{\textsc{#1}}
\newcommand{\mypar}[1]{\paragraph{#1}}

\newcommand{\set}[1]{\left\{#1\right\}}

\newcommand{\Otilde}{{\widetilde O}}
\newcommand{\Omegatilde}{{\widetilde \Omega}}

\newcommand{\ND}{\mathit{ND}_2(\log n)}

\newcommand{\calA}{{\mathcal A}}

\newcommand{\Deltahat}{\hat{\Delta}}
\newcommand{\Vfr}{V^{\text{friendly}}}  
\newcommand{\custombox}[1]{#1}

\newcommand{\myemail}[1]{\,$\cdot$\, {\small #1}}
\newcommand{\myaff}[1]{\,$\cdot$\, {\small #1}\par\smallskip}

\newenvironment{mycover}
{\list{}{\listparindent 0pt
        \itemindent    \listparindent
        \leftmargin    1cm
        \rightmargin   0.5cm
        \parsep        0pt}%
    \raggedright
    \item\relax}
{\endlist}

\begin{document}
\begin{mycover}
{\huge\bfseries\boldmath 
Coloring Fast Without Learning Your Neighbors' Colors 
\par}
\bigskip
\bigskip
\bigskip

\textbf{Magn\'us M. Halld\'orsson}
\myemail{mmh@ru.is}
\myaff{Reykjavik University, Iceland}

\textbf{Fabian Kuhn}
\myemail{kuhn@cs.uni-freiburg.de}
\myaff{University of Freiburg, Germany}

\textbf{Yannic Maus}
\myemail{yannic.maus@cs.technion.ac.il}
\myaff{Technion, Israel}

\textbf{Alexandre Nolin}
\myemail{alexandren@ru.is}
\myaff{Reykjavik University, Iceland}

\end{mycover}
\bigskip

\begin{abstract} 
We give an improved randomized CONGEST algorithm for distance-$2$ coloring that uses $\Delta^2+1$ colors and runs in $O(\log n)$ rounds, improving the recent $O(\log \Delta \cdot \log n)$-round algorithm in [Halld\'orsson, Kuhn, Maus; PODC '20]. We then improve the time complexity to $O(\log \Delta) + 2^{O(\sqrt{\log\log n})}$. 
\end{abstract}
\clearpage
\tableofcontents
\clearpage
\section{Introduction} 
\label{sec:intro}
The distributed coloring problem is arguably the most intensively studied problem in the area of distributed graph algorithms and certainly also one of the most intensively studied problems in distributed computing more generally. The standard assumption is that the \emph{coloring graph} -- the graph on which we want to compute a coloring -- is also the \emph{communication network} -- the graph forming the network topology. We explore in this paper the case when the latter is weaker than the former: the communication is constrained, and direct links are not available to all the ``neighbors'' that are to be colored differently.

The primary setting for this is the \emph{distance-$2$ coloring} problem in the standard distributed \CONGEST model. Given a graph $G=(V,E)$, in the \emph{d2-coloring} problem on $G$, the objective is to assign a color $x_v$ to each node $v\in V$ such that any two nodes $u$ and $v$ at distance at most $2$ in $G$ are assigned different colors $x_u\neq x_v$. Equivalently, d2-coloring asks for a coloring of the nodes of $G$ such that for every $u\in V$, all the nodes in the set $\set{u}\cup N(u)$ (where $N(u)$ denotes the set of neighbors of $u$) are assigned distinct colors. Further note that d2-coloring on $G$ is also equivalent to the usual vertex coloring problem on the graph $G^2$, where $V(G^2)=V$ and there is an edge $\set{u,v}\in E(G^2)$ whenever $d_G(u,v)\leq 2$.

The \CONGEST model is a standard synchronous message passing model~\cite{peleg00}. The graph on which we want to compute a coloring is also assumed to form the network topology. Each node $u\in V$ of the graph has a unique $O(\log n)$-bit identifier $\ID(u)$, where $n=|V|$ is the number of nodes of $G$. Time is divided into synchronous rounds and in each round, each node $u\in V$ of $G$ can do some arbitrary internal computation, send a (potentially different) message to each of its neighbors $v\in N(u)$, and receive the messages sent by its neighbors in the current round. If the size of the messages is not restricted, the model is known as the \LOCAL model~\cite{linial92,peleg00}. In the \CONGEST model, it is further assumed that each message consists of at most $O(\log n)$ bits. 

As our main result, we give an efficient $O(\log n)$-time randomized algorithm for d2-coloring $G$ with at most $\Delta^2+1$ colors, where $\Delta$ is the maximum degree of $G$. This improves on a recent $O(\log \Delta \cdot \log n)$-time algorithm \cite{HKM20} and it matches the best known bound for ordinary distance-1 $(\Delta+1)$-coloring in \CONGEST as a function of $n$ alone.
We further explore more efficient algorithms when $\Delta \ll n$. Combining our main method with a range of powerful recent techniques, we obtain an algorithm that runs in time $O(\log \Delta)+2^{O(\sqrt{\log\log n})}$. 

Before discussing our results in more detail, we first discuss why we believe d2-coloring is interesting, what is known for the corresponding coloring problems on $G$ and why it is challenging to transform \CONGEST algorithms to color $G$ into \CONGEST algorithms for d2-coloring. 

Wireless networking is a major motivation for distance-2 coloring, where nodes with a common neighbor should not simultaneously communicate to avoid a collision at the common neighbor~\cite{chlamtackutten85,mecke07}. While the coloring is to be used for scheduling, the wireless channel need not be the medium for \emph{computing} the coloring. With the advent of software-defined radio and hierarchical / heterogeneous networks, it is well motivated to consider coloring computation in a communication model more powerful than radio networks. Yet, asking for the different-message-to/from-all-neighbors feature of \CONGEST may be hoping for too much. 
More generally, we view it as a major question in distributed graph algorithms whether one can relax the communication requirements for graph coloring. We ask:
\begin{quote}
    \emph{How constrained can the communication structure be to allow for fast (logarithmic, sublogarithmic) distributed graph coloring computation?}
\end{quote}

Distributed d2-coloring is an interesting and important problem for several other reasons. The d2-coloring problem for example also occurs naturally when single-round randomized algorithms are derandomized using the method of conditional expectation~\cite{derandomization_FOCS18}. 
d2-coloring in \CONGEST is further of special interest as it appears to lie at the edge of what is computable efficiently, i.e., in polylogarithmic time, while distance-3 coloring is even hard to verify \cite{FHN20}.

Distance-$k$ problems have not been addressed widely in a distributed setting, partly because distance-$k$ communication can be simulated in $k$ steps of the \LOCAL model. In \CONGEST, the situation changes drastically as simulating a single round of a distance-1 coloring algorithm can incur a factor $\Theta(\Delta^{k-1})$ overhead, i.e., even for $k=2$, the overhead can be linear in $\Delta$. Even the very simple algorithm where each node picks a random available color cannot be efficiently used for d2-coloring as it is in general not possible to keep track of the set of colors chosen by $2$-hop neighbors in time $o(\Delta)$.
Recently, Halld\'orsson, Kuhn and Maus \cite{HKM20} treated d2-coloring in \CONGEST and gave a randomized algorithm using $\Delta^2+1$-colors in $O(\log \Delta \cdot \log n)$ rounds, as well as a deterministic algorithm using $(1+\epsilon)\Delta^2$ colors in $poly(\log n)$ rounds. Our main approach builds heavily on their framework, while simplifying certain features and strengthening structural properties. Distributed graph optimization problems on $G^2$ (with \CONGEST-communication in $G$) such as vertex cover and minimum dominating set have recently been studied in \cite{BCMPP20}.

\subparagraph*{Distributed graph coloring}
The standard variant of the distributed coloring problem on $G$ asks for computing a vertex coloring with at most $\Delta+1$ colors, which is computed by a simple sequential greedy algorithm. 
The main focus in the literature on distributed coloring has been on the \LOCAL model, where by now the problem is understood relatively well. 
The best randomized $(\Delta+1)$-coloring algorithm known in the \LOCAL model, due to Chang, Li, and Pettie~\cite{chang18_coloring}, runs in $\poly\log\log n$ rounds. The complexity given in \cite{chang18_coloring} is $2^{O(\sqrt{\log\log n})}$, while the improvement to $\poly\log\log n$ immediately follows from the recent breakthrough work on deterministic \emph{network decomposition} of Rozho\v{n} and Ghaffari~\cite{RG19}.
For a more detailed discussion of related work on distributed coloring, we refer to \cite{barenboimelkin_book,chang18_coloring,kuhn20_coloring}.

While most known distributed coloring algorithms were developed for the \LOCAL model, 
many of them work directly in the \CONGEST model, including those in \cite{alon86,luby86,linial92,johansson99,Kuhn2006On,barenboim10,BEK15,barenboim15,BEG18,kuhn20_coloring,BKM19}. Still, the best complexity known for coloring in \CONGEST, as a function of $n$ alone, is $O(\log n)$, which is achieved by the following very simple \alg{OneShotColoring} algorithm:
Initially all nodes are uncolored. The algorithm runs in synchronous phases, where in each phase, each still uncolored node $v$ chooses a uniform random color among its available colors (i.e., among the colors that have not already been picked by a neighbor) and $v$ keeps the color if no of its uncolored neighbors tries the same color at the same time~\cite{johansson99,BEPS12}.

The only known published algorithm in \CONGEST with a better bound is due to Ghaffari~\cite{ghaffari19}, who obtains a $(\Delta+1)$-coloring in time $O(\log\Delta) + 2^{O(\sqrt{\log\log n})}$. The second term is due to a network decomposition algorithm also introduced in \cite{ghaffari19}. Unlike for results in the \LOCAL model, it is \emph{not} directly possible to replace this decomposition with the recent construction in \cite{RG19} to improve the dependence on $n$. The reason is that the complexity of the network decomposition construction of \cite{RG19} grows at least linearly in the length of the node identifier bit strings. In the \LOCAL model, it is possible to use a standard coloring algorithm of \cite{linial92} to first map the IDs to $O(\log\log n)$-bit values that are unique up to a sufficient distance so that one can afterwards apply the algorithm of \cite{RG19}. Subsequent to the publication of our results \cite{GGR20} improved upon the network decomposition algorithm from \cite{RG19} (to deal with large IDs in the \CONGEST model) and as a result obtains a $O(\log\Delta) + \poly\log\log n$ CONGEST algorithm for $(\Delta+1)$-coloring. Note that if we have graphs of size $N$ and if IDs and colors  can be represented with $\poly\log N$ bits, there is a recent \emph{deterministic} $(\mathit{deg}+1)$-list coloring algorithm running in $\polylog N$ time in \CONGEST \cite{BKM19}.

\subsection{Contributions}

We provide two efficient randomized \CONGEST model algorithms to compute a d2-coloring of a given $n$-node graph $G=(V,E)$. If $\Delta$ is the maximum degree of $G$, the maximum degree of any node in $G^2$ is at most $\Delta + \Delta\cdot(\Delta-1)=\Delta^2$. As a natural analog to studying $(\Delta+1)$-coloring on $G$, we  study the problem of computing a d2-coloring with $\Delta^2+1$ colors. 

\begin{theorem}
\label{thm:d2ColoringRand}
There is a randomized \CONGEST algorithm that d2-colors a graph with $\Delta^2+1$ colors in $O(\log n)$ rounds, with high probability.
\end{theorem}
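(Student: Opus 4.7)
The plan is to adapt the classical \alg{OneShotColoring} algorithm to the distance-2 setting and implement each ``trial'' in a constant number of \CONGEST rounds. In one trial, every still-uncolored node $v$ draws a tentative color $c_v$ uniformly from its current palette $\Psi(v)\subseteq\{1,\ldots,\Delta^2+1\}$ and broadcasts it to all neighbors in one \CONGEST round. Each intermediate node $u$ then sees the tentative colors of all its uncolored neighbors and remembers the final colors of its already-colored neighbors; for every uncolored neighbor $v$, it locally decides whether $c_v$ coincides with the color of any other node in $\{u\}\cup N(u)$, i.e., whether $v$ suffers a distance-$\le 2$ conflict through $u$. A single ``conflict'' bit per neighbor, sent in one further \CONGEST round, is enough to communicate this. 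Node $v$ permanently adopts $c_v$ iff every neighbor reports no conflict, which is exactly the event that $c_v$ is globally unique among the tentative choices of the nodes at distance $\le 2$ from $v$. In particular, no node is ever asked to learn the list of colors chosen by its 2-hop neighbors, only a single bit per 1-hop neighbor; so one trial costs $O(1)$ rounds in \CONGEST.

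The probabilistic engine is a \emph{slack invariant}: if $s(v):=|\Psi(v)|-|\{u:u\text{ uncolored},\ d_G(v,u)\le 2\}|\ge 1$ always holds, a single trial succeeds for $v$ with probability at least $(1-1/(k+1))^k\ge 1/e$, where $k$ is the number of uncolored nodes at distance $\le 2$ from $v$. This uses only independence of the tentative choices, a union bound, and the fact that all palettes lie inside $\{1,\ldots,\Delta^2+1\}$ and have size at least $k+1$. The invariant holds initially since $|\Psi(v)|=\Delta^2+1$ and $v$ has at most $\Delta^2$ neighbors in $G^2$; and whenever a distance-$\le 2$ node $w$ of $v$ permanently takes a color $c\in\Psi(v)$, both terms defining $s(v)$ drop by exactly one, so $s(v)$ is preserved. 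Given the per-trial success probability $\Omega(1)$, a standard Chernoff/martingale argument over $O(\log n)$ independent trials shows that every node is colored with high probability; the dependencies among the trials of nearby nodes are only positive in the relevant direction and are handled by the same concentration machinery used for \alg{OneShotColoring} in \cite{BEPS12}.

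The main obstacle, and where the bulk of the technical work lies, is maintaining $\Psi(v)$ well enough in \CONGEST so that the slack invariant actually holds for the palettes the algorithm uses. Node $v$ cannot directly learn which colors its distance-2 neighbors have taken: if a common neighbor $u$ sees many of its neighbors colored in the same round, $u$ cannot forward all of their colors to $v$ in $O(\log n)$ bits. Following and simplifying the framework of \cite{HKM20}, I would equip each node with an \emph{approximate} palette $\hat\Psi(v)\subseteq\Psi(v)$, and argue that (i) sampling from $\hat\Psi(v)$ is always safe because every color in it is still available, (ii) the missing colors $\Psi(v)\setminus\hat\Psi(v)$ can be charged to previously colored distance-$\le 2$ neighbors of $v$, so a slack bound of the form $|\hat\Psi(v)|\ge(\text{uncolored distance-}{\le}2\text{ neighbors of }v)+1$ is maintained, and (iii) a constant number of palette-update bits can be piggy-backed onto each trial, so no separate palette-propagation phase of $\Theta(\log\Delta)$ rounds is needed. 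It is precisely this amortized, interleaved palette-update scheme -- replacing the per-phase palette propagation of \cite{HKM20} -- that strips the $\log\Delta$ factor and yields the claimed $O(\log n)$ total round complexity.
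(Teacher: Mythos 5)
Your outer loop (draw a tentative color from a palette, validate via a single bit per immediate neighbor in $O(1)$ rounds, conclude via the slack invariant and a $1/e$ per-trial success bound) is a sound reduction, but it pushes all the difficulty into the palette-maintenance steps (ii) and (iii), and those two steps are exactly where the argument breaks. The gap is information-theoretic. A node $v$ has up to $\Delta^2$ distance-$2$ neighbors but only $\Delta$ incident edges, each carrying $O(\log n)$ bits per round, so over the entire $O(\log n)$-round execution $v$ receives at most $O(\Delta \log^2 n)$ bits. In the worst case — $G^2[v]$ is essentially a clique, so the colored d2-neighbors of $v$ carry pairwise distinct colors — the exact palette satisfies $|\Psi(v)| = k+1$ where $k$ is the number of uncolored d2-neighbors, and your invariant $|\hat\Psi(v)| \geq k+1$ together with $\hat\Psi(v)\subseteq\Psi(v)$ forces $\hat\Psi(v)=\Psi(v)$. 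That is, $v$ must learn \emph{every} color taken by a d2-neighbor, which is $\Theta(\Delta^2)$ colors of $\Theta(\log\Delta)$ bits each, a total of $\Theta(\Delta^2\log\Delta)$ bits. Whenever $\Delta = \omega(\log^2 n)$, no amount of ``piggy-backed constant number of palette-update bits per trial'' can deliver this over $O(\log n)$ rounds, so step (iii) fails and step (ii)'s slack bound cannot be maintained. Nothing in your charging scheme resolves this: the colors in $\Psi(v)\setminus\hat\Psi(v)$ are precisely the ones $v$ \emph{cannot afford to learn about}, and conservatively removing them destroys the slack that your per-trial success bound relies on.

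This dense case is not an edge case — it is the central difficulty of the problem, and the paper takes a structurally different route to handle it. Rather than having each uncolored node maintain (even an approximation of) its own palette and sample from it, the paper first computes an almost-clique decomposition of $G^2$, colors the sparse nodes by uninformed random guessing (their sparsity alone gives $\Omega(\Delta^2)$ slack via the Elkin--Pettie--Su bound), and then, for the dense almost-cliques, inverts the direction of effort: \emph{already-colored} nodes inside an almost-clique guess colors on behalf of the live nodes and pre-filter them against their own neighborhoods, and additionally query a random colored node to discover ``foreign'' colors that are safe for the live node. The almost-clique structure (ACD properties 2(b)--(e) and Lemma~\ref{L:h-degree}) is what bounds the number of false positives/negatives in these proposals by the live node's sparsity, which is exactly what makes a constant success probability per round provable without the live node ever learning its palette. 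So the actual engine is the proposal/filter/query mechanism of \alg{Reduce-Phase} plus Lemmas~\ref{L:phi-ophi} and~\ref{L:ophi}, not palette maintenance. If you want to pursue your approach, the missing ingredient is a substitute for the ACD that lets colored nodes do the filtering for dense vertices; as stated, the proposal does not contain one.
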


The algorithm is given in Sec.~\ref{sec:randAlg},
with the key ideas and challenges outlined at the start of the section. Our second algorithm is more efficient if $\Delta\ll n$.

\begin{theorem}
\label{thm:d2ColoringSublog}
There is a randomized \CONGEST algorithm that d2-colors a graph with $\Delta^2+1$ colors in $O(\log \Delta) + 2^{O(\sqrt{\log\log n})}$ rounds, with high probability.
\end{theorem}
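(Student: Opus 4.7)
The plan is to apply the standard \emph{shattering} paradigm on top of the routine from Theorem~\ref{thm:d2ColoringRand}, followed by a deterministic post-processing step. Concretely, I would first run the randomized procedure developed in Section~\ref{sec:randAlg} for only $\Theta(\log\Delta)$ rounds rather than $\Theta(\log n)$. A shattering analysis in the spirit of Ghaffari's $(\Delta+1)$-coloring result~\cite{ghaffari19} should show that after $\Theta(\log\Delta)$ rounds, each vertex becomes \emph{good} (either already colored, or equipped with enough palette slack to be handled locally later) with probability $1-1/\poly(\Delta)$, and that the failure events depend only on random choices within a constant radius in $G^2$. A union bound over bad-witness trees then implies that, w.h.p., the subgraph of $G^2$ induced by the remaining bad vertices is shattered: every connected component has size at most $\poly(\Delta)\cdot\log n\le\poly\log n$ (assuming $\Delta\le\poly\log n$, the complementary regime being already subsumed by the $O(\log\Delta)$ term) and diameter at most $\polylog n$ in $G^2$.

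The residual problem on each shattered component is a $(\mathit{deg}+1)$-list-coloring instance in $G^2$ of size $N=\poly\log n$. To finish it, I would plug in the network-decomposition construction of~\cite{ghaffari19}, which produces an $(O(\log N),O(\log N))$-decomposition in $2^{O(\sqrt{\log N})}=2^{O(\sqrt{\log\log n})}$ CONGEST rounds, and then run the deterministic $(\mathit{deg}+1)$-list-coloring algorithm of~\cite{BKM19} inside each cluster in $\polylog N=\poly\log\log n$ rounds per color class. Summing over the $O(\log N)$ color classes, the whole deterministic phase still fits in $2^{O(\sqrt{\log\log n})}$ rounds.

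The main obstacle is carrying out all of this in CONGEST on $G$ while the problem effectively lives on $G^2$; a naive simulation of a single $G^2$-round costs $\Theta(\Delta)$ rounds. Two ingredients from prior work mitigate this. First, after shattering, each component has only $\poly\log n$ vertices in $G^2$, so information inside a component can be aggregated and routed through $G$ without exceeding the $O(\log n)$-bit link capacity, at a cost proportional to the component's $G^2$-size rather than to $\Delta$. Second, following~\cite{ghaffari19}, I would prepend an $O(\log^* n)$-round distance-$2$ Linial-style recoloring that replaces the original identifiers by $O(\log\log n)$-bit labels unique at $G^2$-distance $O(1)$; this keeps the identifier-length dependence of the network-decomposition subroutine at $O(\log\log n)$ instead of $O(\log n)$, which is essential for hitting the $2^{O(\sqrt{\log\log n})}$ target. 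Putting these three ingredients together yields the claimed $O(\log\Delta)+2^{O(\sqrt{\log\log n})}$ complexity.
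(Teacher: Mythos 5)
Your high-level outline (run the Section~2 routine for $\Theta(\log\Delta)$ rounds, then shatter, then finish deterministically on $\polylog n$-sized pieces) is the right template, and the complementary-regime remark about $\Delta\ge\poly\log n$ is correct. However, the proposal elides precisely the steps that the paper says are the hard part, and some of the ingredients you invoke do not apply as stated.

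First, the shattering claim is not ``a shattering analysis in the spirit of [ghaffari19] should show\dots''. The paper explicitly flags that the BEPS-style shattering argument relies on \emph{informed} random color trials, where success events across the neighborhood decouple cleanly; in d2-coloring, nodes cannot read their palette, and \alg{Reduce-Phase}'s success events are tied together through the shared helper machinery in the almost-clique. The paper therefore builds a multi-stage preshattering phase: color the sparse nodes separately (Steps~\ref{step:sparsegivesslack}--\ref{step:uninformedschneiderwattenhofer}), run \alg{Reduce-Phase} to split the uncolored graph into $U^{lo}$ and $U^{hi}$ each of live degree $O(\log n)$ (Steps~\ref{step:degreereduction}--\ref{step:filteredschneiderwattenhofer}), then \emph{learn an explicit palette} via an adaptation of \alg{LearnPalette} (Step~\ref{step:learnlist}), and only then run informed OneShotColoring to actually shatter (Step~\ref{step:shattering}). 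The degree-reduction step also needs the adapted Lemma~\ref{lem:shatterdegreereduction}, whose hypotheses (a ``palette-fraction tried'' upper bound and a lower bound on per-color try probability, both robust to adversarial neighbor behavior) are what Lemmas~\ref{L:phi-ophi} and~\ref{L:ophi} are engineered to deliver. None of this falls out of citing the distance-1 argument.

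Second, two ingredients of your postshattering plan break in CONGEST on $G^2$. (a)~The Linial-style $G^2$-recoloring to compress IDs to $O(\log\log n)$ bits is cheap in \LOCAL but not here: the first Linial iteration requires each node to see the $O(\log n)$-bit identifiers of all $\Theta(\Delta^2)$ d2-neighbors, which costs $\Theta(\Delta)$ rounds per step. This is exactly the obstruction the paper discusses when explaining why [RG19] cannot be substituted; its resolution is to take the [Portmann19] distance-$2$ decomposition (which tolerates arbitrary ID length by paying only a $\log^\ast$ factor) and then renumber \emph{within} clusters along BFS trees. (b)~Before running the \cite{BKM19} list-coloring subroutine you must reduce the \emph{color space}: that algorithm's round complexity carries a $\log C$ factor, and $C=\Delta^2$ would reintroduce $\Theta(\log\Delta)$ per color class. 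The paper's Step~\ref{step:colorSpacereduction} derandomizes a hash from $[\Delta^2]$ to $[\poly\log n]$ per cluster for exactly this reason.

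Third, you do not address $G$-connectivity: a connected component of $G^2[U]$ need not be connected in $G$, so the deterministic phase has no communication backbone. The paper fixes this in Step~\ref{step:steiner} by adding Steiner nodes, and must also argue that the augmented components stay of size $\poly\log n$, which uses the fact that remaining live nodes have sparsity $O(\log^2 n)$ (a guarantee arranged in Step~\ref{step:filteredschneiderwattenhofer}). Without the Steiner construction the postshattering phase cannot even be run.

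In short, the skeleton matches, but the three pillars of the actual proof---(i)~making \alg{Reduce-Phase} satisfy the hypotheses of a shattering lemma via a degree-reduction/palette-learning pipeline, (ii)~a distance-$2$, large-ID-tolerant network decomposition together with a colorspace reduction before \cite{BKM19}, and (iii)~Steiner nodes restoring $G$-connectivity---are all missing or replaced by steps that would not go through.
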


\Cref{thm:d2ColoringSublog} relies on the network decomposition algorithm of \cite{Portmann19} that can compute a suitable network decomposition of $G^2$ despite a large ID space in $2^{O(\sqrt{\log\log n})}$ rounds. The unpublished result in \cite{GGR20} computes similar network decompositions despite a large ID space in $\poly\log\log n$ rounds, but only for $G$ and not for $G^2$. If the results of \cite{GGR20} can be extended to $G^2$, which in fact is likely, the runtime of \Cref{thm:d2ColoringSublog} improves to $O(\log \Delta)+\poly\log\log n$, and it then again---as at the time of submission of this manuscript---matches the complexity of ordinary distance-1 coloring in \CONGEST.
The proof of \Cref{thm:d2ColoringSublog} appears in Sec.~\ref{sec:sublog}.




\newcommand{\opsi}{\overline{\psi_v}}

\section{Logarithmic Time Randomized Algorithm}
\label{sec:randAlg}

We give randomized {\CONGEST} algorithms that form a d2-coloring using $\Delta^2+1$ colors. We first introduce notation that we use frequently throughout the proofs in this section.

\subparagraph*{Notation}
The \emph{palette} of available colors is $\{0,1,2,\ldots, \Delta^2\}$. 
The neighbors in $G$ of a node are called \emph{immediate neighbors}, while the neighbors in $G^2$ are \emph{d2-neighbors}.
For a (sub)graph $K$, let $N_K(v)$ denote the set of neighbors of $v$ in $K$, and let $K[v] = K[N_K(v)]$ denote the subgraph induced by these neighbors.
A node is \emph{live} or \emph{uncolored} until it becomes \emph{colored}.
An edge in $G^2$ corresponds to a 2-path (path of length 2) in $G$; thus, $G^2$ can have parallel edges.

A node has \emph{slack $q$} if the number of colors of d2-neighbors plus the number of live d2-neighbors is $\Delta^2+1-q$. In other words, a node has slack $q$ if its palette size is an additive $q$ larger than the number of its uncolored $d2$-neighbors. 

An event holds \emph{w.h.p.}~(with high probability), if for any $c > 0$, we can choose the constants involved so that the event holds with probability $1 - O(n^{-c})$. 

\subsection{Overview}

Our algorithm builds on the approach of \cite{HKM20}, which we first summarize. 
The simple \emph{informed} color guessing approach -- each node tries a random color not used by its d2-neighbors -- fails because the nodes do not have the bandwidth to learn those colors. A simple \emph{uninformed} approach -- trying any random color -- works fine if there is sufficient slack, either because the palette is strictly larger than the degree, or in the beginning when few neighbors have been colored. In this case, even trying a uniformly random color is successful with constant probability. If the node has a \emph{sparse} neighborhood then in the very first round, many pairs of d2-neighbors will conveniently adopt the same color, as proved by Elkin, Pettie and Su \cite{EPS15}, creating the needed slack. We are then left to deal with denser neighborhoods, of varying average non-degree.

The key idea of \cite{HKM20} is to have the colored nodes "help" the \emph{live} nodes by checking random colors on their neighborhoods. This provides a probabilistic filter that helps reduce the load of the live nodes. It turns out that this alone is not sufficient due to \emph{false negatives}: the helper may reject good colors because it has neighbors with those colors. The solution is for the helper to also query one of its neighbor $w$, and forward its color if $w$ is not a d2-neighbor of the live node. It is shown that one of these forms of advice is good with constant probability, but could only argue that for those live nodes with a sparsity in a given range considered. This meant that the round complexity of the method had an extra $O(\log \Delta)$ factor for ranging through the different sparsity levels, on top of the $\log n$ factor for finishing off all nodes of that sparsity.

The main technical ingredient behind our $O(\log n)$-round algorithm is the adaptation and extension of the \emph{almost-clique decomposition} (ACD) method initially proposed by Harris, Schneider and Su \cite{HSS16} for the \LOCAL model and expanded by Assadi, Lee and Khanna \cite{ACK19} for streaming and massively parallel settings. 
The nodes are partitioned into a set of sparse nodes-- which can be handled by uninformed guesses -- and low-diameter clusters of dense nodes. The ACD achieves the same aims as the \emph{similarity graphs} of \cite{HKM20} that guide the querying and ensure effective filtering, but attain some additional crucial properties such as near-regular high degree. Our extension to ACD is to ensure that all nodes outside clusters have a low degree into the cores of the clusters, strengthening the divide between inside and outside. The decomposition additionally simplifies the technical arguments, including load balancing and probabilistic independence. The key property that we then obtain is that in each iteration, every live node (with at least logarithmic size palette) becomes colored with constant probability.
That makes even faster algorithms possible, as we show in the next section. To finish off the nodes with a palette of at most logarithmic size, we apply a second method of \cite{HKM20} black-box, which \emph{learns} the palette of the live nodes and then performs informed color guessing.


\subsection{Algorithm Description}

We now outline our algorithm, followed by details on the implementation. 

Each live node $v$ repeatedly \emph{tries} a suggested color, which means to first \emph{validate} it and then \emph{contest} it. Validating a color means sending it to all immediate neighbors, who then report back if they or any of their neighbors had already adopted that color. Contesting a validated color means proposing it to intermediate neighbors, who report back if any other node also proposes it.
If all answers are negative, then $v$ adopts the color. 

In what follows, let $\epsilon = 1/60$, $c_0 = 48 e^4/\epsilon^2$, and $c_3$ be a constant to be determined. Also, $c_2$ is a sufficiently large constant needed for concentration.

\begin{quote}
   \textbf{Algorithm} \alg{d2-Color}

   If $\Delta^2 \ge c_2\log n$ then  \\
   \hspace*{2em} {\bf 1.} Compute an almost-clique decomposition. \\
   \hspace*{2em} {\bf 2.} repeat $c_0 \log n$ times: \\
   \hspace*{4em} Each live node picks a random color and \emph{tries} it. \\
   \hspace*{2em} {\bf 3.} repeat $c_3\log n$ times \\
    \hspace*{4em} \alg{Reduce-Phase}() \\
   \hspace*{2em}  {\bf 4.} \alg{LearnPalette}() \\
    \alg{FinishColoring()}
\end{quote}

We will discuss and analyze Steps 1--3 of the above algorithm in detail in the following. The remaining steps, \alg{LearnPalette}() and \alg{FinishColoring}(), are from \cite{HKM20}. In  \alg{LearnPalette}(), each live node \emph{learns the palette} of still available colors by cooperatively tallying the colors of d2-neighbors. In \alg{FinishColoring}(), the maximum degree is sufficiently small so that we can efficiently simulate the classic algorithm of informed color guessing to color the remaining live nodes. 

The first step of the algorithm is to compute a decomposition of the nodes into: a) a set of nodes inducing a subgraph  (in $G^2$) that is sufficiently sparse and b) a disjoint collection of almost-cliques (also in $G^2$). In the following, each of the almost-cliques is called a component $C$ and we use two graphs $H$ and $\hat{H}$ (closely related to the ones in \cite{HKM20}) that both essentially consist of all the components (i.e., all the almost-cliques) and all the $G^2$-edges connecting two nodes in the same component. Also computed within each component is a spanning tree for a fast aggregation.
The exact definitions of the decomposition and of the graphs $H$ and $\hat{H}$ appear in Subsection \ref{ssec:similarity}. 

We next detail the steps of \alg{Reduce-Phase}(), which is the core piece of our algorithm. 
\medskip

\noindent\textbf{Algorithm} \alg{Reduce-Phase}()
\begin{enumerate}
\itemsep 0em 
  \item Each live node randomly decides to be \emph{active} with probability $1/8$. All other nodes are \emph{inactive}.
  \item Compute $\phi$, the number of active live nodes in the component $C$, and distribute it to the nodes of $C$.
  \item Each inactive node $u \in C$ computes $\phi_u$, the number of 2-paths to active nodes (by asking its immediate neighbors of their active immediate neighbors). $u$ flips a biased coin: with probability $\min(1,\phi_u/(4\phi))$ it picks one of the $\phi_u$ paths uniformly at random, while with probability $\max(0,1-\phi_u/(4\phi))$, $u$ stops the execution of this iteration. Let $v$ denote the active node at the other end of the path chosen. $u$ verifies that it has only one 2-path to $u$ (by inquiring to its immediate neighbors), and otherwise stops execution of this iteration.
  \item $u$ picks a random color $\hat{c}$ different from its own. If that color is not used by any of its $\hat{H}$-neighbors, then $u$ sends the color to $v$ as a proposal, assigning it a uniformly random priority. $v$ tries the proposed color of highest priority (if any).
  \item $u$ sends query $(v,u)$ along a random 2-path to an inactive $\hat{H}$-neighbor $w$, and assigns it a random priority.
  \item Upon receipt of a query, node $w$ selects the highest priority query $(v,u)$, checks if $v$ is a d2-neighbor, and if $v$ and $w$ are not d2-neighbors, it sends its color $c(w)$ to $v$ (through $u$).
  \item The active node $v$ tries a color chosen uniformly random among the received proposed colors from Step 6 (if any).
  \end{enumerate}

A colored node assists an active node $v$ in two ways: a) guesses and validates a random color for $v$ to try, and b) sees if a random d2-neighbor is also a d2-neighbor of $v$. This is a probabilistic filter that reduces the workload of the active nodes. The key idea is that one of these forms of assistance is likely to be successful.

\mypar{Complexity}
We discuss the almost-clique decomposition in the next subsection and show how to implement it in $O(\log n)$ rounds, w.h.p.
The second step clearly takes $\Theta(\log n)$ rounds.
The procedure \alg{Reduce-Phase} takes 24 rounds, or 8 (Step 2), 2 (Step 3), 2 (Step 4), 2 (Step 5), 6 (Step 6), and 4 (Step 7, including the notification of a new color).

Outline of this section: In Sec.~\ref{ssec:similarity} we describe the almost-clique decomposition and derive key structural properties of dense subgraphs, and in Sec.~\ref{ssec:correctness} we prove the correctness of the algorithms, i.e., we show that any node is colored w.h.p. after $O(\log n)$ rounds.

\subsection{Almost-Clique Decomposition}
\label{ssec:similarity}

We next define the notion of local sparsity and the almost-clique decomposition that we use in our paper. The first definition is a slight adaptation of a similar definition in \cite{EPS15}.

\begin{definition}
A node $v$ is \emph{$\zeta$-sparse} (or \emph{has sparsity} $\zeta$) if $G^2[v]$ contains $\binom{\Delta^2}{2} - \Delta^2\cdot \zeta$ (distinct) edges. 
\label{D:solid}
\end{definition}

Sparsity is a rational number that indicates how many edges are missing from $G^2[v]$, compared with the densest case (when $v$'s d2-neighborhood is a $\Delta^2$-clique). If no pairs of d2-neighbors of $v$ are adjacent, then $\zeta = (\Delta^2-1)/2$, while if $G^2[v]$ forms a $\Delta^2$-clique, then $\zeta=0$.
Elkin, Pettie and Su \cite{EPS15} formalized the connection between sparsity and slack that appears after trying one uniformly random color.

\begin{proposition}[\cite{EPS15}, Lemma 3.1]
Let $v$ be a vertex of sparsity $\zeta$ and let $Z$ be the slack of $v$ after trying a single random color.
Then, 
 $\Pr[Z \le \zeta/(4 e^3)] \le e^{-\Omega(\zeta)}$.
\label{P:sparsity}
\end{proposition}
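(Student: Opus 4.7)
The plan is to show that the slack $Z$ is boosted above its baseline value $1$ by the number of \emph{color collisions} among d2-neighbors of $v$, and that sparsity forces many such collisions in expectation. First I would unpack the slack: when every d2-neighbor of $v$ independently tries a uniformly random color from the palette of $\Delta^2+1$ colors and keeps it exactly when no d2-neighbor of theirs picks the same color, the slack of $v$ after the round equals $1$ plus the \emph{excess} $\sum_c (k_c - 1)$, where $k_c$ is the number of d2-neighbors of $v$ that successfully adopt color $c$. A convenient lower bound on this excess is the count of colors adopted by \emph{exactly two} d2-neighbors of $v$; equivalently, the number of unordered pairs $\{u,u'\}$ of d2-neighbors of $v$ that are not themselves d2-adjacent, that both try the same color $c$, both retain it, and for which no other d2-neighbor of $v$ picks $c$.

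Next I would lower bound the expectation of that count. By the sparsity definition, the number of non-d2-adjacent pairs among the d2-neighbors of $v$ is exactly $\Delta^2 \zeta$. For any such pair $\{u,u'\}$, the probability of choosing a common color is $1/(\Delta^2+1)$, and conditioning on this common value $c$, the probability that none of the at most $3\Delta^2$ relevant other nodes (d2-neighbors of $u$, of $u'$, or of $v$) picks $c$ is at least $(1 - 1/(\Delta^2+1))^{3\Delta^2} \ge e^{-3}$. Hence each indicator has probability at least $e^{-3}/(\Delta^2+1)$, and the expected sum is at least $\zeta/(2e^3)$, sitting a constant factor above the target threshold $\zeta/(4e^3)$.

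The remaining and more delicate step is proving concentration: the indicators share randomness through common d2-neighbors, so one cannot apply a vanilla Chernoff bound. The standard device, and what I would invoke here, is that each node's color is an independent uniform random variable influencing at most $O(\Delta^2)$ of the indicators, so the total count is a bounded-difference / read-$k$ function of independent color choices. Talagrand's inequality (or a read-$k$-family Chernoff bound, as in \cite{EPS15}) then yields a tail bound of $\exp(-\Omega(\zeta))$ for deviation below half of the expectation, completing the proof. This concentration step in the presence of dependencies is the main technical obstacle, though it is by now a standard tool in this line of work.
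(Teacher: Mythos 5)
The paper does not prove this proposition; it is taken as a citation (\cite{EPS15}, Lemma 3.1), so there is no in-paper argument to compare against. Your reconstruction does capture the skeleton of the known proof: the slack gain equals $\sum_c(k_c-1)$ over retained colors, you lower-bound it by the number of colors retained by exactly two non-d2-adjacent d2-neighbors of $v$, the sparsity definition gives $\Delta^2\zeta$ non-adjacent pairs, and each pair is ``good'' with probability at least $e^{-3}/(\Delta^2+1)$, yielding expectation $\Omega(\zeta)$. That part is sound.

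The concentration step, however, is where the sketch has a genuine gap, and it is more than a routine invocation. Your claim that each node's color ``influences at most $O(\Delta^2)$ of the indicators'' is not correct in the read-$k$ sense that the read-$k$ Chernoff bound requires: any $w\in N_{G^2}(v)$ is read by \emph{every} pair indicator via the ``no other d2-neighbor of $v$ picks $c$'' clause, so the read count is $\Theta(\Delta^2\zeta)$, and the resulting tail bound $\exp(-\Omega(\mu/k))$ is vacuous. The bounded-difference route also fails: the Lipschitz constant of your count \emph{is} $O(1)$ (the ``exactly two'' condition caps the effect of one coordinate change), but $\Theta(\Delta^4)$ coordinates matter, so McDiarmid yields only $\exp(-\Omega(\zeta^2/\Delta^4))$. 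A naive Talagrand application does not close this either: the certificate for a good pair must reveal the colors of all of $N_{G^2}(u)\cup N_{G^2}(u')$ to witness retention, giving certification parameter $r=\Theta(\Delta^2)$ and hence at best $\exp(-\Omega(\zeta/\Delta^2))$. The $\exp(-\Omega(\zeta))$ bound in \cite{EPS15} rests on a substantially more careful Molloy--Reed-style argument tailored to the structure of the ``repeated color'' statistic, not a black-box appeal to any of these inequalities. So the step you yourself label ``the main technical obstacle'' is exactly where the proof is incomplete: none of the tools you name, applied as stated, deliver the claimed tail.
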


We require the constant $c_2$ to be such that if $\zeta \ge c_2\log n$, then the contrapositive of Prop.~\ref{P:sparsity} yields that $Z \ge \zeta/(4e^3)$, w.h.p.

\mypar{Decomposition}
We adapt the almost-clique decomposition of \cite{ACK19} (building on \cite{HSS16}) for the distance-2 setting in \CONGEST and endow it with an additional property. 

\begin{definition}
Assume $\epsilon \le 1/60$.
Nodes $u$ and $v$ are $\epsilon$-\emph{similar} if they share at least $(1-\epsilon)\Delta^2$ common d2-neighbors. 
An \emph{almost-clique decomposition} (ACD) with parameter $\epsilon$ is a collection of sets $V_\ast, \hat{C}_1, \hat{C}_2, \ldots, \hat{C}_k$ that cover $V$ and where the $\hat{C}_i$ are disjoint.
Denote $C_i = \hat{C_i}\setminus V_\ast$, for $i=1,\ldots, k$.
The decomposition satisfies the following properties:
\begin{enumerate}
    \item The nodes in $V_\ast$ have sparsity at least $\epsilon^2 \Delta^2/4$.
    \item For any $i \in [k]$, $C_i$ and $\hat{C}_i$ satisfy:
\begin{enumerate}
    \item $|C_i| \ge (1-2\epsilon) \Delta^2$.
    \item The nodes in $\hat{C}_i$ are mutually $10\epsilon$-similar. 
    \item Each $v \in \hat{C}_i$ has at most $28\epsilon\Delta^2$ d2-non-neighbors in $\hat{C}_i$ (i.e., $|\hat{C}_i \setminus N_{\hat{C}_i}(v)| \le 28\epsilon\Delta^2$).
    \item Each $v \in \hat{C}_i$ has at least $(1-10\epsilon)\Delta^2$ d2-neighbors in $C_i$. 
    \item Each $v \in C_i$ is $\epsilon$-dissimilar to every node outside $\hat{C_i}$.
\end{enumerate}
\end{enumerate}
\label{D:acd}
\end{definition}

We refer to each $C_i$ as a \emph{component} and $\hat{C}_i$ as an \emph{extended component}.
The properties imply additional ones: Each extended component is of size at most $(1+28\epsilon)\Delta^2$; and any two nodes in an extended component are within two hops (in $G^2$). The additional property we need that is not in the formulations of \cite{ACK19} or \cite{HSS16} is Property 2(e).

Let $H$ denote the subgraph of $G^2$ induced by the components $C_1, \ldots, C_k$, i.e., $H = \cup_i G^2[C_i] = (V\setminus V_\ast, E_H)$ where $E_H$ consists of the pairs of d2-neighbors within the same component. Similarly, let $\hat{H} = \cup_i G^2[\hat{C}_i]$. 
We consider $H$, $\hat{H}$ and $G^2$ to be \emph{simple} graphs, ignoring multiple 2-paths between the same pair of nodes.

\begin{lemma}
There is an $O(\log n)$-round {\CONGEST} algorithm to form an almost-clique decomposition, for any fixed $\epsilon > 0$.
Afterwards, each node knows its component ID.
\label{lem:acd}
\end{lemma}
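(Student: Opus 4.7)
The plan is to adapt the almost-clique decomposition of \cite{ACK19} (building on \cite{HSS16}) to the distance-$2$ setting in \CONGEST, compensating for the bandwidth restriction with random sampling of $G^2$-neighborhoods. The construction has three conceptual stages: (i) estimate the sparsity of every node and classify it as sparse or dense; (ii) cluster the dense nodes into extended components $\hat{C}_i$ using the $\eps$-similarity relation; (iii) enforce the additional Property~2(e) by demoting offending boundary nodes into $V_\ast$.

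For the sampling step, each node $u$ picks $K=\Theta(\log n/\eps^2)$ random immediate neighbors and broadcasts their IDs to all of its immediate neighbors; this costs $O(\log n)$ rounds. After this, every $v$ has collected, from each $u\in N(v)$, a sample of size $K$ of $u$'s immediate neighbors, which together form a random sample of $v$'s d2-neighborhood. From this sample, $v$ estimates its sparsity $\zeta(v)$ by counting, among sampled pairs of d2-neighbors, how many are d2-adjacent (one more sampled broadcast suffices to check this), and it estimates $|N_{G^2}(v)\cap N_{G^2}(u)|$ for each sampled d2-neighbor $u$ by comparing the two samples. A Chernoff bound gives additive error $\eps^2\Delta^2/16$ for both quantities w.h.p., so thresholding correctly classifies every $v$ as sparse (placed in $V_\ast$ if $\zeta(v)\ge \eps^2\Delta^2/2$) or dense, yielding Property~1.

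To build components, every dense node picks as its leader the smallest-ID dense node among its sampled d2-neighbors estimated to be $\eps$-similar to it, and joins the cluster of that leader. Because the dense nodes form a near-equivalence relation under $\eps$-similarity --- a standard triangle-inequality argument on neighborhoods when $\eps\le 1/60$, as in \cite{HSS16,ACK19} --- every dense node in a given region selects the same leader, and the resulting $\hat{C}_i$, $C_i$ satisfy Properties~2(a)--(d) with the stated constants. The leader's ID is then broadcast to all members along every 2-hop path in $G$; since each $\hat{C}_i$ has $G^2$-diameter at most~$2$, hence $G$-diameter at most~$4$, the dissemination fits into $O(\log n)$ rounds (even accounting for congestion of $O(\log n)$-bit IDs through bottleneck edges). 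Property~2(e) is enforced as a clean-up pass: any $v\in C_i$ estimated $\eps$-similar to some d2-neighbor outside $\hat{C}_i$ is moved to $V_\ast$; such a $v$ must have sparsity $\Omega(\eps^2\Delta^2)$ by the same calculation, so Property~1 is preserved.

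The main obstacle is the bandwidth constraint: a node cannot learn its entire d2-neighborhood in $o(\Delta)$ rounds, so every sparsity and similarity test must be carried out through $\Theta(\log n)$-sized random sketches, which is where the $\Omega(\log n)$ factor comes from. A minor but important point is that $G^2$ may contain parallel $2$-paths; we count each d2-neighbor only once by having the intermediate $u\in N(v)$ with smallest ID act as the canonical conduit, so that the samples behave as draws from the underlying simple graph $G^2$ and the estimates above remain unbiased.
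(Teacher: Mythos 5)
Your high-level plan (random sketches of d2-neighborhoods, then cluster the dense nodes by similarity, then a cleanup for Property~2(e)) is in the same spirit as the paper, but the middle and last steps have real gaps.

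\textbf{Cluster formation.} Your rule is that each dense node joins the cluster of the smallest-ID dense node it \emph{estimates} to be $\eps$-similar among its \emph{sampled} d2-neighbors. This does not give a well-defined partition: two nodes $u,v$ that ought to land in the same almost-clique see different random samples, so the ``smallest-ID similar node in my sample'' can differ between them. Near-transitivity of similarity does not repair this, because the noise is in which candidates are observed at all, not merely in the threshold test. The paper sidesteps this entirely: it defines approximate predicates $\textsc{Buddies}_\eps$ and $\textsc{Popular}_\eps$ via sampling, forms a buddy graph $\Hgraph$, and takes its \emph{connected components} --- a canonical object --- keeping only those that contain an $\eps/2$-popular node. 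That last requirement (an unusually friendly witness inside each component) is load-bearing: Properties~2(a)--(d) are proved by chaining similarity through that witness. You assert those properties ``with the stated constants'' without an analogous anchor, and with your leader-election rule the proofs do not transfer.

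\textbf{Property 2(e) cleanup.} Demoting a node $v\in C_i$ to $V_\ast$ whenever it is (estimated) $\eps$-similar to some d2-neighbor outside $\hat{C}_i$ breaks Property~1 unless you can show $v$ is $\Omega(\eps^2\Delta^2)$-sparse, and that implication is simply not true: a node can be similar to many nodes on both sides of a boundary and still be locally very dense. Moreover, even if this single demotion were safe, it can cascade --- removing $v$ changes $|C_i|$ and the friendliness counts of its neighbors, so Properties 2(a)--(d) need to be re-established after the pass, which you do not address. The paper avoids all of this by defining $\hat{C}_i$ to already include every $\eps/2$-buddy of $C_i$ living in $V_\ast$; Property~2(e) is then immediate from the definition, and disjointness of the $\hat C_i$ is argued separately from the similarity bounds.

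\textbf{Sampling details.} Two smaller but real issues: (a) having each $u$ broadcast $K=\Theta(\log n)$ of its \emph{immediate} neighbors gives $v$ a sample of $N_{G^2}(v)$ that is biased toward d2-neighbors reachable via many 2-paths; your proposed fix (``smallest-ID intermediate node is the canonical conduit'') does not unbias the draw, it only deduplicates parallel paths after the fact. The paper instead has each node independently join a global sample $S$ with probability $\Theta(\log n/\Delta^2)$, so that $S\cap N_{G^2}(v)$ is a genuine uniform sample and $|S_v|=O(\log n)$ can be pipelined. (b) To estimate sparsity you need to test, for sampled $a,b\in N_{G^2}(v)$, whether $a$ and $b$ are themselves d2-adjacent; ``one more sampled broadcast suffices'' is not an algorithm --- $v$ cannot ask $a$ or $b$ directly, and the intermediate nodes on $v$'s 2-paths to $a$ and $b$ need not see each other. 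The paper avoids estimating sparsity at all and instead classifies via (un)friendliness, using Proposition~\ref{P:un-sparse} to conclude that unfriendly nodes are sparse.

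In short: switching to a connected-component construction over a buddy graph with a popularity anchor (as the paper does) is not a cosmetic choice --- it is what makes the ACD properties provable and the decomposition canonical under sampling noise.
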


It is somewhat surprising that such a decomposition can be established efficiently in \CONGEST. The key implementation ideas are in \cite{ACK19} for other models, which are essentially based on randomly sampling nodes. In the distance-2 setting we have the additional challenge of communication with one's d2-neighbors, but the key is to have both parties communicate only with the intermediate node that makes the deciding. 
The proofs are given in Appendix~\ref{app:acd}.

We strengthen the ACD-properties for dense nodes and show that they scale with the node sparsity.
Note that dense nodes can have non-trivial sparsity and  it is crucial in our argument to leverage the corresponding slack. 

\begin{lemma}
Let $\epsilon \le 1/30$.
Let $v$ be a node of sparsity $\zeta$ in an almost-clique $C$ and extended component $\hat{C}$.
Then,
\begin{enumerate}
    \item $v$ has at least $\Delta^2 - (2\zeta+1)/\epsilon$ $\hat{H}$-neighbors (in $\hat{C})$,
    \item $v$ has at most $|\hat{C} \setminus N_{G^2}(v)| \le 3\zeta$ $\hat{H}$-non-neighbors, and
    \item The number of edges in $\hat{H}[v]$ is at least $|E(\hat{H}[v])| \ge \binom{\Delta^2}{2} - (2/\epsilon + 1)\zeta\Delta^2$.
\end{enumerate}
\label{L:h-degree}
\end{lemma}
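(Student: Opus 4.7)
The plan is to derive all three bounds from two double-counting arguments, each combining one of the ACD similarity conditions in \Cref{D:acd} with the global sparsity inequality $|E(G^2[v])| \ge \binom{\Delta^2}{2} - \Delta^2\zeta$.

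For Property~1, I would adopt a ``slot view'' by padding $N_{G^2}(v)$ with phantoms up to exactly $\Delta^2$ elements, so that sparsity becomes the assertion that at most $\Delta^2\zeta$ slot pairs are non-adjacent in $G^2$ (with each phantom deemed non-adjacent to every other slot). The slots partition into three types: $\hat H$-neighbors of $v$ (in $N_{G^2}(v)\cap\hat{C}$), outside d2-neighbors (in $N_{G^2}(v)\setminus\hat{C}$), and phantoms. By Property~2(e) of \Cref{D:acd}, $v$ is $\epsilon$-dissimilar to every outside d2-neighbor $u$, so $|N_{G^2}(v)\cap N_{G^2}(u)|<(1-\epsilon)\Delta^2$ and $u$ is incident to more than $\epsilon\Delta^2-1$ non-adjacent slot pairs; each phantom trivially contributes $\Delta^2-1$. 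Summing over the $\Delta^2 - |N_{\hat H}(v)|$ non-$\hat H$ slots and accounting for the factor-two overcount yields $(\Delta^2-|N_{\hat H}(v)|)(\epsilon\Delta^2-1)\le 2\Delta^2\zeta$, which rearranges to Property~1 (the bound being trivial when $\epsilon\Delta^2<2\zeta+1$).

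For Property~2, I would count ordered pairs $(a,w)$ with $a\in N_{G^2}(v)$, $w\in\hat{C}\setminus(N_{G^2}(v)\cup\{v\})$, and $a\in N_{G^2}(w)$. By Property~2(b), each such $w$ is $10\epsilon$-similar to $v$ and contributes at least $(1-10\epsilon)\Delta^2$ choices of $a$, so the count is at least $(|\hat{C}\setminus N_{G^2}(v)|-1)(1-10\epsilon)\Delta^2$. Dually, each $a$ contributes at most $|N_{G^2}(a)\setminus(N_{G^2}(v)\cup\{v\})|\le\Delta^2-1-|N_{G^2}(a)\cap N_{G^2}(v)|$ (using $v\in N_{G^2}(a)$), and summing over $a\in N_{G^2}(v)$ gives $d_v(\Delta^2-1)-2|E(G^2[v])|\le 2\Delta^2\zeta$ via sparsity and $d_v\le\Delta^2$. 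Combining the two sides and using $\epsilon\le 1/30$ so that $1-10\epsilon\ge 2/3$ delivers $|\hat{C}\setminus N_{G^2}(v)|\le 3\zeta$.

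Property~3 is essentially a consequence of Property~1 and sparsity: an edge of $G^2[v]$ is lost in passing to $\hat H[v]=G^2[N_{G^2}(v)\cap\hat{C}]$ only if at least one endpoint lies in $N_{G^2}(v)\setminus\hat{C}$, of which there are at most $|N_{G^2}(v)\setminus\hat{C}|\cdot(\Delta^2-1)\le((2\zeta+1)/\epsilon)(\Delta^2-1)$. Subtracting this and the at most $\Delta^2\zeta$ already missing edges of $G^2[v]$ from $\binom{\Delta^2}{2}$ gives the stated inequality. The chief technical obstacle is Property~2: its bound $3\zeta$ is independent of $\Delta$ and cannot be extracted from Property~1 alone, whose bound on $|N_{G^2}(v)\setminus\hat{C}|$ is only of order $\zeta/\epsilon$. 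Closing this gap requires the sharper double counting described above, which crucially leverages $10\epsilon$-similarity inside $\hat{C}$ together with a global (rather than slot-by-slot) invocation of the sparsity count.
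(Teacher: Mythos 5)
Your proposal is correct and follows the paper's strategy closely. The double-counting in Parts 1 and 2 is phrased a bit differently — non-adjacent ``slot'' pairs (with phantoms) rather than a degree sum, and ordered pairs $(a,w)$ rather than edges of $G^2$ with one endpoint in $N_{G^2}(v)$ — but the underlying arguments are the same as the paper's: Property 2(e) plus the sparsity count for Part~1, Property 2(b) plus the sparsity count for Part~2, and Part~1 plus sparsity for Part~3. (Minor note: the constant you obtain for Part~3 is $\binom{\Delta^2}{2}-\Delta^2\zeta-\big((2\zeta+1)/\epsilon\big)(\Delta^2-1)$, which does not literally dominate the lemma's stated $\binom{\Delta^2}{2}-(2/\epsilon+1)\zeta\Delta^2$ because of the extra $\Delta^2/\epsilon$ additive term — but the paper's own proof has exactly the same discrepancy, so this is an imprecision in the source, not a gap in your argument.)
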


\begin{proof}  
Recall that by the definition of sparsity, $G^2[v]$ has exactly $\Delta^2((\Delta^2-1)/2 - \zeta)$ edges.

\textbf{1}. 
A d2-neighbor of $v$ that is not $\epsilon$-similar to $v$ can share at most $(1-\epsilon) \Delta^2$ common d2-neighbors with $v$ by ACD property 2(e).
In other words, the d2-neighbors of $v$ that are not $\hat{H}$-neighbors can have degree at most 
$(1-\epsilon)\Delta^2$ in $G^2[v]$.
The number of edges in $G^2[v]$ is then at most
\[ \frac{1}{2}\left( |N_{\hat{H}}(v)| \Delta^2 + (|N_{G^2}(v)| - |N_{\hat{H}}(v)|) (1-\epsilon) \Delta^2\right) \le
\frac{\Delta^2}{2} \left((1-\epsilon) \Delta^2 + \epsilon|N_{\hat{H}}(v)| \right) \ . \]
Combining the two bounds on the number of edges in $G^2[v]$, 
\[ \epsilon |N_{\hat{H}}(v)| \ge \Delta^2 - 1 - 2\zeta - (1-\epsilon)\Delta^2 
  = \epsilon \Delta^2  - 1 - 2\zeta\ . \]
Namely, the number of $\hat{H}$-neighbors of $v$ is at least $\Delta^2 - (2\zeta+1)/\epsilon$.

\textbf{2}. By sparsity, there are at most $(2\zeta+1) \Delta^2$ edges of $\hat{H}$ with exactly one endpoint in $N_{G^2}(v)$. Nodes in $\hat{C} \setminus N_{G^2}(v)$ share at least $(1-10\epsilon)\Delta^2$ d2-neighbors with $v$, by ACD property 2(b). Thus, there are at most $2\zeta \Delta^2/((1-10)\epsilon \Delta^2) = 2\zeta/(1-10\epsilon) \le 3\zeta$ nodes in $\hat{C}$ that are not d2-neighbors of $v$, using that $\epsilon \le 1/30$.

\textbf{3}.
By \textbf{1} of this lemma, $v$ has degree at least $\Delta^2 - q$ in $\hat{H}$, where $q = (2\zeta+1)/\epsilon$. The at most $q$ nodes in $N_{G^2}(v) \setminus N_{\hat{H}}(v)$ have degree sum at most $q (\Delta^2-q)$. Thus, the number of edges in $\hat{H}[v]$ is at least $\binom{\Delta^2}{2} - \zeta\Delta^2 - q\Delta^2$.
\end{proof}

\subsection{Correctness}
\label{ssec:correctness}
We prove that \alg{d2-Color} correctly d2-colors $G$ with $\Delta^2+1$ colors in $O(\log n)$ rounds.
We assume that the almost-clique decomposition and the graphs $H$ and $\hat{H}$ have been correctly constructed, in the sense of Def.~\ref{D:acd}.
Also, that nodes of sparsity $\zeta \ge c_2 \log n$ have slack at least $\zeta/(4e^3)$ as promised by Prop.~\ref{P:sparsity}.
All statements in this section are conditioned on these events.

We first give a high-level proof which encapsulates the core of the technical argument in the following lemma, which is then proven in the upcoming subsubsection.

\begin{lemma}
There is an absolute constant $c'$ such that the following holds.
For a live node $v$ in given iteration of \alg{Reduce-Phase}, 
there is a subset $S \subseteq \psi_v$ of size at least $|\psi_v|/2$ such that each color in $S$ has probability at least $1/(c' |\psi_v|)$ of being validated by $v$. 
\label{L:xxx}
\end{lemma}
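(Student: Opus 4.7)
The plan is to lower bound, for each color $c \in \psi_v$, the probability that $c$ is proposed to $v$ during the iteration, and then to apply an averaging argument to extract a subset $S \subseteq \psi_v$ of size at least $|\psi_v|/2$ for which this probability is $\Omega(1/|\psi_v|)$. Throughout, I would condition on $v$ being active (which happens independently with probability $1/8$) and on the structural guarantees of Definition~\ref{D:acd} and Lemma~\ref{L:h-degree} for $v$ and its extended component $\hat{C}$. Note that for $c \in \psi_v$, the events ``$v$ tries $c$'' and ``$v$ validates $c$'' coincide (since no d2-neighbor of $v$ commits a color until the end of the iteration), so it suffices to lower bound the former.

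First, I would show that $v$ attracts $\Theta(\Delta^2/\phi)$ helpers through Step~3. The biased coin with probability $\min(1,\phi_u/(4\phi))$ is tuned precisely so that each inactive $u$ contributes load $1/(4\phi)$ to any single active node reachable by a 2-path from $u$; summing over the $\Theta(\Delta^2)$ inactive $\hat H$-neighbors of $v$ supplied by Lemma~\ref{L:h-degree}, and noting that the uniqueness filter on 2-paths removes only a bounded fraction because the 2-path multiplicity inside $\hat C$ is bounded, yields the claim. For each such helper $u$ I would then examine the two independent routes to a proposal of color $c$. Via Step~4, $u$ draws $\hat c$ uniformly in $\{0,\dots,\Delta^2\}\setminus\{c(u)\}$, hits $c$ with probability $1/\Delta^2$, and forwards the proposal iff $c$ is not used by any $\hat H$-neighbor of $u$. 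By the $10\epsilon$-similarity of $u$ and $v$ inside $\hat C$ and by Lemma~\ref{L:h-degree}, the sets of colors blocked from $u$ and from $v$ differ by only $O(\epsilon\Delta^2+\zeta)$, so for all but an $O(\epsilon\Delta^2+\zeta)$-sized subset of $\psi_v$, color $c$ is indeed available to $u$. Via Steps~5--6, $u$ queries a uniformly random $\hat H$-neighbor $w$, which forwards $c(w)$ to $v$ exactly when $w \notin N_{G^2}(v)$; by Lemma~\ref{L:h-degree}(2) the fraction of ``useful'' $w$ is $\Omega(1)$ in the dense regime, and the forwarded color is distributed as a uniformly random color among colored nodes in $\hat C \setminus N_{G^2}(v)$.

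Combining the two contributions and summing over all $\Theta(\Delta^2/\phi)$ helpers, the expected mass placed on each $c \in \psi_v$ is $\Omega(\Delta^2/(\phi|\psi_v|))$. Markov's inequality applied to the colors whose mass falls below this average yields a subset $S \subseteq \psi_v$ of size at least $|\psi_v|/2$ where each color collects $\Omega(1/|\psi_v|)$ expected proposal mass. Since the priorities in Step~4 and the selection in Step~7 are both uniformly random, $v$ picks a uniformly random one of the proposals it receives, which converts per-color proposal mass into a per-color probability of being tried, giving the desired $\Omega(1/|\psi_v|)$ lower bound on $\Pr[v \text{ validates } c]$ for each $c \in S$.

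The main obstacle I anticipate is the split between the two regimes: Step~4 alone suffices when $|\psi_v|\approx\Delta^2$, but collapses once the palette shrinks because $u$'s random guesses are almost always blocked by $u$'s $\hat H$-neighbors; in that regime the forwarded-color mechanism of Steps~5--6 becomes essential, and one must carefully show that the image of the forwarding map spreads inside $\psi_v$ rather than concentrating on a small set of colors shared by many non-neighbors. A secondary subtlety is the correlation between proposals of different helpers through their shared $\hat H$-neighborhoods, together with the random denominator in the uniform selection at $v$; the former is handled by working with linearity of expectation in the averaging step, while the latter is controlled by a Chernoff bound on the number of proposals that $v$ receives.
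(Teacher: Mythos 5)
The overall route you sketch is the same as the paper's: two sources of proposals (Step~4 random guesses, Steps~5--7 forwarded colors), helpers attracted at rate $\Theta(1/\phi)$ through the biased-coin filter, decentness from bounded 2-path multiplicity, and a final ``proposal-mass divided by total proposals'' conversion. Where you deviate is the combining step, and that is where a genuine gap lies.

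The crux is that the two mechanisms contribute to \emph{essentially disjoint} subsets of $\psi_v$, not to all of $\psi_v$ at a uniform rate. Step~4 can only propose a color $c \in \psi_v$ if $c$ is not held by any $\hat H$-neighbor of the helper $u$; since $u$ is in $\hat C$, this roughly means $c$ must be \emph{absent} from $\hat C$, i.e.\ $c \in \psi_v \setminus \opsi$. Steps~5--7 forward the color of a node $w \in \hat C \setminus N_{G^2}(v)$, so they can only yield colors that \emph{are present} in $\hat C$, i.e.\ $c \in \opsi$. Your claim that ``the expected mass placed on each $c \in \psi_v$ is $\Omega(\Delta^2/(\phi|\psi_v|))$'' therefore does not hold pointwise; a color in $\opsi$ gets essentially no mass from Step~4 and a color in $\psi_v \setminus \opsi$ gets none from Steps~5--7. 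Your ``combine and average with Markov'' step cannot bridge this; the paper instead does a clean dichotomy: if $|\psi_v \setminus \opsi| \ge |\psi_v|/2$, take $S = \psi_v \setminus \opsi$ and use Step~4 only (Lemma~\ref{L:phi-ophi}); otherwise $|\opsi| > |\psi_v|/2$, take $S = \opsi$ and use Steps~5--7 only (Lemma~\ref{L:ophi}). You flag ``the split between the two regimes'' as the anticipated obstacle, but you frame it as depending on $|\psi_v|$ versus $\Delta^2$, when in fact the correct splitting statistic is $|\opsi|$ versus $|\psi_v \setminus \opsi|$, which need not track the raw palette size.

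Two secondary points. First, the denominator control should be by Markov's inequality, not Chernoff: the expected number of proposals reaching $v$ can be $O(1)$, so Chernoff gives nothing useful; what is needed and what the paper uses is $\Pr[\text{at most }2\,\mathbb{E}] \ge 1/2$. Second, the denominator also needs an \emph{upper} bound on the bad (false-positive) proposal mass relative to the good mass; this is Lemma~\ref{L:total-proposals-step3}, which you do not address, and it is what ensures that the proposal $v$ ultimately selects is a good one a constant fraction of the time.
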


We then easily dispose of the sparse nodes. Since they have slack linear in their degree, they get colored with constant probability in each round, simply by contesting a uniformly random color.

\begin{lemma}
Every node in $V_\ast$ is colored after Step 2 of \alg{d2-Color}, w.h.p.
\label{O:sparse}
\end{lemma}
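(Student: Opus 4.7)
My plan is to combine Proposition~\ref{P:sparsity} with a routine One-Shot-coloring analysis: the first iteration of Step 2 is the ``single random color try'' that generates slack for sparse nodes, and every subsequent iteration then succeeds with constant probability thanks to that slack.

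By Definition~\ref{D:acd}, every $v \in V_\ast$ has sparsity $\zeta \ge \epsilon^2 \Delta^2/4$, which is $\Omega(\log n)$ in the regime $\Delta^2 \ge c_2 \log n$ in which Step 2 executes (with $c_2$ chosen large enough to absorb the factor $\epsilon^2/4$). Hence the standing assumption of Section~\ref{ssec:correctness}, derived from Proposition~\ref{P:sparsity}, yields after the first iteration of Step 2 that every $v \in V_\ast$ has slack $q_v \ge \zeta/(4e^3) = \Omega(\Delta^2)$, w.h.p.\ by union bound over $V_\ast$. I would then observe that slack is monotone non-decreasing while $v$ stays live: each time a d2-neighbor of $v$ becomes colored with color $c$, the quantity $U+L$ (distinct d2-neighbor colors plus live d2-neighbors) either stays equal (new color) or decreases by one (duplicate color), so the $\Omega(\Delta^2)$ slack bound persists throughout all iterations of Step 2.

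For the per-iteration success probability of a live $v \in V_\ast$, I would argue as follows. The node picks a uniform random color $c \in \{0,\ldots,\Delta^2\}$; validation passes iff $c$ is not among the $U$ colors of colored d2-neighbors, which has probability $(\Delta^2+1-U)/(\Delta^2+1) = (L+q_v)/(\Delta^2+1)$, and by independence of other nodes' color choices, contest passes iff no live d2-neighbor also picks $c$, which has probability $(1-1/(\Delta^2+1))^L \ge 1/e$ using $L \le \Delta^2$. Multiplying gives a per-iteration success probability of at least $q_v/(e(\Delta^2+1)) \ge \epsilon^2/(16 e^4)$, a constant independent of $n$ and $\Delta$.

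Chaining the iterations, the probability that a fixed $v \in V_\ast$ is still uncolored after the $c_0 \log n$ iterations of Step 2 is at most $(1 - \epsilon^2/(16 e^4))^{c_0 \log n - 1}$, which---with $c_0 = 48 e^4/\epsilon^2$ as fixed in the algorithm---is $n^{-3}$. A union bound over $|V_\ast| \le n$ completes the proof. I do not expect a serious obstacle here; the only mild point is aligning constants so that the w.h.p.\ slack guarantee from Proposition~\ref{P:sparsity} absorbs cleanly into the w.h.p.\ conclusion of this lemma via union bound over $V_\ast$.
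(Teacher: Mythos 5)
Your proof is essentially the paper's argument, just spelled out in a bit more detail: you invoke Definition~\ref{D:acd}(1) for $\zeta\ge\epsilon^2\Delta^2/4$, Proposition~\ref{P:sparsity} for slack $\Omega(\Delta^2)$ (the same standing event the section conditions on), the $(1-1/(\Delta^2+1))^{\Delta^2}\ge1/e$ bound for the contest step, and the same constant $c_0=48e^4/\epsilon^2$ to chain the iterations to $n^{-3}$ before a union bound. Your added observations (slack is monotone while a node stays live, and the explicit split into validation and contest probabilities) are correct and make the argument a touch more self-contained than the paper's; the only minor blemish is dropping the $\Delta^2/(\Delta^2+1)$ factor when passing from $q_v/(e(\Delta^2+1))$ to $\epsilon^2/(16e^4)$, a harmless looseness the paper's own calculation also contains.
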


\begin{proof}
Let $v \in V_\ast$. By Def.~\ref{D:acd}(1), $v$ has sparsity at least $\zeta \ge \epsilon^2 \Delta^2/4$, and by Prop.~\ref{P:sparsity}, it has slack at least $c_{13} \doteq \epsilon^2/(16e^3)$, w.h.p. 
Furthermore, the probability that no d2-neighbor of $v$ tries the same color in the same round is at least $(1-1/(\Delta^2+1))^{\Delta^2} \ge 1/e$, applying Ineq.~(\ref{eq:inv-e}). Thus,
with probability at least $c_{13}/e$, $v$ becomes colored in that round. Hence, the probability that it is not colored in all $c_0 \log n$ rounds is at most $(1-c_{13}/e)^{c_0\log n} \le e^{-c_0 c_{13}/e \log n} \le n^{-c_0 c_{13}/e} = n^{-3}$, since $c_0 = 48 e^4/\epsilon^2 = 3e/c_{13}$.
\end{proof}

\medskip

\noindent\textbf{Theorem \ref{thm:d2ColoringRand} (restated)}. 
\emph{There is a randomized {\CONGEST} algorithm to d2-color with $\Delta^2+1$ colors in $O(\log n)$ rounds,  with high probability.}
\begin{proof}
By Lemma \ref{O:sparse}, it suffices to focus on the dense nodes.
We first claim that in each iteration, each live node $v$ with palette size $\Omega(\log n)$ becomes colored with a constant non-zero probability.

Consider a given iteration and a live node $v$. 
With probability 1/8, $v$ is active. It has at most $|\psi_v|$ live neighbors and expected at most $|\psi_v|/8$ are active. By Markov's inequality, at most $|\psi_v|/4$ are active, with probability at least $1/2$. By Lemma \ref{L:xxx}, there is a subset $S \subseteq \psi_v$ of size at least $|\psi_v|/2$ such that each color in $S$ has probability at least $1/(c' |\psi_v|)$ of being validated. 
Independent of what these active neighbors choose, there is then a subset of at least $|\psi_v|/2 - |\psi_v|/4 = |\psi_v|/4$ colors that are available to $v$, i.e., are not contested by d2-neighbors of $v$ in that iteration.
The probability that one of them is validated, and leading to a valid coloring of $v$, is then at least
\[ c_* = \frac{1}{8} \cdot \frac{1}{2} \cdot \frac{|\psi_v|/4}{c'|\psi_v|} = \frac{1}{64 c'}\ , \]
establishing the claim.

Applying Chernoff bound (\ref{eq:chernoff-upper}) to the above claim, after $5/c_* \cdot \log n$ iterations of \alg{Reduce-Phase}, it holds with probability at least $1-1/n^3$ that all nodes are either colored or have palette size $O(\log n)$ (in which case they have $O(\log n)$ uncolored d2-neighbors). 
The coloring is then completed by the two algorithms of \cite{HKM20}, both running in $O(\log n)$ rounds.
\end{proof}

\subsubsection{Proof of Lemma \ref{L:xxx}}

We prove our main result in two parts, given in 
Lemmas \ref{L:phi-ophi} and \ref{L:ophi}, distinguishing between the two forms of making progress: based on Step 4 or Steps 5-7 of \alg{Reduce-Phase}. 

\begin{definition}
An inactive node is \emph{$v$-decent} (or just \emph{decent}) if it has at most $4\phi$ 2-paths in its almost-clique $C$ to active nodes (in $C$) \emph{and} has exactly one 2-path to $v$.
\label{D:decent}
\end{definition}
The distinction between 2-paths and d2-neighbor relations is the rationale for the \emph{decent} definition. Those nodes with lots of paths to active nodes can cause much congestion with poor proposals, while being of limited use to those $H$-neighbors to which they have few paths. 

\begin{lemma}
Let $v$ be a live node and $w$ be a node, both in $\hat{C}$.
Then, $v$ and $w$ have at least $\Delta^2/4$ common d2-neighbors in $C$ that are $v$-decent.
\label{L:decent}
\end{lemma}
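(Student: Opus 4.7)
The plan is to run a three-step peeling argument starting from the pool of common d2-neighbors of $v$ and $w$ in $C$, successively discarding nodes that violate each of the conditions in the definition of $v$-decent. First, I would set $X := N_{G^2}(v)\cap N_{G^2}(w)\cap C$ and combine the $10\epsilon$-similarity of nodes in $\hat{C}$ (Def.~\ref{D:acd}(2b)) with Def.~\ref{D:acd}(2d)---which forces at most $10\epsilon\Delta^2$ of $v$'s d2-neighbors to sit outside $C$---to obtain $|X|\ge (1-10\epsilon)\Delta^2 - 10\epsilon\Delta^2 = (1-20\epsilon)\Delta^2$.

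Second, I would cut $X$ down to the subset $Y$ of nodes admitting exactly one 2-path to $v$. Since $v$ has at most $\Delta(\Delta-1)$ 2-paths in total but at least $(1-10\epsilon)\Delta^2$ distinct d2-neighbors (Def.~\ref{D:acd}(2d)), a double-counting argument shows that at most $10\epsilon\Delta^2$ of its d2-neighbors are reachable by more than one 2-path, giving $|Y|\ge (1-30\epsilon)\Delta^2$. Third, I would discard nodes $u$ with $\phi_u > 4\phi$: since each active node has at most $\Delta^2$ 2-paths ending at it, $\sum_{u\in C}\phi_u\le \phi\Delta^2$, and by averaging at most $\Delta^2/4$ such $u$ exist in $C$.

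Combining the three steps, the number of $v$-decent common d2-neighbors in $C$ is at least $(1-30\epsilon)\Delta^2 - \Delta^2/4$, which with $\epsilon = 1/60$ yields exactly $\Delta^2/4$. The main obstacle is that this arithmetic is \emph{genuinely tight}: each of the three $10\epsilon\Delta^2$ losses and the $\Delta^2/4$ threshold loss leave no slack whatsoever at the chosen value of $\epsilon$, which explains the somewhat mysterious constants fixed in the ACD. A secondary point requiring care is the ``inactive'' clause in the definition of decent, since active nodes in the surviving pool should also be excluded; however, activation is independent of all the structural constraints and occurs with probability only $1/8$, so the expected extra loss is a constant fraction that can be absorbed by Chernoff concentration together with slight constant slack when the lemma is used inside Lemma~\ref{L:xxx}.
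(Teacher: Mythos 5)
Your three-step peeling is exactly the paper's proof: start from the common d2-neighbors of $v$ and $w$ in $C$ (lower-bounded by $(1-20\epsilon)\Delta^2$ via the $10\epsilon$-similarity of $\hat{C}$ and Def.~\ref{D:acd}(2d)), subtract the at most $10\epsilon\Delta^2$ nodes with multiple 2-paths to $v$, subtract the at most $\Delta^2/4$ nodes with $\geq 4\phi$ 2-paths to active nodes by averaging over $\sum_u \phi_u \leq \phi\Delta^2$, and evaluate at $\epsilon = 1/60$ to land on $\Delta^2/4$ — same decomposition, same arithmetic.

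The caveat you raise about the ``inactive'' clause in Def.~\ref{D:decent} is a fair reading: the paper's own proof also never discards the active nodes from the surviving pool, so your proof inherits the gloss rather than introducing one. Two small corrections to your discussion of the patch, though. First, activation is \emph{not} independent of the structural constraints: $\phi$ and each $\phi_u$ are themselves functions of the random activation, so the $\phi_u \le 4\phi$ test and the ``is $u$ inactive'' test are coupled, and the clean Chernoff argument needs to condition a bit more carefully (e.g., fix the activation of everyone except $u$, or observe that flipping one node's status changes $\phi$ and each $\phi_{u'}$ by at most one). Second, because the arithmetic is tight at exactly $\Delta^2/4$, absorbing the active fraction genuinely requires weakening something — either the lemma's constant to some $c\Delta^2$ with $c<1/4$, or the factor $4$ in the decent definition — rather than invoking ``slight constant slack'' that isn't there; fortunately, the constant is fully absorbed into the unnamed $c'$ in Lemma~\ref{L:xxx} and its downstream uses, so the change is harmless.
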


\begin{proof}
The two nodes $v$ and $w$ are $10\epsilon$-similar, by Def.~\ref{D:acd}(2b).
Since $v$ has at least $(1-10\epsilon)\Delta^2$ distinct d2-neighbors in $C$ (by Def.~\ref{D:acd}(2d)), they share at least $(1-20\epsilon)\Delta^2 \ge 2\Delta^2/3$ d2-neighbors in $C$ (using that $\epsilon \le 1/60$).
This also means that there are at most $10\epsilon\Delta^2 \le \Delta^2/6$ nodes in $C$ with multiple 2-paths to $v$.
Also, since there are at most $\phi\Delta^2$ total number of 2-paths to the $\phi$ live nodes in $C$, there are at most $\Delta^2/4$ nodes with $4\phi$ or more 2-paths to live nodes. 
Hence, there are at least $2\Delta^2/3 - \Delta^2/6  - \Delta^2/4 = \Delta^2/4$ common d2-neighbors of $v$ and $w$ in $C$ that are decent, i.e., have at most $4\phi$ 2-paths to active nodes and exactly one 2-path to $v$. 
\end{proof}

A color proposed to an active node $v$ is \emph{bad} if it is already assigned to a d2-neighbor of $v$. Namely, it is bad if it is a "false positive".

\begin{lemma}
The expected number of bad proposals generated in Step 4 for $v$ is at most $(1/\epsilon+1)\zeta/\phi$.
\label{L:total-proposals-step3}
\end{lemma}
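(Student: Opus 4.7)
The plan is to express $\E[X]$ (the expected number of bad proposals received by $v$) as a sum over pairs $(u,w)$ of helper and color-witness, bound the per-pair probability, and then invoke Lemma~\ref{L:h-degree}(3) to control the total count of eligible pairs in terms of $v$'s sparsity $\zeta$.

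For the per-pair analysis, fix a colored d2-neighbor $w$ of $v$. A bad proposal via $w$ from an inactive helper $u$ requires: $u$ has a unique 2-path to $v$ and selects that path in Step~3; $u$'s uniformly random color $\hat c$ in Step~4 equals $c(w)$; and $c(w)$ passes $u$'s $\hat{H}$-neighborhood check, which necessitates $w\notin N_{\hat{H}}(u)$ (otherwise $w$ itself would reject the color). The first event has probability $\min(1/\phi_u,1/(4\phi))\le 1/(4\phi)$ by the biased coin in Step~3, and the second has probability $1/\Delta^2$, independently. Letting $P_v$ be the set of inactive nodes with a unique 2-path to $v$ (so $P_v\subseteq C\subseteq\hat{C}$, whence $P_v\subseteq N_{\hat{H}}(v)$), a union bound over colored d2-neighbors $w$ of $v$ gives
\[
\E[X]\;\le\;\frac{M}{4\phi\Delta^2},\qquad M:=\bigl|\{(u,w):u\in P_v,\ w\in N_{G^2}(v),\ w\notin N_{\hat{H}}(u)\}\bigr|.
\]

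To bound $M$, partition $M=M_A+M_B$ according to whether $w\in\hat{C}$ (case $A$) or $w\in N_{G^2}(v)\setminus\hat{C}$ (case $B$); in case $A$ both $u,w\in N_{\hat{H}}(v)$ and $(u,w)$ is an ordered non-edge of $\hat{H}[v]$, while in case $B$ the condition $w\notin N_{\hat{H}}(u)$ is automatic. Writing $a=|N_{\hat{H}}(v)|$ and $b=|N_{G^2}(v)\setminus\hat{C}|$, we have $M_A\le 2\bigl(\binom{a}{2}-|E(\hat{H}[v])|\bigr)$ and $M_B\le |P_v|\cdot b\le ab$, so
\[
\tfrac12 M_A+M_B \;\le\; \Bigl(\tbinom{a}{2}-|E(\hat{H}[v])|\Bigr)+ab \;\le\; \tbinom{\Delta^2}{2}-|E(\hat{H}[v])|,
\]
using $\binom{a}{2}+ab=\binom{a+b}{2}-\binom{b}{2}\le\binom{\Delta^2}{2}$. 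Lemma~\ref{L:h-degree}(3) now bounds the right-hand side by $(2/\epsilon+1)\zeta\Delta^2$, hence $M\le M_A+2M_B\le 2(2/\epsilon+1)\zeta\Delta^2$. Substituting yields $\E[X]\le (2/\epsilon+1)\zeta/(2\phi)\le (1/\epsilon+1)\zeta/\phi$.

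The main obstacle is packaging the combinatorics so that both sources of badness are absorbed by a single application of Lemma~\ref{L:h-degree}(3): the ``true'' non-edges of $\hat{H}[v]$ and the ``phantom'' entries missing from $N_{\hat{H}}(v)$ because some d2-neighbors of $v$ lie outside $\hat{C}$. Viewing the latter as missing slots of an ideal $\Delta^2$-sized $\hat{H}$-neighborhood of $v$ is what lets their contribution $ab$ combine with $\binom{a}{2}-|E(\hat{H}[v])|$ inside $\binom{\Delta^2}{2}-|E(\hat{H}[v])|$, rather than being charged separately by a weaker bound such as the $(2\zeta+1)/\epsilon$ estimate from Lemma~\ref{L:h-degree}(1), which would inflate the constant on $\zeta/(\epsilon\phi)$.
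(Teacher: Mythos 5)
Your proof is correct and takes essentially the same approach as the paper: the per-pair (helper, color-witness) probability bound of $1/(4\phi\Delta^2)$, followed by a count of eligible pairs that is controlled via a single application of Lemma~\ref{L:h-degree}(3). The only cosmetic difference is the bookkeeping — you count ordered pairs $(u,w)$ and split into $w\in\hat C$ versus $w\notin\hat C$, whereas the paper bounds the per-helper quantity $|S_u|\le(\Delta^2-1)-|N_{\hat H}(u)\cap N_{\hat H}(v)|$ and sums over $u$ using the handshake identity $\sum_u|N_{\hat H}(u)\cap N_{\hat H}(v)|=2|E(\hat H[v])|$; both paths funnel the two sources of badness into $\binom{\Delta^2}{2}-|E(\hat H[v])|$ exactly as you describe, and your constant is in fact marginally tighter.
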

\begin{proof}
Let $u$ be an inactive $H$-neighbor of $v$.
Let $Y_u$ be the event that $u$ picks $v$ in Step 2, and note that $\Pr[Y_P] \le 1/(4\phi)$.
Let $X_u$ be the event that $u$ generates a bad proposal for $v$ in Step 4. That event occurs when $u$'s randomly chosen color is used by a node in $S_u$, where $S_u = N_{G^2}(v)\setminus N_{\hat{H}}[u]$ is the set of d2-neighbors of $v$ that are not $\hat{H}$-neighbors of $u$ (nor $u$ itself). The number of such colors is at most $|S_u| = |N_{G^2}(v) \setminus N_{\hat{H}}[u]| \le (\Delta^2-1) - |N_{\hat{H}}(u)\cap N_{\hat{H}}(v)|$. 
There are at most $\Delta^2$ colors to choose from -- all except the one on $u$ -- so 
\[ \Pr[X_u | Y_u] \le \frac{|S_u|}{\Delta^2} \le \frac{(\Delta^2-1) - |N_{\hat{H}}(u)\cap N_{\hat{H}}(v)|}{\Delta^2} \ . \]
By applying Lemma \ref{L:h-degree}(3), we have that
\[ \sum_{u \in N_{\hat{H}}(v)} |N_{\hat{H}}(u)\cap N_{\hat{H}}(v)| = 2 |E(\hat{H}[v])| \ge \Delta^2(\Delta^2-1) - (4/\epsilon+2)\zeta\Delta^2\ . \]
Combining the two bounds, letting $I$ denote the set of inactive $H$-neighbors of $v$, we get that 
\[ \sum_{u \in I} \Pr[X_u | Y_u] \le \sum_{u \in N_{\hat{H}}(v)} \Pr[X_u | Y_u] \le (4/\epsilon+2)\zeta \ . \]
Hence, the expected number of bad proposals generated for $v$ is
\[ \sum_{u \in I} \Pr[X_u \cap Y_u] = \sum_{u \in I} \Pr[Y_u] \cdot \Pr[X_u | Y_u] 
 \le \frac{1}{4\phi} \sum_{u \in I} \Pr[X_u] \le \frac{(4/\epsilon + 4)\zeta}{4\phi}\ . \qedhere\]
\end{proof}

Let $\psi_v$ denote the set of colors in $v$'s palette before a given round, i.e., the set of colors that have not already been taken by its d2-neighbors. Let $\opsi$ be the set of colors in $v$'s palette that appear on nodes in $\hat{C}$. These colors must then appear only on non-$\hat{H}$-neighbors of $v$.

\begin{lemma}
Suppose $|\psi_v| \ge 2 |\opsi|$ and $|\psi_v| = \Omega(\log n)$.
Then, there is an absolute constant $c$ such that each color in $\psi_v \setminus \opsi$ has probability at least $1/(c|\psi_v|)$ of being validated and contested by $v$ in Step 4.
\label{L:phi-ophi}
\end{lemma}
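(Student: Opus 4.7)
The plan is to lower-bound the probability that the highest-priority Step-4 proposal to $v$ has color $c$; since $c \in \psi_v$, such a proposal passes $v$'s validation automatically. The strategy is to compare the expected number of proposals of color $c$ (lower bounded) with the expected total number of proposals (upper bounded), then turn the resulting ratio into a probability through a Jensen-type inequality that exploits the iid-uniform priorities and the independence of random choices across proposers.

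For the lower bound, I apply Lemma~\ref{L:decent} with $w := v$ to obtain a set $D$ of at least $\Delta^2/4$ decent d2-neighbors of $v$ in $C$. For each $u \in D$, decency gives $\Pr[u \text{ picks } v \text{ in Step 3}] = 1/(4\phi)$; the color choice in Step 4 satisfies $\Pr[u \text{ picks } c] = 1/\Delta^2$; and because $c \in \psi_v \setminus \opsi$ no node of $\hat{C}$ carries $c$, so $u$'s $\hat{H}$-neighbor check passes deterministically. Hence $\Pr[u \text{ proposes } c \text{ to } v] = 1/(4\phi\Delta^2)$, and summing over $u \in D$ yields $\E[\text{\# proposals of } c \text{ to } v] \geq 1/(16\phi)$.

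For the upper bound on $\E[|P|]$ (the total number of proposals $v$ receives), I split into good proposals (colors in $\psi_v$) and bad (colors outside $\psi_v$). A candidate proposer picks a color in $\psi_v$ with probability at most $|\psi_v|/\Delta^2$, yielding $\E[\text{\# good}] \leq |\psi_v|/(4\phi)$; Lemma~\ref{L:total-proposals-step3} bounds $\E[\text{\# bad}] \leq O(\zeta/\phi)$; and the slack-sparsity guarantee $|\psi_v| \geq \zeta/(4e^3)$ (conditioned on after the preamble to Sec.~\ref{ssec:correctness}) gives $\zeta = O(|\psi_v|)$, so $\E[|P|] = O(|\psi_v|/\phi)$. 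For each $u \in D$, the independence of $u$'s random choices from those of the other proposers and Jensen's inequality applied to $x \mapsto 1/(1+x)$ give $\Pr[u \text{ proposes } c \text{ with highest priority}] \geq \frac{1}{4\phi\Delta^2} \cdot \frac{1}{1 + \E[|P_{-u}|]} \geq \frac{1}{4\Delta^2(\phi + O(|\psi_v|))}$. Summing these mutually-exclusive events over $u \in D$ yields $\Pr[v \text{ tries } c] \geq 1/O(\phi + |\psi_v|)$.

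The main obstacle will be absorbing the $\phi$ term into $|\psi_v|$ to match the target bound $1/(c'|\psi_v|)$. By ACD property~2(d), all but at most $O(\epsilon\Delta^2)$ nodes of $C$ are d2-neighbors of $v$, and $v$'s live d2-neighbors number at most $|\psi_v|$, so $\phi \leq |\psi_v| + O(\epsilon\Delta^2)$; this already yields $\phi = O(|\psi_v|)$ whenever $|\psi_v| = \Omega(\epsilon\Delta^2)$. For smaller $|\psi_v|$ one exploits the hypothesis $|\psi_v| \geq 2|\opsi|$ together with the $10\epsilon$-similarity inside $\hat{C}$: most non-d2-neighbors of $v$ in $C$ must then already be colored (their colors, being outside $v$'s d2-neighborhood, would land in $\opsi$), trimming $\phi$ down to $O(|\psi_v|)$ and completing the proof with an absolute constant $c'$.
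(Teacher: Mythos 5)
Your skeleton matches the paper's: lower-bound the expected number of proposals of the target color $q$ via the decent-neighbor set (Lemma~\ref{L:decent}), upper-bound the expected total via the good/bad split with Lemma~\ref{L:total-proposals-step3}, and convert the ratio into a probability using the uniform random priorities. The difference is in the conversion step. The paper conditions on the event (via Markov) that the realized total is at most twice its expectation; since both the expected count $A_q \geq 1/(16\phi)$ and the expected total $A+B = O(|\psi_v|/\phi)$ carry a $1/\phi$ factor, $\phi$ cancels outright and the bound $1/O(|\psi_v|)$ follows with no assumption relating $\phi$ to $|\psi_v|$. You instead apply Jensen to $x\mapsto 1/(1+x)$ proposer by proposer, which is perfectly valid (using independence of $u$'s choices from the other proposers'), but leaves you with $1/O(\phi+|\psi_v|)$, and thus with the extra obligation $\phi=O(|\psi_v|)$.

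That closing obligation is where the proposal has a real gap. Your first case ($|\psi_v|=\Omega(\epsilon\Delta^2)$ via ACD~2(d)) is fine. But the proposed argument for the other case is backwards: you claim that $|\opsi|\le|\psi_v|/2$ being small forces most non-d2-neighbors of $v$ in $C$ to be \emph{colored}, on the grounds that their colors ``would land in $\opsi$.'' This is the wrong direction of inference. A live (uncolored) non-d2-neighbor contributes \emph{nothing} to $\opsi$, and several colored non-d2-neighbors may reuse a single color or use colors already absent from $v$'s palette; so $|\opsi|$ can be tiny while every non-d2-neighbor of $v$ in $C$ is uncolored and active, in which case this clause says nothing about $\phi$. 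The correct way to close the gap (used implicitly in the paper's proof of Lemma~\ref{L:ophi} via the inequality $|\psi_v|\ge\phi-3\zeta$) is to invoke the sharper Lemma~\ref{L:h-degree}(2): at most $3\zeta$ nodes of $\hat C$ are not d2-neighbors of $v$. Every live node of $C$ is then either a live d2-neighbor of $v$ (at most $|\psi_v|$ of those, since slack is nonnegative) or one of these $\le 3\zeta$ non-d2-neighbors, hence $\phi \le |\psi_v| + 3\zeta$. Combined with $\zeta\le 4e^3|\psi_v|$ from Prop.~\ref{P:sparsity}, this gives $\phi = O(|\psi_v|)$ unconditionally, without any case split and without invoking the hypothesis $|\psi_v|\ge 2|\opsi|$ at all. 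So your route can be repaired, but as written the final step does not go through; the paper's Markov-based cancellation sidesteps the issue entirely.
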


\begin{proof}
Let $\hat{\psi} = \psi_v \setminus \opsi$.
Any color from $\hat{\psi}$ that is guessed in Step 4 (by some $H$-neighbor $u$ of $v$) becomes a \emph{good} proposal to $v$ (i.e., one that would pass validation). 
Let $A$ ($B$) denote the expected number of good (bad) proposals to $v$, respectively.
Let $q$ be a color in $\hat{\psi}$ and let $A_q$ be the expected number of proposals of $q$ to $v$. 
We shall show that $A_q$ is large, for colors in $\hat{\psi}$, and thus $A$ is large in comparison to $B$. We then show that $A_q$ is also large relative to the total number of proposals, $A+B$. 

The probability that a decent $H$-neighbor $u$ chooses to help $v$ is $1/(4\phi)$, and the probability that it guesses $q$ is $1/\Delta^2$.
By Lemma \ref{L:decent}, $v$ has at least $\Delta^2/4$ decent $H$-neighbors.
Summing up, $A_q \ge \sum_u 1/(4\phi) \cdot 1/\Delta^2 \ge 1/(16\phi)$, and $A \ge \sum_{q \in \hat{\psi}} A_q \ge |\hat{\psi}_v|/(16\phi) \ge |\psi_v|/(32\phi)$.
By Lemma \ref{L:total-proposals-step3}, $B \le (1/\epsilon + 1)\zeta/\phi$ and by Prop.~\ref{P:sparsity}, $|\psi_v| \ge \zeta/(4e^3)$.
Thus, $B \le (128 e^3(1/\epsilon+1)) A$.
We can also bound $A$ from above, summing over the at most $\Delta^2$ $H$-neighbors and all the colors in $v$'s palette: 
\[ A \le \sum_{q' \in \psi_v} \sum_{u \in N_H(v)} \frac{1}{4\phi\Delta^2} = \frac{|\psi_v|}{4\phi} \le 4|\psi_v| A_q\ .  \]

By Markov's inequality, the probability that at most $2(A+B)$ proposals are generated for $v$ is at least $1/2$. The probability that a proposal of $q$ is chosen for validation is then at least
\[ \frac{A_q}{4(A+B)} \ge \frac{A/(4|\psi_v|)}{4(1+128e^3(1/\epsilon+1))A} = \frac{1}{16(1+128e^3(1/\epsilon+1))|\psi_v|}\ . \qedhere\]
\end{proof}

\begin{lemma}
Suppose $|\psi_v| < 2 |\opsi|$.
Then, there is an absolute constant $c$ such that each color in $\opsi$ has probability at least $1/(c|\psi_v|)$ of being validated and contested by $v$ in Step 7.
\label{L:ophi}
\end{lemma}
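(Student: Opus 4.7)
The plan is to parallel Lemma \ref{L:phi-ophi}, but using the query-and-forward mechanism of Steps 5--7 in place of the direct random guess in Step 4. The key observation is that every color $q \in \opsi$ is, by definition, already present on some node $w_q \in \hat{C}$; and since $q \in \psi_v$, such a representative $w_q$ cannot be a d2-neighbor of $v$, so $w_q$ legally forwards $c(w_q) = q$ to $v$ in Step 6 whenever its highest-priority incoming query is $v$-directed. I will lower bound the expected number $A_q$ of $q$-proposals reaching $v$ in Step 6, upper bound the total expected number of proposals $A+B$ arriving at $v$, and then combine these using the same Markov-plus-uniform-choice reasoning as in Lemma \ref{L:phi-ophi}.

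For $A_q$, fix such a $w_q$ and apply Lemma \ref{L:decent} with the role of $w$ played by $w_q$: there are at least $\Delta^2/4$ inactive $v$-decent nodes $u \in C$ that are $\hat{H}$-neighbors of both $v$ and $w_q$. Each such $u$ picks $v$ in Step 3 with probability exactly $1/(4\phi)$ and, conditional on this, routes its query to $w_q$ in Step 5 with probability $\Omega(1/\Delta^2)$ (at least one of $u$'s $\leq \Delta^2$ candidate 2-paths lands at the inactive $\hat{H}$-neighbor $w_q$). A short load calculation shows the expected total number of queries arriving at $w_q$ is $O(1)$, so by Markov with constant probability this load is at most some constant $c'$, and conditional on this the $v$-directed query wins priority with probability $\geq 1/c'$. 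Summing the contributions of the $\Delta^2/4$ decent $u$'s yields $A_q = \Omega(1/\phi)$.

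For $A+B$, the only $w$ that can produce a proposal to $v$ lie in $\hat{C} \setminus N_{G^2}(v)$, of which by Lemma \ref{L:h-degree}(2) there are at most $3\zeta$. For each such $w$, the expected number of $v$-directed queries arriving at it is
\[ \sum_{u \in N_H(v)} \Pr[u \text{ picks } v] \cdot \Pr[u \to w] \;\leq\; \frac{1}{4\phi} \cdot \frac{\sum_u |N(u) \cap N(w)|}{\Theta(\Delta^2)} \;=\; O(1/\phi), \]
using that $u$ routes by a random 2-path among $\Theta(\Delta^2)$ options and that $\sum_{u \in N_H(v)} |N(u) \cap N(w)|$ is a double-counting of 2-paths and thus at most $\Delta^2$. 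Summing over the $\leq 3\zeta$ such $w$ gives $A + B = O(\zeta/\phi)$, and since Proposition \ref{P:sparsity} yields $\zeta \leq 4e^3 |\psi_v|$, we obtain $A + B = O(|\psi_v|/\phi)$.

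Combining: by Markov the total number of proposals at $v$ is at most $2(A+B)$ with probability $\geq 1/2$, and the uniform random choice of Step 7 then selects any specific $q$-proposal with probability $\geq 1/(2(A+B))$. Therefore
\[ \Pr[q \text{ validated and contested by } v] \;\geq\; \frac{A_q}{4(A+B)} \;=\; \Omega\!\left(\frac{1/\phi}{|\psi_v|/\phi}\right) \;=\; \Omega(1/|\psi_v|). \]
The main obstacle is the interaction between the routing randomness and the priority randomness at $w_q$ in the lower bound on $A_q$; since the routing decisions of distinct decent $u$'s are independent, this can be decoupled by conditioning on the $O(1)$-expected load at $w_q$ via a second application of Markov's inequality.
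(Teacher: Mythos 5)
Your proposal is essentially sound and arrives at the right bound, but it takes a genuinely different route from the paper's. You lift the ``expected-count-of-good-proposals, bound the expected total, then Markov plus uniform choice'' template from Lemma~\ref{L:phi-ophi} wholesale. The paper instead works event-by-event: for each decent helper $u$ it defines the routing event $A_{uw}$ and the validation event $B_{uw}$, lower bounds $\Pr[B_{uw}\mid A_{uw}]$ by two separate Markov bounds (on the query load at $w$ and on the proposal load at $v$, each incremented by $1$ after conditioning on $A_{uw}$), and then sums $\Pr[B_{uw}]$ over $u$, using that the $B_{uw}$ are disjoint since $v$ validates only one proposal. The paper's structure handles the two-stage filtering (first at $w$, then at $v$) explicitly, which is why it is cleaner here than in Lemma~\ref{L:phi-ophi}: the ``$+1$'' from conditioning on $A_{uw}$ is tracked, and no implicit lower bound on the expected total load is needed. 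Your $A_q/(4(A+B))$ formula, by contrast, can exceed the trivial upper bound $\Pr[q\text{ validated}] \le A_q$ unless $A+B = \Omega(1)$; that does hold here because the precondition $|\psi_v| < 2|\opsi|\le 6\zeta$ together with $|\psi_v|\ge\phi-3\zeta$ forces $\phi = O(\zeta)$ and hence $A \ge |\opsi|\cdot \Omega(1/\phi) = \Omega(1)$, but you should say so. You also fold the Markov bound on the load at $w_q$ into the computation of $A_q$, which requires a nested conditioning across the (non-independent, though disjoint) culling outcomes at $w_q$; you flag this at the end but leave the decoupling largely to the reader, whereas the paper resolves it structurally via $A_{uw}$ and $B_{uw}$. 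The bounds $A_q = \Omega(1/\phi)$, $A+B = O(\zeta/\phi)$, and the final palette inequality are all as in the paper, so the approach does work once these two points are shored up; the paper's disjoint-events accounting just makes them unnecessary.
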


\begin{proof}
Only colors of nodes in $\overline{N_{\hat{C}}(v)}=\hat{C}\setminus N_{G^2}(v)$ (nodes in the extended components that are not d2-neighbors of $v$) have potential to become proposed to $v$ in Steps 5-6, i.e., the colors in $\overline{\psi_v}$.
Let $q$ be a color in $\overline{\psi_v}$ and let $w$ be a node in $\overline{\hat{C}}[v]$ of that color.

Let $u$ be a decent $H$-neighbor of $v$ that is also a $\hat{H}$-neighbor of $w$.
Let $A_{uw}$ be the event that $u$ chooses to help $v$ and that it picks $w$ (in Step 5). Since these picks are independent, $\Pr[A_{uw}] \ge 1/(4\phi\Delta^2)$.
Let $B_{uw}$ be the event that $u$'s proposal of $w$'s color becomes validated in Step 7. In addition to $A_{uw}$ holding, $B_{uw}$ additionally requires that the proposal survives the culling at $w$ (in Step 6) and at $v$ (in Step 7). Most of the rest of the proof is focused on bounding this probability.

The expected number of queries that $w$ receives is at most $5/4$, since $w$ has at most $\Delta^2$ 2-paths to nodes in $C$ and each of them has at least $(1-10\epsilon)\Delta^2 \ge 4\Delta^2/5$ $H$-neighbors (by Def.~\ref{D:acd}(2d)).
Then, by Markov's inequality, $w$ receives at most 5 queries, with probability at least $3/4$.

We next bound the expected load on $v$.
Let $w' \in \overline{N_{\hat{C}}(v)}=\hat{C}\setminus N_{G^2}(v)$. For each 2-path $P$ from $w'$ to a decent node $u_P$ in $C$, let $X_P$ be the event that the color of $w'$ is forwarded to $v$ through $P$. This is the product of two independent events: $Y_P$, that $u_P$ makes contact with $v$ in Step 3, and $Z_P$, that $u_P$ forwards a query along $P$ in Step 5. Since $u_P$ has at least $(1-10\epsilon)\Delta^2 > 5\Delta^2/6$ distinct d2-neighbors, the probability of $Z_P$ is at most $6/(5\Delta^2)$. Also, by the constraints in Step 2, the probability of $Y_P$ is $1/(4\phi)$. Thus, $\Pr[X_P] = \Pr[Y_P] \cdot \Pr[Z_P] \le 1/(4\phi) \cdot 6/(5\Delta^2) = 3/(10\phi \Delta^2)$.
Summing up over all the at most $\Delta^2$ paths from $w'$, the probability that a proposal from $w'$ arrives at $v$ is at most $3/(10\phi)$. Summing up over all the at most $3\zeta$ nodes $w'$ in $\overline{N_{\hat{C}}(v)}$, the expected number of proposals headed for $v$ in Step 7 is at most $9\zeta/(10\phi) \le \zeta/\phi$. 
By Markov's inequality, the probability that $v$ receives more than $4\zeta/\phi$ proposals is at most $1/4$. 

We can now combine the two bounds: with probability at least $1/2$, $w$ receives at most $5$ queries in a given iteration and
$v$ receives at most $4\zeta/\phi$ proposals.
Then, conditioned on $A_{uw}$, $w$ receives at most $6$ queries and $v$ receives at most $4\zeta/\phi+1$ proposals, with probability at least $1/2$. 
Hence, given $A_{uw}$, the query from $u$ becomes validated with probability 
\[ \Pr[B_{uw} | A_{uw}] \ge \frac{1}{2} \cdot \frac{1}{6}\cdot \frac{1}{4\zeta/\phi+1}\ . \]
Hence, the query of $u$ (from $w$ to $v$) becomes validated with probability
\[ \Pr[B_{uw}] = \Pr[A_{uw}] \cdot \Pr[B_{uw}| A_{uw}] \ge \frac{1}{4\phi\Delta^2} \cdot \frac{1}{12(4\zeta/\phi+1)} = \frac{1}{48(4\zeta + \phi)\Delta^2} \ . \]
By Lemma \ref{L:decent}, the set $D$ of 
decent $H$-neighbors of $v$ that are also $\hat{H}$-neighbors of $w$ is of size at least $\Delta^2/4$.
As the events $B_{uw}$ are disjoint, the probability that $v$ validates a proposal of the color $q$ (from $w$) is at least $1/(200(4\zeta + \phi))$.
Observe that the palette of $v$ has size at least the number of live d2-neighbors in $\hat{C}$ and at least the slack promised by the sparsity of $v$. Thus, $|\psi_v| \ge \max(\phi-3\zeta,\zeta/(4e^3)) \ge (\phi+4\zeta)/(20e^3)$.
Hence, the probability that $v$ validates an arbitrary color $q \in \opsi$ is at least $1/(4000e^3|\psi_v|)$.
\end{proof}


Lemma \ref{L:xxx} follows from Lemmas \ref{L:phi-ophi} and \ref{L:ophi}.

\section{Sub-Logarithmic Distance-2 Coloring}
\label{sec:sublog}

In this section, we extend the algorithm of Section \ref{sec:randAlg} and combine it with the \emph{graph shattering} technique~\cite{BEPS12,Beck91}, which has been used extensively in recent years to get sub-logarithmic-time distributed algorithms for a large number of graph problems (mostly in the \LOCAL model). By using this technique in our setting, we prove the following theorem.

\medskip

\noindent\textbf{Theorem \ref{thm:d2ColoringSublog} (restated)}. 
\emph{There is a randomized \CONGEST algorithm that d2-colors a graph with $\Delta^2+1$ colors in $O(\log \Delta) + \ND\cdot\poly\log\log n$ rounds, with high probability.}

\smallskip
Here $\ND$ is the sum of $d\cdot c \cdot x$ and the time to compute a distance-$2$ $x$-CONGEST-routable network decomposition with weak cluster diameter $d$ and $c$ cluster colors on subgraphs of size $\poly\log n$ with node identifiers from a space of size $\poly n$ (cf. \Cref{def:decomposition} for a formal definition).
\begin{remark}  
The current state of the art  for $\ND$ is $2^{O(\sqrt{\log\log n})}$ \cite{Portmann19}. However, the complexity for distance-$1$ network decompositions that can deal with a large identifier space was improved subsequent to the submission of this manuscript to $\poly\log\log n$ rounds \cite{GGR20}.  Before the publication of \cite{GGR20} the complexity in \Cref{thm:d2ColoringSublog} for distance-$2$ coloring matched the state of the art for distance-$1$ $(\Delta+1)$-coloring~\cite{ghaffari19}. As the achievements of \cite{GGR20} improve the complexity for distance-$1$ coloring from $O(\log \Delta)+2^{O(\sqrt{\log\log n})}$ to $O(\log \Delta)+\poly\log\log n$ there currently is a gap between the complexities of distance-$1$ and distance-$2$ coloring. If \cite{GGR20} (or an alternative approach) extends to distance-$2$ decompositions, and such an extension is very likely, it will match again. In the remaining part of the writeup we use the best known upper bound of $\ND=2^{O(\sqrt{\log\log n})}$.
\end{remark}

From a very high-level point of view, the rough idea of graph shattering applied to our problem is as follows. The algorithm of Section \ref{sec:randAlg} consists of $O(\log n)$ individual $O(1)$-round steps, where in each step, each live node gets colored with constant probability. Thus, very roughly, if we just run the algorithm for $O(\log \Delta)$ steps, each node remains uncolored with probability at most $1/\poly(\Delta)$. Further, if nodes succeeded sufficiently independently, after $O(\log\Delta)$ rounds, each node would only have $O(\log n)$ uncolored neighbors. By combining these two properties, one can hope that after $O(\log\Delta)$ rounds, all the remaining live nodes induce components (in $G^2$) of size at most $\polylog n$. By adapting techniques developed in \cite{BEPS12} to our $G^2$-coloring algorithm, we will show that this indeed (almost) is the case. We call this part of the algorithm, where we reduce the original problem to a problem on components of $\polylog n$ size, the \emph{preshattering phase} of our algorithm.

The remaining problem that we need to solve on the components of size $\polylog n$ is a list coloring problem. Because these problems for each component are on much smaller graphs, they can be solved efficiently by using the best known deterministic algorithm. For the specific setting, where we have small components, but each node still has an ID from the original large ID space, the best known deterministic \CONGEST algorithm (that can tolerate such a large ID space and works for $G^2$) can be obtained by combining a network decomposition algorithm of Ghaffari and Portmann~\cite{Portmann19} with a recent deterministic \CONGEST coloring algorithm of Bamberger, Kuhn, and Maus~\cite{BKM19}. It requires $2^{O(\sqrt{\log N})}=2^{O(\sqrt{\log\log n})}$ time, where $N=\polylog n$ is the maximum component size. We call this second phase of solving the remaining list coloring instances on the components the \emph{postshattering phase}.

While the general outline of the algorithm is relatively standard and largely follows the ideas of the distance-1 coloring algorithm for the \LOCAL model in \cite{BEPS12}, there are various challenges that we have to cope with in order to apply the idea in the \CONGEST model and to the d2-coloring problem.
In \cite{BEPS12,ghaffari19},
the algorithm for the preshattering phase is very simple: In each step of the algorithm, every live node tries a uniformly random color from its current list of available colors. As we have seen in Sec.~\ref{sec:randAlg}, we cannot run this algorithm in the d2-coloring setting as it is not possible for a live node to learn its list of available colors (i.e., learn the colors already chosen by its 2-neighbors). We would therefore like to show that the much more involved randomized algorithm of Sec.~\ref{sec:randAlg} also has the same shattering properties as the basic ``choose-a-random-available-color'' algorithm. Unfortunately, this is not obvious and we use a multi-stage algorithm to prove what we need. Greatly simplified, we do the following. We first show that $O(\log\Delta)$ rounds of an adaptation of the algorithm of Sec.~\ref{sec:randAlg} suffice to (essentially) reduce the maximum degree of the subgraph of $G^2$ induced by the live nodes to $O(\log n)$. At this point, it is possible for each live node to learn a sufficiently large list of available colors in $O(\log\Delta)$ rounds and we can now indeed run the basic preshattering algorithm of \cite{BEPS12} to reduce the problem to a problem on $\polylog n$-size components.

For the postshattering phase, while we only have components of $\poly\log n$ size, the input to the problem is still large because each node still has an ID of size $O(\log n)$ bits and because each node has a color list consisting of up to $O(\log n)$ colors from a range of size $O(\Delta^2)$. In order to have an efficient \CONGEST algorithm for the problem, we have to reduce both the ID space and the color space of the remaining components. It is sufficient to obtain new node IDs that are unique up to distance $\poly\log\log n$. We can obtain such IDs with $O(\log\log n)$ bits by first applying the network decomposition algorithm of  \cite{Portmann19} and then assigning unique labels in each cluster. For reducing the color space, we show that in each cluster of the network decomposition, we can efficiently (and deterministically) find a renaming of the colors such that for every node $v$, all colors in $v$'s list are mapped to distinct new colors and such that the colors are from a space of size $\poly\log n$. For each of the steps, the implementation in $G^2$ rather than in $G$ adds some additional complications. In the following, we give a detailed overview over all the steps of our algorithm.

Before going into the details of the algorithm, let us note that some regimes of $\Delta$ in relation to $n$ greatly simplify the problem. If $\log n=O(\log \Delta)$, using the $O(\log n)$-time algorithm of \Cref{sec:randAlg} already yields the claimed time complexity. The problem is also simpler when $\Delta\leq \log n \cdot \poly\log \log n$, as we can then essentially simulate the preshattering algorithm of \cite{BEPS12} for $G$ on $G^2$ and combine it with our postshattering algorithm from \Cref{sec:postshattering} (for details, see \Cref{app:smallLarge}). From now on, in \Cref{sec:preshattering}, we assume to be out of those simpler regimes, i.e., we assume throughout that $\Delta=2^{o(\log n)}\cap \tilde{\Omega}(\log n)$ holds.

\subsection{Preshattering: Algorithm Overview \& Proofs}
\label{sec:preshattering}

In the shattering framework, the high level idea of the preshattering phase is that having each node try a random color a logarithmic number of times is enough to ensure w.h.p. all that is left to do is to extend a partial coloring to small connected components of size $O(\polylog n)$ and maximum degree $O(\log n)$.
But while the shattering framework with informed color trials is well established, we apply it here in an unusual setting where the nodes do not know their palette. Instead, we argue that each live node becomes colored in each iteration with constant probability (bounded away from 0). More strongly, we show that half of the colors of its palette have good probability of becoming the node's color in each round, and this holds independent of what its neighbors do (as long as the unlikely event of too many of them are activated does not happen).  Then we show that these conditions are sufficient to leave us with two disjoint subgraphs of live nodes, both of logarithmic degree, which we handle sequentially (we first execute all steps after Step~\ref{step:degreeestimation} including the postshattering phase for the one subgraph and then for the other subgraph). After conducting additional $O(\log \Delta)$ informed color trials, the uncolored vertices induce polylogarithmic size components. The rest of the coloring can then be completed in the postshattering phase.  The idea of producing two subgraphs of small degree already appeared in \cite{BEPS12}, but it is significantly easier to show that they cover all uncolored vertices if one can perform informed color trials.

Several further technical complications arise that do not occur for ordinary graph coloring: determining which of the two subgraphs the live node should join; adding Steiner nodes to make the components connected in $G$ (not just in $G^2$); and learning enough of the palette before the post-shattering phase, even when the palette might be large. 
All of these steps, however, are implementable within the $O(\log \Delta)$ time bound, with techniques of modest novelty.  The key idea for their efficient implementation is to compress the communication so that multiple messages fit in a single \CONGEST message. Color values use $\log \Delta$ bits, but we also compress node identifiers into $O(\log \Delta)$ bits, either through hashing or renumbering within a component. This allows us to speed up communication-heavy parts: $O(\log n\cdot \log \Delta)$ bits per edge can be sent in $O(\log \Delta)$ rounds.

All of the above is for dense nodes, for which we have the structure of the almost-clique decomposition to guide us. For sparse nodes, we can use simple uniformed color guessing, first with individual colors and then with parallel color guesses, to finish them off early. 

We perform the following steps. They can all be implemented in $O(\log \Delta) + \poly \log \log n$ rounds, except the postshattering phase. The cost of $\ND$ in Theorem \ref{thm:d2ColoringSublog} has its origins in the postshattering phase only.

\subsubsection{Preshattering: Algorithm Overview}

\newcommand{\phasetitle}[1]{\medskip\noindent{\textbf{\textsf{#1}}}}

\newcounter{myCounter}
\phasetitle{Almost Clique Decomposition}
\begin{enumerate}
\item \label{step:acd}
Compute the ACD exactly in $O(\log\Delta)$ rounds by hashing IDs to $O(\log\Delta)$ bits.\\
\textbf{Guarantee:} Nodes know whether they are sparse/dense. Furthermore, each dense node knows an identifier of its almost clique.
\setcounter{myCounter}{\value{enumi}}
\end{enumerate}
\phasetitle{Color Sparse Nodes}
\begin{enumerate}
\setcounter{enumi}{\value{myCounter}}
\item \label{step:sparsegivesslack} Every node (dense or sparse) tries a uniformly random color for $O(\log \Delta)$ rounds.\\
\textbf{Guarantee:} All nodes have slack proportional to their sparsity.
\item \label{step:uninformedschneiderwattenhofer} Sparse nodes try $O(\log n)$ random colors simultaneously. In total, trying $O(\log n)$ colors requires sending/receiving $O(\log \Delta\cdot \log n)$ bits to immediate neighbors, which can be sent in $O(\log \Delta)$ rounds (by packing $O(\log n)$ bits in each message). \\
\textbf{Core idea:} Each color you try has a constant probability to not be tried by anyone else nor adopted by a neighbor.\\
\custombox{
\textbf{Guarantee:} All sparse nodes are colored, w.h.p.
}
\setcounter{myCounter}{\value{enumi}}
\end{enumerate}
\medskip
\noindent Only dense (intermediate degree) nodes execute the remaining steps.

\phasetitle{Degree Reduction of Uncolored Graph}
\begin{enumerate}
\setcounter{enumi}{\value{myCounter}}
\item \label{step:degreereduction} Perform $O(\log \Delta)$ iterations of Reduce-Phase. \\
\custombox{
\textbf{Guarantee:} Uncolored nodes either have low uncolored degree (at most $\tilde{\Delta}$), or are connected to at most $\tilde{\Delta}$ other high uncolored degree nodes, where $\tilde{\Delta}=O(\log n)$.
}
\item \label{step:degreeestimation} Estimate uncolored degree with $\Theta(\log n)$ precision. \\
\textbf{Guarantee:} Uncolored nodes know whether they have low uncolored degree or not.
\setcounter{myCounter}{\value{enumi}}
\end{enumerate}

\noindent Let $U^{lo}$ and $U^{hi}$ be the sets of low and high uncolored degree vertices. All the steps afterwards first take place on $U^{lo}$, then on $U^{hi}$.

\begin{enumerate}
\setcounter{enumi}{\value{myCounter}}
\item \label{step:filteredschneiderwattenhofer} Try $\Theta(\log n)$ color proposals that arrive through parallel Reduce-Phases. \\
\textbf{Core idea:} Compressing the messages communicated in a Reduce-Phase into $O(\log \Delta)$ bits. Argue a bound of $O(\log \Delta)$ on the congestion of each edge. \\
\textbf{Guarantee:} Nodes with slack $\Omega(\log^2 n)$ become colored, w.h.p.
All remaining live nodes then have sparsity $O(\log^2 n)$ (needed for Step~\ref{step:learnlist} and \ref{step:steiner}).
\setcounter{myCounter}{\value{enumi}}
\end{enumerate}
\phasetitle{Shattering Into Small Connected Uncolored Components}
\begin{enumerate}
\setcounter{enumi}{\value{myCounter}}
\item \label{step:learnlist} \textbf{Learn your list:} Expand on the method \alg{LearnPalette} of \cite{HKM20} to have each live node learn a list of at least $d(v)+1$ available colors from its palette. If the node has sparsity $O(\log n)$, we learn the exact list using \alg{LearnPalette} as is. Otherwise, we randomly try colors not used in the almost-clique to learn enough available colors. \\
\textbf{Core idea:} The bottleneck of the method is sending $O(\log n)$ colors over a single link, i.e., $O(\log n\log\Delta)$ bits. By compressing messages this can be done in $O(\log \Delta)$ rounds.
\item \label{step:shattering} \textbf{Shattering:} Perform $O(\log \tilde{\Delta})=O(\log\log n)$ informed color tries (OneShotColoring).\\
\custombox{
\textbf{Guarantee:} Uncolored vertices induce $\poly(\tilde{\Delta})\log n=\poly\log n$ sized components in $G^2$, and uncolored vertices know a palette that exceeds their degree.
}
\item \label{step:steiner} \textbf{Add Steiner Nodes:} Add all vertices that link live nodes in different almost cliques. 
Inside each almost clique, learn all live neighbors IDs through ID-renaming, pick one intermediate node as Steiner node per pair of uncolored nodes in the almost clique.\\
\textbf{Guarantee:} $G^2[U]$ connected components are $G$-connected and of size $N=\poly\log n$.
\end{enumerate}
\phasetitle{Postshattering:}
 Before the process, uncolored dense nodes $U$ form small connected components and each node has a palette of size that exceeds its degree. Further, with the Steiner nodes connected components of $G^2[U]$ are $G$-connected and have $N=\poly\log n$ size. 
 This is enough to apply \Cref{lem:postShattering} in \Cref{sec:postshattering} and list color the remaining components in $\ND = 2^{O(\sqrt{\log\log n})}$ rounds.

\subsubsection{Step \ref{step:acd}: Implementing the ACD} 
\label{ssec:hashing}
We start by computing the almost clique decomposition, which relies on computing two predicates,  \textsc{Buddies} or \textsc{Popular} (see Appendix~\ref{app:acd} for definitions and details). To implement \textsc{Buddies} or \textsc{Popular}, the nodes need to inform their d2-neighbors that they are in the set $S$ and forward their $S_v$ sets to their intermediate neighbors, where $\E[|S|] = c_{10}\log n$. Instead of using original node IDs in this process, we have each node $v$ pick a random string $h_v$ in the range $\eta\cdot \Delta^4$ for a sufficiently large constant $\eta>0$, to use instead. Since nodes forward hashed values, $h_w$, that fit in $O(\log \Delta)$ bits, forwarding the sets $S_v$ runs in $O(\log \Delta)$ rounds, since it just involves forwarding $O(\log n \cdot \log \Delta)$ bits.

The hashes may collide, which results in an undercount of the size of each $S_v$ set. However, by Lemma~\ref{lem:node-hashing}, this underestimate is at most additive $\frac{\ln n}{\ln\ln n} = o(\log n)$ , which disappears into the concentration bounds for $|S_{uv}|$, since $|S_{uv}| = \Omega(\log n)$, w.h.p.

\begin{lemma}[Hashing Node IDs]
For each node $v$, the number of different hash values in the d2-neighborhood is at least $|\{h_u : u \in N_{G^2}(v)\}| \geq |N_{G^2}(v)| - \frac{\ln n}{\ln\ln n}$, w.h.p.
\label{lem:node-hashing}
\end{lemma}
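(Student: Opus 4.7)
The plan is to treat the hash collisions via a standard balls-into-bins analysis. Fix a node $v$ and let $k := |N_{G^2}(v)| \le \Delta^2$. The hashes $\{h_u : u \in N_{G^2}(v)\}$ are $k$ independent uniform samples from a range of size $R := \eta \Delta^4$, and the number of distinct hashes equals $k$ minus the number of ``collision balls,'' where (in any fixed ordering of the neighbors) a ball is a collision ball if its hash coincides with that of some earlier ball. It therefore suffices to bound the number of collision balls by $\ln n/\ln\ln n$ w.h.p.

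Reveal the hashes of the d2-neighbors one at a time in an arbitrary order. Conditioned on the first $i-1$ hashes, the $i$-th hash collides with some earlier hash with probability at most $(i-1)/R \le k/R$. By a standard coupling this means the total number of collision balls is stochastically dominated by a variable $X \sim \mathrm{Bin}(k, k/R)$. This reduction is the only nontrivial step; it uses full independence of the hashes but avoids dealing with the complicated correlation structure of per-pair collision events directly.

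A routine union bound on the binomial tail then gives, for any integer $t$,
\[
\Pr[X \ge t] \;\le\; \binom{k}{t}\left(\frac{k}{R}\right)^{t} \;\le\; \left(\frac{e k^2}{t R}\right)^{t} \;\le\; \left(\frac{e}{\eta\, t}\right)^{t},
\]
where the last step uses $k \le \Delta^2$ and $R = \eta \Delta^4$. Plugging in $t = c \ln n / \ln\ln n$ for a sufficiently large constant $c$ (chosen inside the w.h.p.\ statement) yields $\ln(e/(\eta t)) = -(1+o(1))\ln\ln n$, and the bound becomes $\exp(-(1+o(1))\, c \ln n) = n^{-\Omega(c)}$. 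A final union bound over the $n$ choices of $v$ completes the proof.

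There is no real technical obstacle here: the argument is a clean balls-into-bins computation, and the whole strength comes from the fact that $R = \eta \Delta^4 \ge \eta\, k^2$ provides a large quadratic slack in the hash space, which makes collisions rare enough to concentrate very sharply. The only place that requires a bit of care is the coupling/stochastic-domination step, since the per-pair collision indicators $\mathbb{1}[h_u = h_w]$ are not mutually independent; sequential revelation sidesteps this issue cleanly.
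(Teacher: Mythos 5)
Your proof follows the same route as the paper: expose the hashes of the d2-neighbors sequentially, observe that conditioned on all earlier hashes the $i$-th hash collides with an earlier one with probability at most $(i-1)/R \le k/R \le 1/(\eta\Delta^2)$, stochastically dominate the collision count by a binomial with independent trials, and apply a standard concentration bound. The only cosmetic difference is that you estimate the binomial tail directly via $\binom{k}{t}(k/R)^t$ while the paper invokes its Chernoff inequality; both reduce to the same $(e/(\eta t))^t$ estimate, and both absorb the union bound over the $n$ vertices by tuning the constants.
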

\begin{proof}
   Assume that $N_{G^2}(v)=\{u_1,u_2,\dots,u_\ell\}$, where $\ell=|N_{G^2}(v)|$. We define an indicator random variable $X_i$ for each node $u_i$, $i\in\{1,\dots,\ell\}$ such that $X_i=1$ iff $h_{u_i}=h_{u_j}$ for some $j<i$. Note that independently of the hash values of $h_{u_j}$ for $j<i$, we have $\Pr[X_i=1]\leq 1/(\eta\Delta^2)$. The variables $X_i$ for are thus dominated by a set of independent indicator variables $Y_i$ such that $\Pr[Y_i=1]=1/(\eta\Delta^2)$. The number of hash value collisions can then be upper bounded by $Y:=Y_1+\cdots+Y_\ell$ and by applying a standard Chernoff bound, we have $\Pr\big(Y > \frac{\ln n}{\ln\ln n}\big)<n^{-\Theta(\eta)}$.
\end{proof}

\subsubsection{Step \ref{step:sparsegivesslack}--\ref{step:uninformedschneiderwattenhofer}: Coloring Sparse Nodes}

The next two steps of the algorithm color all sparse nodes w.h.p., leaving us with only the dense nodes to deal with later on. In addition, the first uninformed color try guarantees that all nodes of sparsity $\Omega(\log n)$ have slack at least linear in their sparsity, per Proposition~\ref{P:sparsity}, which will be useful later on when we focus on coloring dense nodes of sparsity $\Omega(\log n)$.

Consider the sparse nodes $V^\ast$ of the almost clique decomposition (Definition~\ref{D:acd}(1)). In step \ref{step:sparsegivesslack}, all nodes do $O(\log \Delta)$ uninformed color tries. We show this is enough to ensure w.h.p. that after this step, each sparse node has $O(\log n)$ sparse nodes in its d2-neighbourhood.

\begin{lemma}[Step \ref{step:sparsegivesslack}]
Let $S_2$ be the uncolored sparse vertices after Step \ref{step:sparsegivesslack}. $G^2[S_2]$ has maximum degree $O(\log n)$, w.h.p..
\end{lemma}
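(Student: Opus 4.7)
The plan is a three-step reduction. First, I establish that every sparse node acquires slack $\Omega(\Delta^2)$ after just one round of random color tries. Second, I use this slack to argue a constant per-round success probability for the remaining $O(\log\Delta)$ rounds, driving each sparse node's probability of being uncolored at the end of Step~\ref{step:sparsegivesslack} down to $\Delta^{-c_4}$ for an arbitrary constant $c_4$ of my choosing. Third, I lift this per-node bound to a high-probability bound on the degree in $G^2[S_2]$.

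For the first two steps: by Definition~\ref{D:acd}(1), every $v\in V^\ast$ has sparsity $\zeta_v\geq \epsilon^2\Delta^2/4$, and since the algorithm runs only in the regime $\Delta^2\geq c_2\log n$, Proposition~\ref{P:sparsity} with a union bound over all sparse nodes shows that after the very first random color try of Step~\ref{step:sparsegivesslack} every sparse node has slack at least $\zeta_v/(4e^3)=\Omega(\Delta^2)$, w.h.p. Conditioning on this event, a live sparse node $u$ is colored in any subsequent round with probability at least some absolute constant $p>0$: $u$'s available palette has $\Omega(\Delta^2)$ of the $\Delta^2+1$ colors, so $u$ picks an available color with constant probability, and independently the chance that no live d2-neighbor picks the same uniformly random color is at least $(1-1/(\Delta^2+1))^{\Delta^2}\geq 1/e$. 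Since slack is non-decreasing, this constant lower bound holds throughout all $T=\Theta(\log\Delta)$ rounds, so the probability that $u$ is still live at the end of Step~\ref{step:sparsegivesslack} is at most $(1-p)^{T-1}\leq \Delta^{-c_4}$, with $c_4$ as large as desired by tuning the constant inside $T$.

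The hard part is lifting this per-node bound to a high-probability degree bound, because the events ``$u$ uncolored'' for different d2-neighbors of a node $v$ are correlated through shared d2-neighbors. My plan is to fix a sparse $v$ and, for any subset $S\subseteq N_{G^2}(v)\cap V^\ast$ of size $k=\Theta(\log n)$, bound $\Pr[\text{all }u\in S\text{ uncolored after }T\text{ rounds}]$ round by round. In a given round, after revealing all non-$S$ coin flips, I split the event ``all of $S$ fails'' into two contributions: (i) each $u\in S$ picks a color in its own ``fixed'' bad set (palette-blocked colors plus fixed non-$S$ round flips, of total size $\leq (1-p)\Delta^2$), contributing $(1-p)^k$ by independence of the $|S|$ own flips; and (ii) some pair of $S$-flips coincides, which has probability $O(k^2/\Delta^2)$ while the remaining non-colliding nodes still independently need to fall in their own bad sets, contributing at most $(k^2/\Delta^2)(1-p)^{k-2}$. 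Combining yields a per-round conditional bound of the form $O((1-p)^k)$ once $k=O(\log n)\ll \Delta^2$; iterating conditionally over the $T$ rounds gives $\leq (O(1)\cdot (1-p)^k)^T$, and the final union bound over the $\binom{\Delta^2}{k}\leq \Delta^{2k}$ subsets and $\leq n$ choices of $v$ is absorbed once the constant inside $T=\Theta(\log\Delta)$ is chosen large enough (so that the $(1-p)^{kT}$ factor beats both $\Delta^{2k}$ and $n$).
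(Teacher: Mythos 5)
Your high-level plan is genuinely different from the paper's: the paper fixes a node $v$, claims each uncolored sparse $u\in N_{G^2}(v)$ is colored with constant probability in the next round, and then invokes the Chernoff bound of Proposition~\ref{P:chernoff} directly on $|S^1_v|$ to get a geometric decay of the uncolored sparse degree; you instead fix a $\Theta(\log n)$-size subset $S$ and bound the probability that \emph{all} of $S$ survives $T$ rounds, then union over all $\binom{\Delta^2}{k}$ subsets and all $v$. You correctly identify that the per-node success indicators of distinct d2-neighbors of $v$ are not independent (the paper glosses over this when it cites Proposition~\ref{P:chernoff}), and the union-over-subsets route is a standard and sound way to sidestep exactly this issue. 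So the overall plan is sensible.

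However, the per-round estimate is where the proof breaks. You split ``all of $S$ fails'' into (i) every $u\in S$ lands in its own (non-$S$-determined) bad set, and (ii) some pair of $S$-flips coincides while \emph{the remaining $k-2$} nodes land in their bad sets. This decomposition does not cover all failure events: consider an outcome where $u_1,u_2$ collide, $u_3,u_4$ collide, and $u_5,\dots,u_k$ are in their bad sets but $u_1,\dots,u_4$ are not. Then (i) fails, and for every choice of a pair in (ii) the ``remaining $k-2$'' contain a node that is outside its bad set, so (ii) also fails; yet all of $S$ stays uncolored. In general the complement of (i) requires handling an arbitrary set of colliding color classes, and the $\binom{k}{2}\cdot \frac{1}{\Delta^2}\cdot(1-p)^{k-2}$ union bound only accounts for a single colliding pair. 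This is a real gap, not just a constant-factor issue: the natural over-count-free bound one gets from ``(i) or ($\geq 1$ collision)'' is $(1-p')^k + O(k^2/\Delta^2)$, and the second term, without the $(1-p)^{k-2}$ damping you claim, is not small enough for your union bound to close in most of the intermediate regime. You can repair this by, e.g., revealing the $k$ coins one at a time, tracking (a) the number of coins that avoid the ``prefix-bad'' set $B_i\cup\{\text{colors of }u_1,\dots,u_{i-1}\}$ (these indicators are dominated by independent Bernoullis), and (b) the number of coins that match an earlier coin, showing that all of $S$ failing forces (a) to be at most (b), and bounding each by Chernoff. This yields a per-round bound of the form $c^k$ for a constant $c<1$, which suffices for the union bound.

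A secondary, minor slip: you state that the per-round bound is $O((1-p)^k)$ ``once $k=O(\log n)\ll\Delta^2$''. Even granting the (flawed) two-case decomposition, what you actually need is $k^2/\Delta^2=O(1)$, i.e.\ $k=O(\Delta)$, not $k\ll\Delta^2$. Fortunately, in the regime the paper restricts to (namely $\Delta=\tilde\Omega(\log n)$, so $\Delta=\omega(\log n)$), $k=\Theta(\log n)$ does satisfy $k=o(\Delta)$, so the statement is saved, but the condition as you wrote it is not the one doing the work.
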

\begin{proof}
We prove the stronger claim that at the end of Step \ref{step:sparsegivesslack}, each node $v\in V$ has at most $O(\log n)$ uncolored sparse neighbors in its d2-neighbourhood.

After one round of uninformed color tries, Proposition~\ref{P:sparsity} tells us that nodes of sparsity $\zeta \in \Omega(\log n)$ have slack $\Omega(\zeta)$ w.h.p.. As $\Delta = \tilde\Omega(\log n)$, $\Delta^2 \in \tilde\Omega(\log^2 n) \subseteq \Omega(\log n)$, so the sparse nodes $V_\ast$ identified in Step~\ref{step:acd}, of sparsity $\frac{\epsilon^2}{4}\Delta^2$ by definition, all have $\frac{\epsilon^2}{16e^3}\Delta^2$ slack w.h.p.. Let $p:= \frac{\epsilon^2}{16e^3}$ in the context of this proof.

Consider the d2-neighbourhood of an arbitrary vertex $v$ at an arbitrary iteration of this step. Let $S^0_v$ be the set of sparse nodes in $v$'s neighbourhood and $S^1_v$ the same set after an additional round of uninformed color tries. $\E[|S^1_v|] \leq p \E[|S^0_v|]$. Moreover by Chernoff (Proposition~\ref{P:chernoff}), $\Pr[|S^1_v| \geq \frac{1+p} 2 |S^0_v|] \leq \exp \left( -\frac{p(1-p)^2} {12} |S^0_v| \right)$.

Therefore in any d2-neighbourhood of a node $v$ that contains more than $\frac {24} {p(1-p)^2} \log n$ sparse nodes at some iteration of this step, it holds w.h.p. $\geq 1 - n^{-2}$ that a fraction at least $\frac p 2$ of these sparse nodes gets colored in the next round of uninformed color tries. Therefore by union bound over the $n$ d2-neighbourhoods and all the iterations of the algorithm, and since every d2-neighbourhood contains at most $\Delta^2$ in general, it holds w.h.p. that after $\log_{2/p} \Delta^2$ rounds of uninformed color tries every d2-neighbourhood contains less than $\frac {24} {p(1-p)^2} \log n$ sparse nodes.
\end{proof}

The fact that each remaining live sparse node does not have too many other live sparse nodes in their neighbourhood makes trying colors in batch a viable strategy, used in step \ref{step:uninformedschneiderwattenhofer}.

\begin{lemma}[Step \ref{step:uninformedschneiderwattenhofer}]
After Step \ref{step:uninformedschneiderwattenhofer} each sparse node is colored, w.h.p. and the step can be implemented in $O(\log \Delta)$ rounds.
\end{lemma}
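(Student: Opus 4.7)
My plan is to establish the lemma in two halves: a probabilistic correctness argument that each live sparse node becomes colored with high probability by a single batch of $L=\Theta(\log n)$ simultaneous color trials, and an implementation argument that this batch fits into $O(\log \Delta)$ CONGEST rounds.

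For correctness I would rely on two guarantees inherited from earlier steps. From the previous lemma (the conclusion of Step~\ref{step:sparsegivesslack}), each live sparse node $v$ has at most $O(\log n)$ other live sparse d2-neighbors; and from Step~\ref{step:sparsegivesslack} combined with Proposition~\ref{P:sparsity} applied to $\zeta\ge\epsilon^2\Delta^2/4$, $v$ has slack $\Omega(\Delta^2)$, hence its palette $\psi_v$ has size $\Omega(\Delta^2)$. Note that dense d2-neighbors of $v$ do not try colors in this step, so they do not create concurrent conflicts. Each live sparse $v$ independently samples $L=c\log n$ colors uniformly from $\{0,1,\ldots,\Delta^2\}$ and tries them all in parallel. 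Fix $v$ and condition on the samples of every other sparse node; this determines a set $B_v$ of colors being simultaneously tried by $v$'s live sparse d2-neighbors, and (conditioned on the previous lemma's high-probability event) $|B_v|\le O(\log n)\cdot L=O(\log^2 n)$. A single trial of $v$ succeeds exactly when it lands in $\psi_v\setminus B_v$, whose size is $|\psi_v|-|B_v|=\Omega(\Delta^2)$ in the standing regime $\Delta=\tilde\Omega(\log n)$ assumed at the start of Section~\ref{sec:sublog}; the per-trial success probability is therefore a universal constant bounded away from $0$. Given $B_v$ and $\psi_v$, the $L$ trials of $v$ are mutually independent, so
\[
  \Pr[\text{all $L$ trials of $v$ fail}\mid B_v,\psi_v]\le (1-\Omega(1))^L\le e^{-\Omega(\log n)}\le n^{-3},
\]
for a sufficiently large constant $c$. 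Taking expectation over $(B_v,\psi_v)$ and union-bounding over all (at most $n$) sparse nodes yields the claim.

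For the implementation, per immediate edge $\{v,u\}$ with $v$ live sparse, $v$ transmits its $L$ color attempts; each color uses $O(\log \Delta)$ bits, totalling $O(L\log \Delta)=O(\log n\cdot\log \Delta)$ bits per edge, which packs into $O(\log \Delta)$ CONGEST messages. An intermediate $u$ then performs the safety check \emph{locally}: it knows the current colors of its neighbors (maintained from earlier rounds) and it receives the trial lists of all its neighbors in the same round, so for every received pair $(v,c)$ it can decide whether $c$ is already used or simultaneously tried by some other neighbor---including by $u$ itself if $u$ is sparse and attempting, which covers the case where $v,u$ are immediate neighbors without a distinct intermediate. Node $u$ replies to each trying $v$ with one bit per candidate, i.e. $O(L)=O(\log n)$ bits, fitting in $O(1)$ rounds. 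Finally $v$ aggregates the per-edge answers (a color is globally safe iff every immediate neighbor reports it safe), picks any globally safe color, and adopts it. No cascading notification of the newly adopted colors to d2-neighbors is needed within this step: live dense nodes will catch up with these colors via the \alg{LearnPalette} routine invoked in Step~\ref{step:learnlist}. The total cost is $O(\log \Delta)$ CONGEST rounds.

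The main obstacle is turning the lemma's informal ``Core idea'' (constant success probability per trial) into a genuine high-probability bound despite the correlations between trials. These correlations arise only through other sparse nodes' random choices; conditioning on those choices makes $v$'s own $L$ trials independent, and the quantitative gain $|\psi_v\setminus B_v|\ge \Omega(\Delta^2)-O(\log^2 n)=\Omega(\Delta^2)$ (using $\Delta=\tilde\Omega(\log n)$) gives a universal per-trial success probability, which is exactly the ingredient needed for the Chernoff-style estimate above.
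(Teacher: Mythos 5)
Your proof is correct and follows essentially the same approach as the paper's: both rely on the $\Omega(\Delta^2)$ slack from Proposition~\ref{P:sparsity} (after Step~\ref{step:sparsegivesslack}), the $O(\log n)$ bound on the live sparse d2-degree from the previous lemma, and a batch of $\Theta(\log n)$ simultaneous uniform color trials (the paper names this the \alg{MultiTrial} procedure of Schneider--Wattenhofer) whose $O(\log n\cdot\log\Delta)$-bit communication is packed into $O(\log\Delta)$ rounds. The only noteworthy difference is in how the volume of conflicting trials is controlled: the paper caps the batch at $m=s/(2d)$ colors so that d2-neighbors jointly try at most $s/2$ colors (iterating $q/m=O(1)$ batches when $m<q$), whereas you fix $L=\Theta(\log n)$ and invoke the standing regime $\Delta=\tilde\Omega(\log n)$ to conclude $|B_v|=O(\log^2 n)=o(\Delta^2)$ directly; both are valid in this regime, and your explicit conditioning on $B_v$ makes the independence of $v$'s own $L$ trials marginally cleaner.
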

\begin{proof}

Let $S\subseteq V_\ast$ denote the set of live sparse nodes.
Let $p=\epsilon^2 /(16 e^3)$. Each node of $S$ has slack $s \ge p \Delta^2$ (Definition~\ref{D:acd}(1) and Proposition~\ref{P:sparsity}), and slack never goes down.
Let $d = O(\log n)$ be the maximum degree of the graph $G^2[S]$ induced by the live sparse nodes, $m = s/(2d)$, and $q = (6/p) \log n$. 
We run the \alg{MultiTrial} procedure of~\cite{SW10}.

Suppose first that $m \ge q$. Then each node of $S$ tries $q$ colors uniformly at random. This amounts to $q \log \Delta = O(\log \Delta \cdot \log n)$ bits, which can be transmitted in $O(\log \Delta)$ rounds.
The total number of colors chosen by the at most $d$ sparse d2-neighbors of a live sparse node is $d\cdot q \le d\cdot m = s/2$. Hence, each of the $q$ colors has probability at least $(s/2)/\Delta^2 = p/2$ of succeeding, and the probability of some color succeeding is at least $1 - (1-p/2)^q \ge 1 - e^{3\ln n} = 1-n^{-3}$.

When $m < q$, we repeat the procedure $q/m\in O((\log n)/\Delta^2) \subseteq O(1)$ times, trying $m$ colors each time, for the same performance bound.
\end{proof}

\subsubsection{Step \ref{step:degreereduction}--\ref{step:filteredschneiderwattenhofer}: Degree Reduction}
The next steps have for goal to reduce our coloring problem to coloring problems on graphs of small degree. Consider a constant $C$, and define $U^{lo}$ to be the set of uncolored nodes of uncolored degree at most $C \log n$, and $U^{hi}$ to be the other uncolored nodes. Step~\ref{step:degreereduction} ensures that while there might be nodes of high uncolored degree, those high uncolored degree node can not have many d2-neighbors also of high uncolored degree. Step~\ref{step:degreeestimation} is then there to have the nodes learn if they have high or low uncolored degree. The partition of the nodes into low and high degree nodes does not have to be perfect, we can tolerate an $O(\log n)$ gray zone of degrees in which nodes of those degrees may end up in either set. Steps~\ref{step:learnlist} to~\ref{step:steiner} and the postshattering phase are then run on $U^{lo}$, then $U^{hi}$.

\paragraph{Step~\ref{step:degreereduction}: Splitting the Uncolored Nodes into Two Subgraphs of Small Maximum Degree}

\begin{lemma}[Step~\ref{step:degreereduction}]
\label{lem:degreeReduction}
There is an universal constant $c$ such that after Step~\ref{step:degreereduction}, every live node either has uncolored d2-degree at most $c \log n$, or it has at most $c \log n$ uncolored nodes of d2-degree greater than $c \log n$ in its d2-neighborhood, w.h.p.
\label{lem:stepdegreereduction}
\end{lemma}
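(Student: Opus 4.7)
The plan is to combine the per-iteration coloring success probability from Section~\ref{ssec:correctness} with a standard shattering argument in the style of \cite{BEPS12,ghaffari19}. The starting observation is monotonicity: if a node $w$ ends in $U^{hi}$, meaning $w$ is live with more than $c\log n$ live d2-neighbors at the end of Step~\ref{step:degreereduction}, then because the set of live d2-neighbors of $w$ can only shrink, $w$ had more than $c\log n$ live d2-neighbors at the start of every iteration. Hence its palette satisfies $|\psi_w| > c\log n + 1$ throughout (since $|\psi_w|$ is at least the number of live d2-neighbors plus one), and choosing $c$ to exceed the $\Omega(\log n)$ threshold in Lemma~\ref{L:xxx}, the argument of Theorem~\ref{thm:d2ColoringRand} guarantees a per-iteration coloring success probability of at least a universal constant $c_\ast > 0$ for $w$, uniformly over the rest of the execution.

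Next, I would turn this into a joint survival bound. A single iteration of \alg{Reduce-Phase} is $G^2$-local: proposals in Step~4 come from $H$-neighbors, queries in Steps~5--6 go to $\hat{H}$-neighbors, and both are $G^2$-neighbors of the helper. Thus, conditional on the state at the start of an iteration, the events ``$w$ is colored in this iteration'' for $w$ ranging over a set $S$ pairwise at $G^2$-distance at least $3$ depend on disjoint random bits of the iteration and are mutually independent. Chaining across $T = K\log\Delta$ iterations---using the monotone ``currently has more than $c\log n$ live d2-neighbors'' indicator so that Lemma~\ref{L:xxx} applies cleanly without circular conditioning on the terminal event---yields
\[ \Pr[S \subseteq U^{hi}] \;\leq\; (1 - c_\ast)^{T|S|} \;\leq\; \Delta^{-\alpha K |S|} \]
for an absolute constant $\alpha > 0$ and every such well-separated $S$.

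A standard shattering argument (cf.~\cite{BEPS12,ghaffari19}) then concludes: for $K$ chosen large enough that $\Delta^{-\alpha K}$ beats the polynomial-in-$\Delta$ neighborhood factor coming from the $O(1)$ locality radius, w.h.p.\ every connected component of $U^{hi}$ in $G^3$ has size at most $O(\log n)$. Since any two d2-neighbors of a given node $v$ lie at $G$-distance at most $4$, and thus at $G^3$-distance at most $2$, all of $v$'s bad d2-neighbors belong to a single $G^3$-component, giving $|N_{G^2}(v) \cap U^{hi}| = O(\log n)$ w.h.p.; a union bound over the $n$ nodes then yields the lemma with a universal constant $c$.

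The main subtlety I anticipate is the chaining in the joint-survival bound: Lemma~\ref{L:xxx}'s per-iteration guarantee is only valid while the palette stays large, so one must fold the palette-size condition into the telescoping conditional expectation through the monotone ``currently high-degree'' indicator rather than condition directly on the terminal event $\{w \in U^{hi}\}$. The remaining components---verifying that the per-iteration locality radius is indeed $O(1)$ and matching the shattering lemma's constants---are routine.
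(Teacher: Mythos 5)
Your monotonicity observation is correct and matches the spirit of the paper's argument (the palette stays $\Omega(\log n)$ while a node remains in $U^{hi}$, so Lemma~\ref{L:xxx} keeps giving constant per-iteration success). But the way you convert the per-node probability into the degree bound does not work, and this is the heart of the lemma.

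You reduce to a standard shattering argument: $\Pr[S\subseteq U^{hi}]\le(1-c_*)^{T|S|}$ for sets $S$ whose nodes are pairwise $G^2$-far apart, then conclude that $G^3$-connected components of $U^{hi}$ have size $O(\log n)$, and finally observe that all bad d2-neighbors of a given $v$ sit in one such component. The problem is that shattering never gives $O(\log n)$-size components. The union bound over distance-$d$ separated sets only shows that every \emph{well-separated} subset of a surviving component has size $O(\log_\Delta n)$; translating this into a component-size bound necessarily picks up the factor $\Delta^{O(d)}$ needed to pass from a maximal well-separated subset to the whole component (this is exactly the $\Deltahat^{d-1}$ in Lemma~\ref{lem:shattersmallcomponent}). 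For $\Delta$ up to $2^{o(\log n)}$ this is far larger than $\log n$, so you cannot conclude $|N_{G^2}(v)\cap U^{hi}|=O(\log n)$. More fundamentally, the bad d2-neighbors of $v$ are all mutually within $G^2$-distance $2$, so they contain no nontrivial well-separated subset, and your joint-survival bound simply does not apply to them. You have effectively proved (a weak form of) the \emph{shattering} step (Step~\ref{step:shattering}/Lemma~\ref{lem:stepshattering}), not the \emph{degree reduction} step.

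The paper's argument sidesteps the independence issue entirely. It fixes a node $v\in U^{hi}$ with $\deg_{U^{hi}}(v)>c\log n$, enumerates its $U^{hi}$-neighbors $u_1,\dots,u_k$ in ID order, and shows that \emph{conditioned on the outcomes of $u_1,\dots,u_{i-1}$} (and on the w.h.p.\ event that nobody's palette is over-contested), $u_i$ still gets colored with constant probability. This is exactly what Lemma~\ref{L:xxx} supplies: a constant fraction of the palette is individually likely to succeed, regardless of what neighbors choose. A martingale/Chernoff-type concentration (Corollary~A.5 of \cite{BEPS12}) then shows a constant fraction of the $u_i$ get colored w.h.p., so $\deg_{U^{hi}}(v)$ decays geometrically, reaching $O(\log n)$ after $O(\log\Delta)$ iterations. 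To fix your proof, you should replace the ``independence over well-separated sets'' mechanism with this ``constant success conditional on the past'' mechanism, which is what the adaptation of Lemma~5.4 of \cite{BEPS12} captures.
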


To prove this, we adapt a result of~\cite{BEPS12} that shows that a few rounds of good random color tries on a graph guarantees that the remaining uncolored vertices can be partitioned into two sets, one of low uncolored degree, the other inducing a graph of low degree. This is detailed in the statement of Lemma~\ref{lem:shatterdegreereduction}.
\begin{lemma}[Adaptation of Lemma 5.4 of~\cite{BEPS12}]
    Let $p,f_1,f_2$ be constants such that $0 \leq f_1 < f_2 \leq 1$ and $p>0$. Let us have access to an algorithm $\calA$ to randomly try colors such that an iteration of $\calA$ is such that for every node of uncolored degree at least $\Omega(\log n)$:
    \begin{itemize}
        \item at most a constant fraction $f_1 |\psi_v|$ of its palette is tried by its neighbors in this iteration, w.h.p.
        \item the colors of $v$'s palette that $v$ tries with probability $\geq \frac p { |\psi_v|}$ represent a constant fraction $f_2 |\psi_v|$ of its palette.
    \end{itemize}
    Then $O(\log \Delta)$ iterations of $\calA$ guarantee w.h.p. that the remaining uncolored vertices $U$ can be partitioned into two sets $U^{lo}$ and $U^{hi}$ such that:
    \begin{itemize}
        \item the vertices in $U^{lo}$ have maximum uncolored degree $O(\log n)$,
        \item the subgraph induced by $U^{hi}$ has maximum degree $O(\log n)$.
    \end{itemize}
    \label{lem:shatterdegreereduction}
\end{lemma}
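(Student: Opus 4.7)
The plan is to extract a constant per-iteration success probability from the hypotheses on $\calA$ and then invoke the standard BEPS shattering argument. Fix a live node $v$ of palette size $|\psi_v| = \Omega(\log n)$. Let $T \subseteq \psi_v$ be the subset of size at least $f_2|\psi_v|$ on which $v$ places probability at least $p/|\psi_v|$ of being tried, and let $B$ denote the (random) set of colors tried by $v$'s d2-neighbors in the iteration, which has size at most $f_1|\psi_v|$ w.h.p. Since the events ``$v$ tries color $c$'' are pairwise disjoint for distinct $c$ and $v$'s own trial is independent of the neighbors' trials, $v$ is colored with probability at least $\sum_{c \in T \setminus B} p/|\psi_v| \ge (f_2 - f_1)\cdot p =: \alpha$, a positive absolute constant.

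Next, I would run $T_0 = C\log\Delta$ iterations of $\calA$ for a large constant $C$. As long as $v$ is still live with uncolored d2-degree at least $c\log n$ (which lower-bounds $|\psi_v|$ since the palette is at least as large as the number of live d2-neighbors), the per-iteration coloring probability is at least $\alpha$. Hence the probability $v$ remains in this ``high-uncolored-degree live'' state throughout all $T_0$ rounds is at most $(1-\alpha)^{T_0} \le \Delta^{-K}$ for any desired constant $K$ by taking $C$ large. Define $U^{hi}$ to be the live vertices whose uncolored d2-degree exceeds $c\log n$ at the end of the $T_0$ iterations, and $U^{lo}$ to be the remaining live vertices. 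The bound for $U^{lo}$ is then immediate from its definition.

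The main technical task, and also the principal obstacle, is bounding the maximum degree of $G^2[U^{hi}]$. This is the standard BEPS shattering union bound. If some $v$ had more than $c'\log n$ neighbors in $U^{hi}$, then there would be a set $S$ of $\Omega(\log n)$ vertices in the $G^2$-ball of $v$ that all survived in the ``high-uncolored-degree live'' state throughout the $T_0$ iterations. Since each iteration of $\calA$ has a constant locality radius $r$ in $G^2$ (the randomness relevant to $v$'s success depends only on vertices within $r$ hops of $v$ in $G^2$), I would pass to a $G^{2r}$-independent subset $I \subseteq S$ of size $|S|/\Delta^{O(r)}$, making the survival events mutually independent and bounding the joint probability by $\Delta^{-K|I|}$. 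Union-bounding over all $n \cdot \binom{\Delta^2}{c'\log n}$ such potential configurations and choosing $K$ sufficiently large relative to $r$ (via the choice of $C$) drives the overall failure probability below $n^{-\Theta(1)}$. The delicacy is pinning down that $\calA$ genuinely has a constant $G^2$-locality radius; once this is verified the shattering argument is the one of BEPS verbatim.
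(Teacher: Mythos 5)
Your opening move is correct and matches the paper: conditioning on the high-probability event that at most an $f_1$-fraction of the palette is tried by neighbors, each live high-uncolored-degree node is colored in one iteration of $\calA$ with probability at least $\alpha := p(f_2-f_1)$, since a $(f_2-f_1)$-fraction of the palette is both good (tried with probability $\ge p/|\psi_v|$) and uncontested. The definition of $U^{lo}$, $U^{hi}$ and the trivial bound on $U^{lo}$ also agree with the paper.

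The gap is in how you bound the degree of $G^2[U^{hi}]$. You apply the ``survive-rarely plus pass to an independent subset'' union bound, which is the BEPS Lemma~5.3 argument for bounding \emph{component sizes}, not the Lemma~5.4 argument for bounding \emph{degrees}, and it does not close here. To see why: fix $v$ and suppose a set $S$ of its $G^2$-neighbors, $|S|=c'\log n$, all survive in the high-degree live state. Passing to a $G^{2r}$-independent subset $I\subseteq S$ only gives $|I| \ge |S|/\Delta^{\Theta(r)}$, while the union bound must pay a factor of at least $\binom{\Delta^2}{c'\log n} = \Delta^{\Theta(\log n)}$. With per-node survival probability $\Delta^{-K}$ (where $K=\Theta(C)$ is a constant, $C$ being the constant in $T_0 = C\log\Delta$), the bound you get is $\Delta^{-K|I|}\cdot\Delta^{\Theta(\log n)} = \Delta^{\Theta(\log n)(1 - K/\Delta^{\Theta(r)})}$. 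Making this $n^{-\Theta(1)}$ would require $K \ge \Delta^{\Theta(r)}$, hence $C \ge \Delta^{\Theta(r)}$, i.e.\ superconstantly many rounds --- precisely what you cannot afford. The independence loss of $\Delta^{\Theta(r)}$ is benign when you only need an independent witness of size $\Theta(\log_\Delta n)$ inside a connected component of size $\poly(\Delta)\log_\Delta n$ (which is why the argument works for component size in Lemma~\ref{lem:shattersmallcomponent}), but it is fatal when you need a witness of size $\Theta(\log n)$ inside a single $G^2$-neighborhood of size $\Delta^2$.

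The paper avoids this entirely by proving a \emph{per-round geometric degree decay} via a martingale-style concentration bound (Corollary~A.5 of~\cite{BEPS12}). Ordering $v$'s $U^{hi}$-neighbors $u_1,\ldots,u_d$, one shows $\Pr[X_i=1\mid Y_{i-1}]\ge \alpha$ where $X_i$ indicates $u_i$ gets colored in this round and $Y_{i-1}$ is the history of $u_1,\dots,u_{i-1}$. Corollary~A.5 then gives that a constant $\gamma$-fraction of the $U^{hi}$-neighbors are colored w.h.p.\ whenever $\deg_{U^{hi}}(v)\ge c\log n$, with no independent-set extraction and hence no $\Delta^{\Theta(r)}$ loss. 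Iterating this for $O(\log\Delta)$ rounds shrinks $\deg_{U^{hi}}(v)$ from at most $\Delta^2$ down to $O(\log n)$. To repair your proof, replace the final-survival union bound by this per-round concentration argument; the hypotheses on $\calA$ are exactly what is needed to verify $\Pr[X_i=1\mid Y_{i-1}]\ge p(f_2-f_1)$.
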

\begin{proof}
    Let $c$ be a constant and $U^{hi} = \{v \in U, deg_U(v) > c \log n\}$ be the uncolored vertices of large uncolored degree.
    
    Consider a vertex $v\in U^{hi}$ of high degree within $U^{hi}$, i.e., $deg_{U^{hi}(v)} > c \log n$. We show that through the iterations of $\calA$ it will either exit $U^{hi}$ by getting colored, or stay in $U^{hi}$ but have a low degree in the subgraph induced by $U^{hi}$, w.h.p.
    
    We show that in a round of $\calA$, a node $v\in U^{hi}$ either gets colored or has its $U^{hi}$ degree decrease geometrically. Assume that all nodes in $U^{hi}$ are such that a fraction at most $f_1$ of their palette is tried by their neighbors, which we know by assumption holds w.h.p. Let us consider a node $v\in U^{hi}$, and its $U^{hi}$ neighbors $N_{U^{hi}}(v)$ in increasing ID-order $u_1, \ldots, u_{deg_{U^{hi}(v)}}$.
    
    Let $X_i$ be the event that $u_i$ gets colored, and $Y_i$ represent the information about the first $i$ neighbors of $v$. We want to show that $X = \sum_{1 \leq i \leq deg_{U^{hi}}(v)} X_i$, the number of neighbors of $v$ that get colored in a round, is larger than a constant fraction of $deg_{U^{hi}}(v)$ with high probability, meaning that the uncolored degree of $v$ decreases geometrically.
    
    We condition on the event that for every node of uncolored degree $\Omega(\log n)$, its neighbors are trying at most $f_1|\psi_v|$ colors, which holds with high probability by assumption.
    For any behavior of its neighbors, conditioned on this event, there are at least $(f_2 - f_1)|\psi_v|$ colors that $v$ is trying with probability at least $p/|\psi_v|$. Therefore $\Pr[X_i=1 | Y_{i-1}] \geq p (f_2 - f_1)$ for every $i$.
    
    This immediately implies that $\E[X]\geq p (f_2 - f_1) deg_{U^{hi}}(v)$. By a concentration argument (Corollary A.5 in~\cite{BEPS12}), we have that for $c$ large enough, there exists a constant $\gamma$ such that for all $U^{hi}$ nodes of high $U^{hi}$ $c \log n$, the probability that its $U^{hi}$ degree decreases by at least a constant fraction $\gamma$ satisfies $\Pr[X > \gamma \cdot deg_{U^{hi}}(v)] \geq 1 - n^3$. Therefore it holds for all nodes of $U^{hi}$ in a round of $\calA$ with probability $\geq 1-n^2$, and for $O(\log \Delta)$ iterations with high probability. Since this degree is at most $\Delta^2$ to start with, and this geometric decay works as long as the $U^{hi}$ degree is at least $c \log n$, $O(\log \Delta)$ rounds of $\calA$ suffice to ensure that all $U^{hi}$ nodes have less than $c \log n$ $U^{hi}$ neighbors with high probability.  
\end{proof}

\begin{proof}[Proof of \Cref{lem:degreeReduction}]Consider the \alg{Reduce-Phase} algorithm. In an iteration of this algorithm, at the very start each node randomly decides to try a color in this iteration with probability $1/8$, and to stay silent otherwise. This ensures that for nodes of uncolored degree $\Omega(\log n)$, a fraction at most $1/4$ of its palette is tried by its neighbors with high probability. Lemma~\ref{L:phi-ophi} and~\ref{L:ophi} ensure every node $v$ has a fraction at least $2/3$ of its palette $\psi_v$ that it tries with probability at least $p/|\psi_v|$, where $p$ is a universal constant.

We can therefore apply Lemma~\ref{lem:shatterdegreereduction} with the \alg{Reduce-Phase} algorithm and obtain the desired degree reduction.
\end{proof}

\paragraph{Step~\ref{step:degreeestimation}: Estimating Your Uncolored Degree}

\begin{lemma}[Step~\ref{step:degreeestimation}] After Step~\ref{step:degreeestimation}, the live nodes are partitioned into two sets $U^{lo}$ and $U^{hi}$ such that every live node of uncolored degree less than $c \log n$ has joined $U^{lo}$, every live node of uncolored degree greater than $2c \log n$ has joined $U^{hi}$, while the other live nodes may have joined either, w.h.p.
\end{lemma}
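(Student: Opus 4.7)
The plan is to have each live node produce an estimate of its uncolored d2-degree accurate enough to classify itself by comparing to the midpoint threshold $(3/2)c\log n$. I use random sampling together with the $O(\log\Delta)$-bit hashes $h_w$ introduced in Section~\ref{ssec:hashing}.

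Each live node $w$ independently joins a sample $S$ with some constant probability $p$ (say $p=1/2$), and in one round sends $h_w$ to its immediate neighbors. Each intermediate node $u$ then forwards the multiset of received hashes to each of its immediate neighbors, capping the forwarded list at the $T:=2pc\log n$ numerically smallest hashes and attaching a single-bit \emph{saturation} flag whenever $|N_G(u)\cap S|>T$. Finally each live node $v$ deduplicates the hashes received (ignoring its own), obtaining $\hat X_v$, and joins $U^{hi}$ if it receives any saturation flag or if $\hat X_v\geq(3/2)pc\log n$, and joins $U^{lo}$ otherwise.

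For correctness, a Chernoff bound on the Bernoulli sum $|N_{G^2}(v)\cap S|$ gives that w.h.p.\ it is below $(3/2)pc\log n$ whenever $d_U(v)<c\log n$ (so below the cap $T$), and above $(3/2)pc\log n$ whenever $d_U(v)>2c\log n$ (since then its expectation exceeds $2pc\log n$). Lemma~\ref{lem:node-hashing} shows hash collisions add at most $o(\log n)$ extra additive error, absorbed by the slack between $c\log n$ and $2c\log n$. A node $v$ with $d_U(v)<c\log n$ thus has $|N_{G^2}(v)\cap S|<T$ w.h.p., which forces $|N_G(u)\cap S|\leq T$ for every $u\in N_G(v)$ (no saturation flag reaches $v$) and yields $\hat X_v<(3/2)pc\log n$, correctly placing $v$ in $U^{lo}$. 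A node $v$ with $d_U(v)>2c\log n$ either sees some intermediate saturate, or has $|N_G(u)\cap S|\leq T$ everywhere and therefore collects all of its sampled d2-neighbor hashes, giving $\hat X_v=|N_{G^2}(v)\cap S|\geq(3/2)pc\log n$; either way $v$ lands in $U^{hi}$.

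The per-edge communication is at most $T\log\Delta=O(\log n\cdot\log\Delta)$ bits, fitting in $O(\log\Delta)$ \CONGEST rounds. The main obstacle is the tension between sampling rate and per-edge congestion: a constant $p$ is needed so that Chernoff concentration separates the two regimes at the threshold, yet without the cap each intermediate edge would carry $\Theta(\Delta\log\Delta)$ bits, exceeding the round budget. The cap $T$ resolves this by simultaneously bounding traffic and turning any overloaded intermediate into a certificate that $d_U(v)$ is above threshold, so classification accuracy is preserved.
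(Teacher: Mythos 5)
The proposal is correct and uses the same core strategy as the paper: compress node identities to $O(\log\Delta)$-bit hashes, broadcast them two hops, count distinct received hashes to estimate the uncolored d2-degree, and turn local overload at an intermediate node into a certificate of high degree so that per-edge traffic stays at $O(\log n)$ hashes. The implementation details differ, however, and not in your favor. The paper uses a deterministic pre-filter executed \emph{before} any hash is broadcast: each intermediate $u$ counts its uncolored \emph{direct} neighbors, and if that count already exceeds $c\log n$ it informs all its direct neighbors that they are high-degree and does not participate in the broadcast at all. After this filter, every remaining intermediate has at most $c\log n$ uncolored direct neighbors, so the broadcast fits in $O(\log\Delta)$ rounds with no cap and no saturation flag. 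Your cap-with-flag achieves the same bound but needs an extra argument that no flag reaches a low-degree node. More importantly, your sampling with $p=1/2$ is superfluous: a constant $p$ reduces congestion only by a constant factor, so it is your cap (not the sampling) that brings the traffic within budget, and the sampling introduces a Chernoff-concentration step that the paper avoids entirely---with $p=1$ the distinct-hash count is a deterministic estimate of $d_U(v)$ up to the $o(\log n)$ collision error, so the only probabilistic argument needed is the collision bound of Lemma~\ref{lem:node-hashing}. In short, your proof is a somewhat more elaborate variant of the paper's, and the paper's version is effectively the cleaner $p=1$ special case.
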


After Step~\ref{step:degreereduction}, it is guaranteed the uncolored nodes can be split into small and high degree nodes with the additional constraint that the high degree nodes are not connected to a lot of other high degree nodes. It does not, however, give this decomposition out of the box, as the nodes may not know their uncolored degree. Let $c > 10$ be a constant such that our application of Lemma~\ref{lem:stepdegreereduction} in the previous step guarantees that nodes of uncolored degree $\geq c\log n$ are connected to at most $c \log n$ other high degree nodes. We have the nodes estimate their degree such that all the live nodes of uncolored degree less than $c \log n$ join the set $U^{lo}$, while all the live nodes of uncolored degree at least $2 c \log n$ join $U^{hi}$. Nodes of uncolored degree between $c \log n$ and $2 c \log n$ may join either set.

\begin{lemma}
    Let $G$ be a partially colored graph where only dense nodes of the ACD are still uncolored. There is an $O(\log \Delta)$ round algorithm that allows each node to know whether it has uncolored degree $O(\log n)$ in $G^2$, w.h.p.
\label{lem:estimatedegree}
\end{lemma}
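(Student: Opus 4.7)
The plan is to equip every uncolored node $v$ with a constant-factor estimate of $d_v := |U \cap N_{G^2}(v)|$ via a distributed $k$-minimum-values (KMV) sketch aggregated through its immediate neighbors in $G$, and then have $v$ classify itself into $U^{lo}$ or $U^{hi}$ by comparing the estimate against a threshold lying strictly inside the admissible gray zone $(c\log n,\, 2c\log n)$. The KMV sketch fits the setting because the operation ``keep the $k$ smallest elements'' commutes with union, so it composes cleanly over the intermediate nodes in $G$, and its size is $O(k \log K)$ bits, which can be matched to the $O(\log \Delta)$-round CONGEST budget when the hash range $K$ is polynomial in $\Delta$.

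Concretely, each currently uncolored node $u$ draws an independent uniformly random hash $h_u \in [1, \Delta^{C}]$ for a sufficiently large constant $C$ and broadcasts $h_u$ to every immediate neighbor in one round ($O(\log\Delta)$ bits per message). Each node $w$ then locally stores the sorted list $T_w$ of the $k := \alpha \log n$ smallest hashes it received (including $h_w$ if $w$ itself is uncolored), where $\alpha$ is a constant chosen to be large enough for the concentration argument below. Each $w$ then sends $T_w$ to all its immediate neighbors; this is $|T_w| \cdot O(\log\Delta) = O(\log n \cdot \log\Delta)$ bits per edge, which fits in $O(\log\Delta)$ CONGEST rounds by packing several hashes per $O(\log n)$-bit message, exactly as in the message-compression argument used for \alg{LearnPalette} earlier. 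Finally, $v$ forms $T_v^{\star} := \bigcup_{w \in N(v) \cup \{v\}} T_w$ and keeps the $k$ smallest distinct values, which by construction coincide with the $k$ smallest distinct hashes carried by nodes in $U \cap N_{G^2}(v)$ whenever that many exist.

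For the classification, if strictly fewer than $k$ distinct hashes appear in $T_v^{\star}$, then $v$ must have $d_v \le k = O(\log n)$ and safely joins $U^{lo}$. Otherwise, letting $\widetilde H_k$ be the $k$-th smallest distinct value in $T_v^{\star}$, $v$ uses the standard KMV estimator $\hat d_v := (k-1)\Delta^{C}/\widetilde H_k$ and joins $U^{hi}$ or $U^{lo}$ according to whether $\hat d_v$ lies above or below a threshold chosen strictly inside $(c\log n,\, 2c\log n)$. A standard Chernoff-style concentration bound on the number of uniform samples from $[1,\Delta^{C}]$ falling into any fixed prefix (equivalently, a concentration bound on the $k$-th order statistic of $d_v$ such samples) yields $\hat d_v = (1 \pm 1/4)\, d_v$ with probability $1 - n^{-\Omega(\alpha)}$ whenever $d_v \ge k$, so after a union bound over all $n$ nodes the partition satisfies the stated guarantee w.h.p.

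The main technical obstacle is reconciling the $O(\log\Delta)$-round budget with the w.h.p.\ accuracy requirement: the sketch needs $k = \Theta(\log n)$ entries, yet each entry must fit in $O(\log\Delta)$ bits. This forces the hash range to be polynomial in $\Delta$ rather than in $n$, so one has to check that hash collisions among the at most $\Delta^{2}$ uncolored d2-neighbors of any fixed $v$ do not perturb the distinct count by more than a $(1 \pm o(1))$ factor. A variance computation, using that the collision indicators $\mathbf{1}[h_u = h_{u'}]$ are pairwise independent across disjoint pairs, combined with the constant-factor slack between $c\log n$ and $2c\log n$, absorbs the resulting additive error and preserves the $U^{lo}/U^{hi}$ separation; here we rely on the standing assumption $\Delta = \tilde{\Omega}(\log n)$ to ensure that $\log \Delta$ is large enough to carry the sketch.
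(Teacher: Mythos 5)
Your proof is correct but takes a genuinely different route from the paper's. The paper handles the congestion issue in two phases: first, any node $w$ with more than $c\log n$ uncolored \emph{direct} neighbors flags its neighbors as already belonging to $U^{hi}$ (since they necessarily have $\Omega(\log n)$ uncolored d2-neighbors through $w$); after this filtering, every participating intermediate node has only $O(\log n)$ uncolored direct neighbors, so a full two-hop broadcast of $O(\log\Delta)$-bit random IDs costs only $O(\log n\cdot \log\Delta)$ bits per edge, and each live node simply counts the number of distinct IDs it receives and compares against a threshold. You instead skip the filtering step entirely and bound congestion structurally via a $k$-minimum-values sketch: each intermediate node forwards only its $k=\Theta(\log n)$ smallest received hashes regardless of how many it saw, and the live node applies the standard KMV cardinality estimator. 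Both arguments crucially exploit the same two facts — that the desired $U^{lo}/U^{hi}$ split tolerates a constant-factor gray zone, and that $O(\log\Delta)$-bit hashes drawn from a $\poly(\Delta)$ range yield only $o(\log n)$ collisions per d2-neighborhood w.h.p.\ (the paper's hashing lemma). The paper's version is the more elementary one (just count distinct received values, no order-statistic estimator), while yours is arguably cleaner conceptually since the congestion bound falls out of the sketch's fixed size rather than from a separate filtering argument. One small point worth being careful about in your write-up: if $T_w$ is taken as the $k$ smallest hash \emph{values with multiplicity}, collisions could in principle displace genuinely distinct small hashes; your final paragraph's collision analysis absorbs this, but it is slightly cleaner to define $T_w$ as the $k$ smallest \emph{distinct} values from the outset, which is what your correctness claim ``coincide with the $k$ smallest distinct hashes'' implicitly assumes.
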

\begin{proof}
 First, each node counts how many uncolored direct neighbours it has. If that number is greater than $c \log n$, it informs all its direct neighbours that they have a high uncolored d2-degree. We now only need to deal with live nodes in the direct neighbourhood of nodes that have less than $c \log n$ uncolored direct neighbours. 
 Each uncolored vertex picks a random number in $[\Delta^4]$ and broadcasts these for two hops. This broadcast takes at most $O(\log \Delta)$ rounds as each vertex has at most $O(\log n)$ uncolored neighbors, and so every vertex has to forward at most $O(\log \Delta\cdot \log n)$ bits in total, which can be done in $O(\log \Delta)$ rounds. 
 An uncolored vertex joins the set $U^{hi}$ of high degree vertices if it receives more than $c \cdot \log n$
 distinct values, otherwise it joins the set $U^{lo}$. 
 A vertex which receives $c\log n$ distinct values clearly has more than $c\log n$ uncolored $d2$-neighbors. We next prove that a vertex that receives less than $c\log n$ distinct values has at most $2c \log n$ uncolored neighbors, or equivalently, that any vertex with at least $l=2c\log n$ uncolored neighbors receives at least $c\log n$ values. 
 Fix a vertex $u$ and let $v_1,\ldots, v_l$ its uncolored neighbors in an arbitrary order. We expose the randomness one after the other and let $X_i$ be the random variable that equals $1$ if $v_i$ hashes to a value that appears in the hashes of the vertices $v_1,\ldots,v_{i-1}$, and $X_i$ equals zero otherwise. We have that 
 \begin{align*}
 \Pr[X_i=1 \mid v_1,\ldots,v_{i-1} \text{ are hashed}]\leq (i-1)/(\Delta^4)\leq (2c\log n)/\Delta^4=:p. 
 \end{align*}
 Thus, we obtain that the number of vertices in a collision is in expectation upper bounded by $\E[\sum_{i=1}^lX_i] \leq p\cdot l$, and with a Chernoff bound the number of collisions is less than $c\log n$, w.h.p.. Thus, w.h.p. $u$ receives at least $2c\log n-\#collision=c\log n$ distinct values.
\end{proof}

\paragraph{Step \ref{step:filteredschneiderwattenhofer}: Parallelizing Reduce-Phase} 

\begin{lemma}[Step \ref{step:filteredschneiderwattenhofer}]
After Step~\ref{step:filteredschneiderwattenhofer}, all the live nodes of $U$ of slack $\Omega(\log^2 n)$ have been colored, thus leaving only live nodes of slack $O(\log^2 n)$, and a fortiori sparsity $O(\log^2 n)$.
\end{lemma}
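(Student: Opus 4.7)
The plan is to execute $\Theta(\log n)$ iterations of \alg{Reduce-Phase} in parallel with independent randomness, and argue both that this fits in $O(\log \Delta)$ CONGEST rounds and that every live node with slack $\Omega(\log^2 n)$ gets colored w.h.p. After Step~\ref{step:degreereduction} and the split into $U^{lo}$/$U^{hi}$, every live node $v$ has $d = O(\log n)$ relevant live d2-neighbors (all of them if $v \in U^{lo}$, only the high-degree ones if $v \in U^{hi}$); moreover, slack $s = \Omega(\log^2 n)$ implies palette size $|\psi_v| \geq s = \Omega(\log^2 n)$.

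For the CONGEST implementation, I would note that a single iteration of \alg{Reduce-Phase} uses only $O(1)$ messages per $G$-edge, each of $O(\log \Delta)$ bits once identifiers are hashed as in Step~\ref{step:acd}. Running $\Theta(\log n)$ iterations in parallel yields an expected $O(\log n)$ messages per edge; a Chernoff bound on the independent random choices of the iterations (activation bits, chosen colors, selected paths) concentrates the congestion at $O(\log n)$ w.h.p., for a total of $O(\log n \cdot \log \Delta)$ bits per edge, which can be delivered in $O(\log \Delta)$ CONGEST rounds.

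For correctness, I would invoke Lemma~\ref{L:xxx}: in each single iteration, there is a subset $S_v \subseteq \psi_v$ of at least $|\psi_v|/2$ colors such that each color in $S_v$ is validated by $v$ with probability at least $1/(c'|\psi_v|)$, so $v$ is colored in that iteration with constant probability. The crucial role of the slack assumption is to keep this per-iteration bound valid in the parallel composition: the number of colors tried by $v$'s $d = O(\log n)$ live d2-neighbors across all $\Theta(\log n)$ parallel iterations is $O(\log^2 n) \leq |\psi_v|/2$ (choosing the implicit constant in $s = \Omega(\log^2 n)$ appropriately). Hence a constant fraction of $S_v$ remains uncontested across the whole batch, and a Chernoff bound over the $\Theta(\log n)$ independent random attempts gives that $v$ is colored with probability $1 - n^{-\Omega(1)}$; a union bound over the at most $n$ live nodes then finishes.

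The main obstacle I anticipate is justifying the per-iteration success probability in the parallel execution, because---unlike in sequential \alg{Reduce-Phase}---the palette is not updated between iterations and the many concurrent tries by d2-neighbors could a priori block all good colors. The slack assumption $\Omega(\log^2 n)$ is exactly what resolves this: it ensures that the aggregate concurrent interference of at most $d\cdot \Theta(\log n) = O(\log^2 n)$ colors cannot eat more than half of the palette, so the unused half provides Lemma~\ref{L:xxx}'s constant per-iteration success probability intact.
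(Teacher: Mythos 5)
Your proof takes essentially the same route as the paper's: pack $\Theta(\log n)$ parallel iterations of \alg{Reduce-Phase} into $O(\log\Delta)$ CONGEST rounds via an $O(\log\Delta)$-bit ID compression, use Lemma~\ref{L:xxx} for the per-iteration validation probability, and use the slack hypothesis so that the aggregate $O(\log^2 n)$ colors contested by d2-neighbors across the batch cannot exhaust the good set $S_v$. The one technical deviation is the ID compression: you propose to reuse the hashed IDs of Step~\ref{step:acd}, whereas the paper instead assigns exact local IDs in $[2\Delta^2]$ via a constant-round BFS inside each extended component (using that extended components have $G$-diameter at most~$4$). The paper's deterministic renaming is collision-free; your hashing route also works but needs the additional observation that a hash collision can only make the d2-neighbor check inside \alg{Reduce-Phase} err conservatively (a node refuses to forward a potentially good proposal, never forwards a genuinely bad one), and by Lemma~\ref{lem:node-hashing} this loses only $O(\log n/\log\log n)$ proposals, negligible against the $\Omega(\Delta^2)$ decent nodes guaranteed by Lemma~\ref{L:decent}. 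Aside from this, your correctness argument---MultiTrial-style parallel trials, per-iteration constant success probability given slack $\Omega(\log^2 n)$, and a Chernoff/union-bound conclusion---matches the paper's proof.
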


This step takes $O(\log \Delta)$ rounds. The two key ideas are to try multiple colors at the same time, as in Step~\ref{step:uninformedschneiderwattenhofer}, which uses the \alg{MultiTrial} procedure of Schneider and Wattenhofer~\cite{SW10}, and to use a smaller ID space when using \alg{Reduce-Phase}. What allows us to use an ID space of size $O(\poly \Delta)$ is that in \alg{Reduce-Phase} nodes only communicate with other nodes of their extended connected component, and the small radius of those extended connected components.
\begin{proof}
The almost clique decomposition guarantees that each node knows the ID of the extended component it belongs to. Having each node inform its direct neighbors of its component ID only takes a single round, after which every node knows the component ID of all of its neighbours.

Then, in each extended component of the almost-clique decomposition, we give each node an ID in $[2\Delta^2]$ that uniquely identifies them inside their extended component. Let us call those numbers \emph{local IDs}. This is done with a BFS traversal of the extended component, which we can do in $O(1)$ rounds since any two nodes in an almost-clique are within four hops.

Consider the algorithm \alg{Reduce-Phase}. In it, queries for help and answers are only sent from nodes inside an extended component to other nodes of the extended component. In addition, the messages either transmit colors or IDs of nodes of the extended connected component, and when asking whether two nodes are connected, it only does so for two nodes of the same extended component. With this observation, we can replace the IDs with local IDs in \alg{Reduce-Phase}. With this trick, an iteration of \alg{Reduce-Phase} only needs to send $O(\log \Delta)$ bits. Thus, we can perform $\Omega(\log_\Delta n)$ phases in parallel.

As in Step~\ref{step:uninformedschneiderwattenhofer}, there is a universal constant $c$ such that each uncolored node in this step has at most $c \log n$ uncolored d2-neighbors. So if each uncolored node tries at most $d \log n$ colors for some constant $d$, every node sees at most $2cd \log^2 n$ colors being tried by its neighbors. As a node $v$ doing an iteration of \alg{Reduce-Phase} is trying with decent probability $p/|\psi_v|$ at least a third of its palette $\psi_v$ by Lemma~\ref{L:phi-ophi} and~\ref{L:ophi}, where $p$ is an absolute constant, a node of slack at least $6cd\log^2 n$ succeeds with constant probability with each color it tries. By having each node try $\Theta(\log n)$ colors, using $\Theta(\log \Delta)$ iterations of \alg{Reduce-Phase} executed in parallel in $\Theta(\log \Delta)$ \CONGEST rounds, every node of slack more than $6cd\log^2 n$ gets colors w.h.p. and the remaining live nodes all have slack and sparsity $O(\log^2 n)$.
\end{proof}

\subsubsection{Step~\ref{step:learnlist}--\ref{step:steiner}: Shattering into Small Connected Components}

The last three steps of the preshattering phase end our preparation for the postshattering phase, described later in Section~\ref{sec:postshattering}. One of the main goals of the preshattering phase is to reach a situation where the uncolored part of the considered graph consists of small connected components, which can thus be treated independently (since they are disconnected from each other) and fast (since they are much smaller than the original graph). Breaking the problem into small connected components is done in step \ref{step:shattering}. As this step uses informed color tries, a bit of preparation is needed beforehand: step~\ref{step:learnlist} serves to have each remaining dense node learn more colors from its palette than its uncolored degree. Finally, step \ref{step:steiner} serves to recruit nodes as relays to help with communication in the postshattering phase. The intuition behind it is that the vertices of a connected component of $G^2$ are not necessarily connected in $G$, so we consider additional vertices to connect and guarantee a good bandwidth between each pair of d2-neighbors.

\paragraph{Step~\ref{step:learnlist}: Learn Your List}

The method \alg{LearnPalette} of \cite{HKM20} allows each live node to learn its available palette, assuming it has palette size $O(\log n)$. We extend it to apply to the case when it has degree $O(\log n)$, but the palette size could be larger.
We will run it (and the procedures that follow) separately on $U^{lo}$ and $U^{hi}$.

We first apply the first phase of \alg{LearnPalette} of \cite{HKM20} essentially unchanged. Each live node $v$ randomly selects an $H$-neighbor \emph{helper} $z_v^i$ for each of the $\Delta$ blocks of $\Delta$ colors from the color space $[\Delta^2]$. Then, nearly all d2-neighbors of $v$ forward their color to the appropriate helper. In our context, it means that all nodes within the extended component register its color. 
This phase actually consists of only four steps, where each link may need to forward $\Theta(\log n \cdot \log \Delta)$ bits: $\Theta(\log n)$ different color values. Now, the gaps in the blocks of the helpers are of small size: its union consists of two parts: $\psi_v$, the true palette for $v$, and $\nu_v$, the colors of nodes outside the extended component. By assumption, $|\psi_v|=O(\log n)$, while by sparsity and Lemma \ref{L:h-degree}(1), $|\nu_v| \le (2\zeta+1)/\epsilon = O(\log n)$. 

\begin{lemma}
The first phase of \alg{LearnPalette} of \cite{HKM20} runs in $O(\log \Delta) + poly(\log\log n)$ rounds. It achieves the following: for each live node $v$ and each $i=1,2\ldots, \Delta$, there is a \emph{helper} node $z^i_v$ that stores a set $T_v^i$ such that:
\begin{itemize}
    \item The paths $P_{i,v} = [z_v^i,v]$ are edge-disjoint.
    \item The set $T_v = \cup_i T_v^i$ contains the palette $\psi(v)$ ($T \supseteq \psi_v$) and $|T_v \setminus \psi_v| = O(\zeta_v)$.
\end{itemize}
\end{lemma}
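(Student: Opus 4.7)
My plan is to adapt the first phase of \alg{LearnPalette} from \cite{HKM20} essentially as a black box, verifying that its guarantees continue to hold in our setting. The structural idea is to partition the color space $[\Delta^2]$ into $\Delta$ consecutive blocks $B_1, \ldots, B_\Delta$ of $\Delta$ colors each, and to delegate the collection of used colors within each block to a different helper node. For each block index $i$, each live node $v$ independently samples a uniformly random $\hat{H}$-neighbor as its helper $z_v^i$. Using the ACD (which guarantees that every $\hat{H}$-neighbor of $v$ shares at least $(1-10\epsilon)\Delta^2$ common immediate neighbors with $v$ by Def.~\ref{D:acd}(2b), (2d)), the 2-paths $P_{i,v}$ can be routed through distinct intermediate nodes, yielding edge-disjointness. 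A standard Chernoff argument shows that no single $\hat{H}$-neighbor is selected as helper by too many (live node, block) pairs, giving balanced congestion.

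For correctness, each colored d2-neighbor $u$ of $v$ that lies inside $\hat{C}$ identifies the block $B_i$ containing its color and forwards its color along a 2-path to $z_v^i$. After collecting the set of colors received, $z_v^i$ defines $T_v^i$ as the set of colors in $B_i$ that were not reported. Thus $T_v = \bigcup_i T_v^i$ is exactly the set of colors unused by any d2-neighbor of $v$ lying inside $\hat{C}$. Since $\psi_v$ is the set of colors unused by any d2-neighbor whatsoever, we have $T_v \supseteq \psi_v$. The surplus $T_v \setminus \psi_v$ consists precisely of the colors used by d2-neighbors of $v$ that lie outside $\hat{C}$. By Lemma~\ref{L:h-degree}(1), $v$ has at most $(2\zeta_v+1)/\epsilon = O(\zeta_v)$ such neighbors, and so $|T_v \setminus \psi_v| = O(\zeta_v)$ as claimed.

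For the complexity, the main bandwidth bottleneck is that a link may need to forward the color of each d2-neighbor of $v$ that routes through it to the appropriate helper. Since $v$ has $O(\log n)$ uncolored d2-neighbors (as we are processing $U^{lo}$ or $U^{hi}$) and, more importantly, each helper receives at most $\Delta$ colors (bounded by the block size), the aggregate load per physical edge is at most $\Theta(\log n)$ color values thanks to the balanced choice of helpers and the edge-disjointness of the paths. Each color is encoded in $O(\log \Delta)$ bits, so each link transmits $O(\log n \cdot \log \Delta)$ bits total in the course of the phase, which fits in $O(\log \Delta)$ \CONGEST rounds by packing messages. The additive $\poly(\log\log n)$ term absorbs the cost of auxiliary coordination within extended components (e.g.\ the BFS-based local ID assignment of Step~\ref{step:filteredschneiderwattenhofer}, lookup and aggregation subroutines used to implement the forwarding) that only operate on graphs of poly-logarithmic size.

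The main obstacle I foresee is the congestion analysis: naively, each edge might be shared by many (live node, helper) pairs whose 2-paths pass through it, and one must simultaneously argue (i) that the random helper selection balances load across $\hat{H}$-neighbors with high probability, and (ii) that the 2-paths $P_{i,v}$ for a fixed $v$ can be made edge-disjoint by carefully picking intermediate nodes, so that the $\Delta$ helpers really give $\Delta$ parallel pipes of bandwidth. Both points should be routine applications of Chernoff bounds and a Hall-type matching within $N_G(v)$, but they have to be verified carefully in the distance-2 \CONGEST model where only immediate neighbors are directly reachable.
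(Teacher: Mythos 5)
Your overall plan matches what the paper does: partition the color space into $\Delta$ blocks of $\Delta$ colors, delegate each block of each live node $v$ to a random in-component helper, have $\hat{C}$-internal colored nodes register their colors with the appropriate helper, and deduce that $T_v \setminus \psi_v$ is exactly the set of colors held by d2-neighbors of $v$ outside $\hat{C}$, whose size is $O(\zeta_v)$ by Lemma~\ref{L:h-degree}(1). The paper itself only sketches this in a paragraph before stating the lemma (delegating the details to \cite{HKM20}), so your level of rigor is comparable and your accounting of $T_v \setminus \psi_v$ is exactly the paper's argument.

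However, your justification for edge-disjointness of the paths $P_{i,v}$ contains a genuine error. You assert that Def.~\ref{D:acd}(2b) and (2d) guarantee that every $\hat{H}$-neighbor of $v$ shares $(1-10\epsilon)\Delta^2$ common \emph{immediate} neighbors with $v$, and you then invoke a Hall-type matching between the $\Delta$ helpers and $N_G(v)$. But those ACD properties are statements about common \emph{d2}-neighbors (the $10\epsilon$-similarity relation is defined in terms of shared d2-neighbors), not about shared immediate neighbors. Two d2-neighbors may share as few as one immediate neighbor, and nothing in the ACD rules out several of $v$'s chosen helpers all sharing the same single intermediate node with $v$; Hall's condition is simply not established by the properties you cite. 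The correct way to secure edge-disjointness is by construction rather than by matching: for each block $i$, route through a \emph{distinct, prescribed} immediate neighbor $x_i \in N_G(v)$ (e.g., the $i$-th in some fixed ordering) and let $z_v^i$ be a random immediate neighbor of $x_i$; then the first hops are automatically distinct and the $\Delta$ paths are edge-disjoint, which is what the procedure of \cite{HKM20} does. Finally, your explanation of the additive $\poly(\log\log n)$ term is also off: extended components have size $\Theta(\Delta^2)$, not polylogarithmic, though they do have $O(1)$ $G$-diameter, so BFS/aggregation in them costs $O(1)$; the $\Theta(\log n)$ color values per link and the $O(\log\Delta)$-round packing are the actual bottleneck, as you correctly note.
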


If $\zeta_v = O(\log n)$, then $v$ learns the exact palette in $O(\log n)$ rounds, which can be compressed into $O(\log \Delta)$ rounds. 

\begin{lemma}[Step \ref{step:learnlist}, learn your list]
    Let $G$ be a partially colored graph where uncolored nodes have uncolored degree at most $D=O(\log n)$ in $G^2$ and slack $O(\log^2 n)$. There is an $O(\log \Delta + \polyloglog n)$-round algorithm that allows each node to learn a palette of size $\min(|\psi_v|, D+1)$.
\label{lem:learnlistsublog}
\end{lemma}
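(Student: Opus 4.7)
The plan is to first apply the phase 1 of \alg{LearnPalette} described above to obtain, for each live node $v$, a superset $T_v \supseteq \psi_v$ distributed across $\Delta$ helpers along edge-disjoint paths, with $|T_v \setminus \psi_v| \le |\nu_v| = O(\zeta_v)$ by Lemma~\ref{L:h-degree}(1). This phase runs in $O(\log \Delta) + \polyloglog n$ rounds. From there the task reduces to culling $T_v$ down to a subset of $\psi_v$ of size $\min(|\psi_v|, D+1)$, and the strategy splits on sparsity.

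\textbf{Low sparsity case ($\zeta_v \le c_2 \log n$).} Here $|T_v| = |\psi_v| + O(\log n)$, and in particular $|T_v| = O(\log n) + D = O(\log n)$ when $|\psi_v| \le D+1$; more generally the ``wrong'' colors $T_v\setminus \psi_v$ are few enough that they can be identified and removed. Since $v$ has only $|W_v| = O(\zeta_v) = O(\log n)$ d2-neighbors outside its extended component, we have each immediate neighbor $u$ of $v$ gather (via one round of $G$-communication with its neighbors, using cluster IDs from the ACD to distinguish extended-component members from outsiders) the colors of those of its neighbors that are outside $\hat{C}_v$, and forward this list to $v$. The total volume across all intermediate nodes is $O(\log n)$ colors, each of $O(\log \Delta)$ bits, with at most $O(\log n)$ per edge; packing $O(\log n / \log \Delta)$ colors per $\CONGEST$ message (or using $O(\log\Delta)$-bit color renamings inside the component as in Sec.~\ref{ssec:hashing}) this fits in $O(\log \Delta)$ rounds. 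Subtracting the learned $\nu_v$ from $T_v$ (which the helpers know) and shipping the resulting $\psi_v$ back to $v$ along the $\Delta$ edge-disjoint helper paths then gives $v$ the full palette in $O(\log \Delta + \polyloglog n)$ rounds, directly by the guarantee of the original \alg{LearnPalette}.

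\textbf{High sparsity case ($\zeta_v > c_2 \log n$).} By Proposition~\ref{P:sparsity} the slack promised from the uninformed trial in Step~\ref{step:sparsegivesslack} gives $|\psi_v| \ge \zeta_v/(4e^3)$, while $|T_v \setminus \psi_v| \le (2\zeta_v+1)/\epsilon$. Thus $|\psi_v|/|T_v|$ is bounded below by an absolute constant, so each uniformly random draw from $T_v$ lies in $\psi_v$ with constant probability. Each helper $z^i_v$ samples a random subset of its block $T^i_v$ of total expected size $\Theta(D+1) = O(\log n)$ and forwards these candidate colors to $v$ along the edge-disjoint helper paths; $v$ then verifies them via its immediate neighbors. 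A single verification round consists of $v$ broadcasting a batch of $O(\log n)$ colors to each immediate neighbor $u$ ($O(\log n \log \Delta)$ bits per edge, hence $O(\log \Delta)$ rounds with packing), $u$ relaying to its neighbors and aggregating a single bit per color back to $v$. By a Chernoff bound over the $O(\log n)$ samples, at least $D+1$ of them will be validated w.h.p., and $v$ keeps the first $D+1$.

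The main obstacle is the verification step: naively checking $|T_v|=O(\log^2 n)$ colors one-by-one would blow the round budget, and the helpers are inside the extended component and so cannot unilaterally rule out colors from $\nu_v$. The case split is what makes it tractable; in the low-sparsity case we can afford to explicitly collect $\nu_v$ because $|\nu_v| = O(\log n)$ fits into the per-edge bandwidth after compression, while in the high-sparsity case the constant ratio $|\psi_v|/|T_v|$ removes any need to enumerate $\nu_v$ at all. Putting everything together yields an algorithm running in $O(\log \Delta) + \polyloglog n$ rounds as claimed.
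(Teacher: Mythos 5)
Your proof takes a genuinely different route from the paper on the case split: you branch on the sparsity $\zeta_v$, whereas the paper has $v$ compute $|T_v|$ by querying its helpers and branches on whether $|T_v| = O(\log n)$. These criteria do not coincide. When $\zeta_v = O(\log n)$ but the slack (and hence $|\psi_v|$) is $\Theta(\log^2 n)$, the paper lands in its ``$|T_v|$ large'' branch and uses the sampling-and-verification argument, while you stay in the low-sparsity branch and instead explicitly collect $\nu_v$ (the colors of $v$'s outside d2-neighbors) and have the helpers subtract them. This is a valid alternative idea: since $|\nu_v| = O(\zeta_v) = O(\log n)$ in this regime, gathering it through $v$'s immediate neighbors is affordable. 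Your high-sparsity branch is the same sampling-plus-Chernoff argument as the paper's.

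However, your low-sparsity case as written has a gap. After subtraction you say to ``ship the resulting $\psi_v$ back to $v$ \ldots gives $v$ the full palette in $O(\log \Delta + \polyloglog n)$ rounds.'' When $|\psi_v| = \Theta(\log^2 n)$ (which is allowed under the lemma's hypotheses, since slack is only bounded by $O(\log^2 n)$), the palette colors may be concentrated in a constant number of blocks, forcing one helper path to carry $\Theta(\log^2 n)$ colors, i.e.\ $\Theta(\log^2 n \cdot \log\Delta)$ bits, which is $\Theta(\log n \cdot \log\Delta)$ rounds --- over budget. The fix is easy but must be stated: the lemma only requires $\min(|\psi_v|, D+1)$ palette colors, so the helpers should ship back only the first $D+1 = O(\log n)$ colors after $v$ coordinates per-block quotas. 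A secondary omission is that you never say how $v$ determines which branch it is in; the paper's choice of criterion, $|T_v|$, is itself directly computable by querying helpers, whereas yours requires an additional step to estimate $\zeta_v$ (or $|\nu_v|$), which should be made explicit.
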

\begin{proof}
By querying the helpers, we compute $|T_v|$. If $|T_v| = O(\log n)$, 
we can then use the second phase of \alg{LearnPalette} unchanged. All helpers can then forward their lists $T_v^i$ to $v$ in $O(\log \Delta)$ rounds, by compressing $\log_\Delta n$ colors in a single message. Next, $v$ sends its combined list $T_v = \cup_i T_v^i$ to its immediate neighbors, who report the colors in $T_v$ that are already used. Now, $v$ has learned its true palette $\psi_v$ in $O(\log \Delta)$ rounds.

If $|T_v| = \Omega(\log n)$, we need a modified approach.
Since the slack is known to be $O(\log^2 n)$, sparsity is also $O(\log^2 n)$.
We learn $D + 1$ palette colors as follows. 
Each helper $z_v^i$ sends to $v$ randomly chosen $\log_\Delta n$ colors from the possibly available colors $T_v^i$. $v$ then picks uniformly (by weighing the choices according to $|T_v^i|/|T_v|$) $\log_\Delta n$ colors and tries them (forwards to immediate neighbors and learns which ones were already used).
Each color query has probability $|\psi_v|/(|T_v| = \Omega(1)$ of being in the palette. Thus, by Chernoff, it suffices to query $O(D)+O(\log n) = O(\log n)$ colors to learn $D+1$ colors from the palette. We are then ready to solve a deg+1-list coloring instance.
\end{proof}

The combined color gaps due to colors of nodes outside $\hat{C}$ is $|\nu_v| \le (2\zeta+1)/\epsilon$ such colors, by Lemma \ref{L:h-degree}(1). It suffices for us learn of only $|D+1|$ colors from the palette, where $D = O(\log n)$ is the live degree of $v$.

\paragraph{Step~\ref{step:shattering}: Shattering Into Small Connected Components}
\begin{lemma}[Step~\ref{step:shattering}] After step~\ref{step:shattering}, the subgraph $G^2[U]$ induced by the subset $U\in \{U^{lo},U^{hi}\}$ of live nodes currently being considered has connected components of size most $\polylog n$, w.h.p.
\label{lem:stepshattering}
\end{lemma}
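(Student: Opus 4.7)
The plan is to analyze this as a standard shattering argument in the style of Beck~\cite{Beck91} and Barenboim, Elkin, Pettie, Schneider~\cite{BEPS12}, applied to the subgraph $G^2[U]$, which has maximum degree $\tilde\Delta = O(\log n)$ by Steps~\ref{step:degreereduction}--\ref{step:degreeestimation} (directly for $U^{hi}$, and even more strongly as the max uncolored d2-degree for $U^{lo}$). By Lemma~\ref{lem:learnlistsublog}, every live node already knows a palette $L_v$ of size at least $\mathrm{deg}_{G^2[U]}(v)+1$, so a \alg{OneShotColoring} step---each live node tries a uniformly random color from its known palette and keeps it if no uncolored d2-neighbor tries the same---is a well-defined informed color trial with constant per-round success probability.

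The first step is to show that a single round of \alg{OneShotColoring} colors each live node with probability at least some absolute constant $q > 0$: conditioned on any behavior of the neighbors, a uniformly random color from $L_v$ avoids the (at most $|L_v|-1$) colors chosen by uncolored d2-neighbors with probability at least $1/e$, independently of its other neighbors' tries in the palette-size-$\geq\mathrm{deg}+1$ regime (the standard $(1-1/(d+1))^d \geq 1/e$ bound). Repeating independently for $T = c\log \tilde\Delta$ rounds, each node remains uncolored after Step~\ref{step:shattering} with probability at most $(1-q)^T \leq \tilde\Delta^{-c'}$ for a constant $c'$ that can be made as large as we like by choosing $c$ appropriately. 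Crucially, the event that a given node $v$ remains uncolored is determined entirely by the random bits used by nodes within two hops of $v$ in $G^2[U]$ during the $T$ rounds, i.e., it is a function of a bounded-radius neighborhood in $G^2$.

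With these two ingredients---per-node survival probability $\leq \tilde\Delta^{-c'}$, and local dependence of the survival events---the standard shattering lemma (see, e.g., Lemma~5.1 of~\cite{BEPS12} or the original argument of Beck~\cite{Beck91}) applies: choosing $c'$ sufficiently large, with high probability every connected component of uncolored vertices in $G^2[U]$ has size at most $O(\tilde\Delta^{O(1)} \log n) = \polylog n$. Concretely, one considers a $G^4[U]$-independent set of putative component ``witnesses'' and bounds the probability that a tree of such witnesses all survive; the local dependence of survival events on a constant-radius neighborhood in $G^2$ translates into constant-radius independence in an appropriate conflict graph, and the high branching cost $\tilde\Delta^{c'}$ beats the degree of that conflict graph.

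The main obstacle, and where one has to be careful, is precisely this local-dependence bookkeeping: \alg{OneShotColoring} is genuinely local (radius $1$ in $G^2$) because each node only needs to know the colors its uncolored d2-neighbors tried, so the standard shattering lemma applies off the shelf once we set $\tilde\Delta = O(\log n)$ and verify the per-round constant-probability bound. We should also note that for $U^{lo}$ the argument is even easier since nodes already have uncolored d2-degree $\tilde\Delta = O(\log n)$, while for $U^{hi}$ we use that $G^2[U^{hi}]$ itself has maximum degree $\tilde\Delta$, which is what the shattering lemma actually requires.
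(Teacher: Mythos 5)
Your proposal correctly identifies the high-level shattering strategy (constant per-round success, local dependence, branching-process union bound), which matches what the paper does via its Lemma~\ref{lem:shattersmallcomponent}. However, there is a genuine gap that is exactly the CONGEST-specific issue the paper's proof is careful about: you treat ``each live node tries a uniformly random color from its \emph{known} palette'' as a black box that can be repeated with fresh constant success probability each round, but you never explain how a node \emph{keeps} its palette known across rounds of OneShotColoring. In the distance-2 CONGEST setting, when a d2-neighbor of $v$ gets colored, $v$ does not automatically learn that color; the information must be relayed through an intermediate node, which can only forward $O(\log n / \log\Delta)$ colors per round. If several d2-neighbors of $v$ sharing a common relay get colored simultaneously, $v$'s palette becomes stale. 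In the worst case a node's $O(\log n)$-size list could become dominated by colors of already-colored neighbors, dropping its per-round success probability to $\Theta(1/\log n)$ rather than a constant, which would break the $\tilde\Delta^{-c'}$ per-node failure bound on which the shattering union bound depends.

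The paper addresses this explicitly: it wraps each color trial in a 3-round subroutine that pipelines colored-neighbor notifications, observes that each node has only $O(\log n)$ colors to receive in total so a direct neighbor can be backed up in at most $O(\log\Delta)$ iterations, and then invokes a version of the shattering lemma (Lemma~\ref{lem:shattersmallcomponent}) that explicitly tolerates up to $O(\log\Delta)$ ``bad'' iterations per node in which constant success probability need not hold. This also affects the locality radius: the paper applies the lemma with $d=3$ (not $d=2$ as your two-hops-in-$G^2$ intuition suggests) because the pipelining process creates a slightly larger dependency neighborhood. To repair your argument you would need to either (i) incorporate the pipelined relaying explicitly and bound the number of iterations with stale palettes as the paper does, or (ii) argue differently that stale palettes do not hurt the constant success probability; simply citing Lemma~\ref{lem:learnlistsublog} only gives the palette at the \emph{start} of Step~\ref{step:shattering}, not throughout it.
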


Lemma~\ref{lem:stepshattering} follows a result of~\cite{BEPS12}, adapted to our setting (Lemma~\ref{lem:shattersmallcomponent}), that shows that a few rounds of good random color tries on a graph whose uncolored nodes induce a subgraph of low maximum degree guarantee that, after the random color tries, the connected components of the subgraph induced by the uncolored nodes are small.

\begin{lemma}[Adaptation of Lemma~5.3 of~\cite{BEPS12}]
    Let $G$ be a partially colored graph, $U$ a subset of the uncolored nodes and $\Deltahat$ be the maximum degree in $G^2[U]$, the subgraph of $G^2$ induced by $U$.
    Let us have access to an $O(1)$ round algorithm $\calA$ such that:
    \begin{itemize}
        \item each uncolored node of $U$ gets colored with constant probability at least $p_{succ}$ in all but at most $O(\log \Delta)$ iterations of $\calA$,
        \item for all uncolored nodes $v\in U$, let $E_v$ be the event that $v$ gets
        colored by the algorithm. There exists $d$ such that $E_v$ is independent
        of all events $\{E_u: d_{G^2}(u,v) \geq d\}$.
    \end{itemize}
    Then $O(\log \Deltahat)$ iterations of $\calA$ suffice to guarantee that
    all uncolored components in $G^2$ have less than
    $O(\log_{\Deltahat}(n) \Deltahat^{d-1})$ nodes with high probability.
    \label{lem:shattersmallcomponent}
\end{lemma}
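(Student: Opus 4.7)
The plan is to adapt the standard shattering argument of~\cite{BEPS12}: show each node's failure probability is polynomially small in $\Deltahat$, then combine $d$-independence with a tree-counting argument to rule out large monochromatic-uncolored components.

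First I would pin down the per-node failure bound. Running $\calA$ for $T = C \log \Deltahat$ iterations, with $C$ a large enough constant, the hypothesis guarantees that at least $T - O(\log \Delta) = \Omega(\log \Deltahat)$ of these iterations are ``good'' for any fixed node $v$ (here one absorbs the additive $O(\log \Delta)$ into $T$ if necessary, which only changes constants in the final bound). In each good iteration the conditional success probability is at least $p_{succ}$, so the probability that $v$ is still uncolored after all $T$ iterations is at most $(1-p_{succ})^{\Omega(\log \Deltahat)} \le \Deltahat^{-\kappa}$, where $\kappa$ grows with $C$ and can therefore be made as large as desired.

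Next I would introduce the classical notion of a $d$-tree: a collection $S$ of $k$ nodes together with a rooted tree structure on $S$ in which every edge of the tree corresponds to a path of length exactly $d$ in $G^2$, and any two distinct nodes of $S$ are at $G^2$-distance at least $d$. A greedy argument shows that any connected subgraph of $G^2[U]$ with more than $k \cdot \Deltahat^{d-1}$ vertices contains a $d$-tree of size $k$: repeatedly pick a still-present vertex at distance exactly $d$ from the current tree, removing the at most $\Deltahat^{d-1}$ vertices in the $(d{-}1)$-neighborhood of each new tree node. By the $d$-independence hypothesis, for a fixed $d$-tree $S$, the failure events $\{E_v\}_{v \in S}$ are mutually independent, so the probability that all of $S$ remains uncolored is at most $\Deltahat^{-\kappa k}$.

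A union bound finishes the argument: the number of $d$-trees of size $k$ in $G^2$ is bounded by $n \cdot 4^k \cdot (\Deltahat^d)^k$ (choice of root, tree shape via a Cayley-type bound, and $\Deltahat^d$ choices for each length-$d$ path attaching a new node), so the probability that any $d$-tree of size $k$ survives is at most $n \cdot (4 \Deltahat^{d-\kappa})^k$. Taking $\kappa > d + 2$ (by increasing $C$) and $k = \Theta(\log_{\Deltahat} n)$ drives this below $n^{-\Omega(1)}$. Hence w.h.p.~no uncolored connected component in $G^2$ can have size more than $k \cdot \Deltahat^{d-1} = O\bigl(\log_{\Deltahat}(n)\,\Deltahat^{d-1}\bigr)$.

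The main obstacle I anticipate is bookkeeping around the ``bad iteration'' allowance and the $d$-independence assumption rather than the plain locality of the color trial: one must be careful that the good iterations used for different nodes of a $d$-tree can be aligned so that independence across $S$ genuinely holds (the hypothesis gives $d$-independence per iteration, and one needs to lift this to the product across iterations, which is routine but worth stating cleanly). Everything else is straightforward adaptation of~\cite[Lemma~5.3]{BEPS12}.
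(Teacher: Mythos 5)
Your proposal is correct and takes essentially the same approach as the paper: bound the per-node failure probability by $\Deltahat^{-\kappa}$ after $\Theta(\log \Deltahat)$ good iterations, use $d$-independence to multiply over a distance-$d$ separated tree, enumerate such trees ($n \cdot 4^k \cdot \Deltahat^{dk}$), and union-bound. The paper's proof spells out the round-by-round conditioning you flag as ``routine but worth stating cleanly,'' and, like you, it actually uses $r + r_{bad}$ iterations with $r_{bad} = O(\log\Delta)$ rather than the $O(\log\Deltahat)$ advertised in the lemma statement, so your treatment of the bad-iteration allowance matches theirs.
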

\begin{proof}
    Let us consider a distance-$d$ set of size $t=c\log_{\Deltahat} n$, meaning a set of uncolored nodes that are at distance at least $d$ from each other that form a tree in the uncolored part of $G^{2d}$. There are at most $4^t \cdot n \cdot {\Deltahat}^{d(t-1)} \leq n^{2c/\log \Deltahat + 1 + cd} \leq n^{1+c(2+d)}$ such sets.
    
    Let an iteration of $\calA$ be \emph{good} for $v$ if it guarantees that $v$ gets colored with probability at least $p_{succ}$, and \emph{bad} for $v$ otherwise. Let $p^{i,v}_{fail}$ be the probability that a node $v$ stays uncolored in iteration $i$ of $\calA$, and $p_{fail} = 1 - p_{succ}$ be maximum probability that a node stays uncolored in a good iteration for $v$. The probability that a given distance-$d$ set $T$ remains completely uncolored after one round of algorithm $\calA$ is at most $\prod_{v \in T}p^{1,v}_{fail}$. Conditioned on the event that the set was not colored in the first $i-1$ rounds, the probability that it stays uncolored after round $i$ is at most $\prod_{v \in T}p^{i,v}_{fail}$. Thus the probability that the set remains uncolored after executing $\calA$ for $r$ rounds is at most $\prod_{i \in [r]}\prod_{v \in T}p^{i,v}_{fail} = \prod_{v \in T}\prod_{i \in [r]}p^{i,v}_{fail}$. Let $r_{bad} \in O(\log \Delta)$ be an upper bound on the maximum number of bad iterations a node has. Running $\calA$ for $r+r_{bad}$ iterations ensures that a distance-$d$ set $T$ remains uncolored with probability at most $\prod_{v \in T}\prod_{i \in [r+r_{bad}]}p^{i,v}_{fail} \leq \prod_{v \in T} (p_{fail})^{r} = (p_{fail})^{r \cdot t}$.
    
    By union bound, the probability that there exists one distance-$d$ set that remains uncolored after $r+r_{bad}$ rounds of $\calA$ is at most $(p_{fail})^{r \cdot t} \cdot n^{1+c(2+d)} = n^{-r \frac{c\log(1/p_{fail})}{\log \hat\Delta}+(1+c(2+d))}$.
    
    Thus for any constant $b$, running algorithm $\calA$ for $r+r_{bad}$ rounds with $r = \log \Deltahat \cdot \frac {1}{c \log(1/p_{fail})} (1+c(2+d) + b) \in O(\log \Deltahat)$ ensures that no distance-$d$ set remains completely uncolored with probability $\geq 1 - n^{-b}$. As any connected component of size $t \Deltahat^{d-1}$ in $G^2$ must contain such a distance-$d$ set, not such connected component exists.
\end{proof}

\begin{proof}[Proof of \Cref{lem:stepshattering}]
    In Step~\ref{step:learnlist}, the uncolored nodes we are currently considering learned their palette. A node that knows its palette can do \emph{informed} color tries, that succeed with constant probability. Consider the following 3-round algorithm $\calA$. In the first round, each live node tries a color in its palette. In the second round, nodes that just got colored inform their direct neighbors of the success of their trial and their new color. Colors received in this round are pipelined for future 1-hop broadcast. In the third round, each node that has colors pipelined inform its direct neighbors of as many of them as it can in a single round ($O(\log n / \log \Delta)$).
    
    Let us consider the viewpoint of a single node $v$. After a round in which its direct neighbors were able to transmit all the colors they had to transmit to $v$, $v$ knows its palette perfectly, as it was informed of the evolution of all its previously live d2-neighbours. Occasionally, a direct neighbor of $v$ may have more than $O(\log n/\log \Delta)$ colors pipelined, and after an iteration where this happens it may be that $v$ is not able to color itself with constant probability because it does not know its full palette. However, there are at most $O(\log \Delta)$ iterations in which this can happen, since each node initially has at most $O(\log n)$ uncolored d2-neighbours, hence at most $O(\log n)$ colors to receive over the iterations of $\calA$, and a node receives $\Omega(\log n / \log \Delta)$ colors in an iteration where the direct neighbors were not able to transmit all their colors.
    
    Applying \Cref{lem:shattersmallcomponent} with this algorithm $\calA$ as subroutine, $\Deltahat \in O(\log n)$ and $d=3$, we obtain that $O(\log \Delta + \log \log n)$ iterations of $\calA$ suffice to guarantee that the subgraph of $G^2[U]$ induced by the remaining uncolored nodes of $U$ has connected components of size at most $\widetilde O(\log^3 n)$ w.h.p 
\end{proof}

\paragraph{Step~\ref{step:steiner}: Adding Steiner Nodes}
Let $U$ ($=U^{lo}$ or $=U^{hi}$) be the uncolored nodes that we consider in this step. In the proof of the following statement we use that connected components of $G^2[U]$ have size $\polylog n$ (by Step~\ref{step:shattering}), and that live nodes have sparsity $O(\log^2 n)$ (by Step~\ref{step:filteredschneiderwattenhofer}). 
\begin{lemma}[Steiner nodes]
There is a $O(\log\log n)$-round algorithm to select a subset $S\subseteq V$ of vertices such that:
\begin{enumerate}
\item For any $u\in U$ and any of its $d2$-neighbor $u'\in U$  there exists some $s\in S$ such that $s$ is neighbor of $u$ and $u'$, and
\item Let $K=G[U\cup S]\setminus E(G[S])$ the subgraph of $G$ induced by $U\cup S$ without edges between vertices in $S$. The connected components of $K$ have size at most $N=\poly\log n$
\end{enumerate}
\label{lem:steinerNodes}
\end{lemma}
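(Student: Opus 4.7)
The plan is to construct $S$ by choosing, for every $d2$-pair $(u,u')\in U\times U$, a single common immediate neighbor to include in $S$. Two structural observations give property~2: (i) inside any connected component $X$ of $G^2[U]$ the number of $d2$-pairs is at most $|X|\cdot\Delta(G^2[U])=\poly\log n\cdot O(\log n)=\poly\log n$, using the component-size bound from Step~\ref{step:shattering} and the $O(\log n)$ degree bound in $G^2[U]$ inherited from Step~\ref{step:degreereduction} (after $U^{lo}$ and $U^{hi}$ are processed separately); (ii) the set $L_w=U\cap N_G(w)$ of live immediate neighbors of any potential Steiner $w$ forms a clique in $G^2[U]$, because every pair in $L_w$ is connected by a $2$-path through $w$, and therefore $L_w$ lies inside a single $G^2[U]$-component. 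Consequently $|L_w|=O(\log n)$ and each Steiner in $K$ is adjacent only to live nodes of one $G^2[U]$-component, so each $K$-component is a $G^2[U]$-component enlarged by at most $\poly\log n$ Steiners, of total size $N=\poly\log n$.

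The main obstacle for the implementation is bandwidth: a single $w$ may be a common neighbor of many pairs, and naively shipping $\Theta(\log n)$-bit identifiers around would exceed the $O(\log\log n)$ round budget. The plan is to route all coordination through short hashes. First, each live $u\in U$ draws a random hash $h(u)\in[\log^c n]$ for a sufficiently large constant $c$ and broadcasts $(\ID(u),h(u))$ to its immediate neighbors in a single round. A union bound over the $n$ nodes and their $O(\log n)$-sized uncolored $d2$-neighborhoods shows that w.h.p.\ $h$ is injective on every such neighborhood, so $h(u')$ suffices for $u$ to unambiguously identify a $d2$-neighbor $u'$.

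In the second phase every non-live $w$ assembles $L_w$ from the announcements it received and sends to each $u\in L_w$ the short list $\{h(u'):u'\in L_w\setminus\{u\}\}$. Each such list contains $O(\log n)$ entries of $O(\log\log n)$ bits, for a total of $O(\log n\cdot\log\log n)$ bits per link, which fits in $O(\log\log n)$ \CONGEST rounds. Each live $u$ then determines, for every observed hash $h(u')$, the set $W(u,u')$ of immediate neighbors whose list contained $h(u')$; by construction this is exactly the set of non-live common immediate neighbors of $u$ and $u'$ that took part. Finally $u$ selects the minimum-$\ID$ element $w^*\in W(u,u')$ and instructs $w^*$ to join $S$ in one additional round. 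Since $u$ and $u'$ compute the same set $W(u,u')=W(u',u)$, they choose the same $w^*$, so each $d2$-pair contributes at most one Steiner and the total number of Steiners per $G^2[U]$-component is $\poly\log n$.

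The main technical subtlety will be verifying the injectivity of the short hash on every $d2$-neighborhood and controlling the congestion of the list-forwarding step; both follow from the combination of the small hash range $[\log^c n]$ and the $O(\log n)$ bound on $|L_w|$. Granting these ingredients, property~1 holds because $w^*\in N_G(u)\cap N_G(u')$ by construction, property~2 follows from the two structural observations above, and the total round complexity is $O(\log\log n)$ as required.
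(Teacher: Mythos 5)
Your structural argument for Property~2 is correct and, if anything, cleaner than the paper's: the observation that the live immediate neighborhood $L_w = U\cap N_G(w)$ of any candidate Steiner $w$ is a clique in $G^2[U]$ (hence contained in one $G^2[U]$-component and of size $O(\log n)$) cleanly shows that Steiners never merge components of $K$, and the bound of one Steiner per ordered $d2$-pair then gives $\poly\log n$ per component. The paper implicitly relies on the same fact but does not spell it out.

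The implementation, however, has a genuine gap. You hash live nodes into a range of size $\log^c n$ and claim w.h.p.\ injectivity on every uncolored $d2$-neighborhood. With neighborhoods of size $\Theta(\log n)$ and a hash range of only $\poly\log n$, the per-neighborhood collision probability is $\Theta(\log^2 n / \log^c n) = 1/\poly\log n$, and a union bound over $n$ neighborhoods gives a failure probability of order $n/\poly\log n \gg 1$, not $n^{-\Omega(1)}$. To make random hashing injective w.h.p.\ you need a hash range of size $\poly(n)$, which brings the per-link load back to $\Theta(\log^2 n)$ bits, i.e.\ $\Theta(\log n)$ rounds rather than the claimed $O(\log\log n)$. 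As a consequence, Step~2 of your implementation can return a spurious $W(u,u')$ and your Property~1 is not guaranteed w.h.p.

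The paper sidesteps this by exploiting the almost-clique decomposition instead of random hashing: inside each extended component a BFS traversal assigns collision-free local IDs with $O(\log\Delta)$ bits (the component has $O(\Delta^2)$ nodes), so intermediate nodes can report their live neighbors deterministically using $O(\log n\cdot\log\Delta)$ bits per link, after which each live node unilaterally picks one intermediate per live $d2$-neighbor inside its component. For live $d2$-neighbors outside the component, there is no identification problem at all: one simply adds \emph{every} intermediate node on such a $2$-path to $S$, and Lemma~\ref{L:h-degree}(1) together with the sparsity bound $\zeta=O(\log^2 n)$ from Step~\ref{step:filteredschneiderwattenhofer} caps this at $O(\log^2 n)$ extra Steiners per live node. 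Your proof would be repaired by replacing the random hash with these ACD-based local IDs (or by otherwise providing a deterministic collision-free short-label scheme), after which your component-size argument goes through unchanged.
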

\begin{proof}
   By Lemma~\ref{L:h-degree}(1), each live node has $O(\zeta) = O(\log^2 n)$ 2-paths to nodes outside its almost-clique $C$. We add to $S$ each such node that connects live nodes in different almost-cliques. 
   
   To identify such nodes within $C$, we first renumber the nodes of $C$ with BFS and aggregation, to use $O(\log\Delta)$ bits. Each intermediate node can then inform its live neighbors of its other live neighbors using $O(\log n \log \Delta)$ bits, thus $O(\log \Delta)$ rounds. Each live node then unilaterally chooses a single intermediate node to each of its live d2-neighbors and adds to $S$. 
   All in all, the addition of the connecting nodes increases the size of each connected component by an $O(\log^2 n)$-factor as each live node chooses only one intermediate node to connect to each of its $O(\log n)$ live d2-neighbors, and at most $O(\log^2 n)$ intermediate nodes to connect to live d2-neighbors outside its ACD component.
\end{proof}

\subsection{Postshattering: Algorithm Overview \& Proofs}
\label{sec:postshattering}
The high level idea is to compute a network decomposition $\mathcal{D}$ on each connected component of uncolored vertices to split the components into small diameter clusters. Afterwards, we use the deterministic $(deg+1)$-list coloring algorithm from \cite{BKM19} on each cluster (iterating through the clusters in an order that is given by $\mathcal{D}$).  To obtain an efficient algorithm, we need a network decomposition with two features: a) it handles distance-2 relations, and b) it handles large node identifiers (in comparison with the component sizes). The latter is not handled by the new $\poly\log n$ result of Rozho\v n and Ghaffari \cite{RG19}. Hence, we cannot currently reduce the dependence on $n$ in the time complexity to $\poly \log\log n$. Instead, the construction of Portmann and Ghaffari \cite{Portmann19} handles both of these features. The downside is the resulting time complexity of $2^{O(\sqrt{\log\log n})}$. Further, the runtime of the  list-coloring algorithm in \cite{BKM19} depends on the size of the colorspace, and we equip our algorithm with methods to reduce the colorspace before we apply \cite{BKM19}. 

\subparagraph*{Preconditions:}
We are given an $n$-vertex graph $G$ with maximum degree $\Delta$ and a partial $d2$-coloring $\phi:V\rightarrow [\Delta^2]\cup \{\bot\}$. Let $U =\{\phi^{-1}(\bot)\}\subseteq V$ be the uncolored vertices. Further, we are given a subset $S\subseteq V$ and $\Deltahat=O(\log n)$ such that:
\begin{itemize}
    \item Each node $u\in U$ has at most $\Deltahat$ $d2$-neighbors in $U$. \\
    This immediately implies that each node in $V$ has at most $\Deltahat$ $U$-neighbors in $G$.
    \item $d2$-connected components of $G^2[U]$ have size $\poly(\Deltahat)\log n=\poly\log n$.
    \item For any $u\in U$ and any of its $d2$-neighbor $u'\in U$  there exists some $s\in S$ such that $s$ is neighbor of $u$ and $u'$. 
    \item Let $K=G[U\cup S]\setminus E(G[S])$ the subgraph of $G$ induced by $U\cup S$ without edges between vertices in $S$. The connected components of $K$ have size at most $N=\poly\log n$. And,
    \item Each vertex $u\in U$ is equipped with a list $L_u$ of colors that are not used in its $d2$-neighborhood. The size of $|L_u|\leq L\leq O(\log n) \leq N$.
    \end{itemize}
\begin{lemma}[Postshattering]
\label{lem:postShattering}
There is a deterministic \CONGEST algorithm on communication network $G$ that, under the above assumptions, list colors the nodes in $U$ such that  $d2$-neighbors pick distinct colors. The runtime of the algorithm is $2^{O(\sqrt{\log\log n})}$ rounds.
\end{lemma}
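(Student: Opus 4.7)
My plan is to apply the distance-$2$ network decomposition of \cite{Portmann19} to the subgraph $G^2[U]$ and then, one cluster color class at a time, run the deterministic $(\mathit{deg}+1)$-list coloring algorithm of \cite{BKM19} on each cluster in parallel.

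The first step is to invoke \cite{Portmann19}. Its algorithm is designed to operate at distance $k=2$ on subgraphs of size $N=\poly\log n$ while tolerating the original ID space of size $\poly n$, and it computes, in $2^{O(\sqrt{\log N})}=2^{O(\sqrt{\log\log n})}$ rounds of $G$, an $x$-CONGEST-routable distance-$2$ network decomposition of $G^2[U]$ with weak cluster diameter $d$ and $c$ cluster colors. The Steiner set $S$ supplies the routes used for d2 communication inside this algorithm: each d2-edge between $u,u'\in U$ traverses a common neighbor $s\in S$, and the bound $\Deltahat=O(\log n)$ on $U$-degree in $G$ together with the component-size bound $N$ of $K$ controls the resulting congestion parameter $x$.

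With the decomposition in hand I process the $c$ cluster color classes sequentially, each class in parallel across its clusters. In each cluster $X$ I first shrink the palette space: aggregate the union $\bigcup_{u\in X}L_u$ along the cluster, which has at most $N\cdot L=\poly\log n$ distinct colors, rank them to produce a renaming into $O(\log\log n)$ bits that is consistent across $X$, and then run \cite{BKM19} on the d2-list-coloring instance restricted to $X$ under the renamed palette. The preconditions guarantee the $(\mathit{deg}+1)$-list coloring property, since each $L_u$ is larger than $u$'s uncolored d2-degree. Because the cluster has $\poly\log n$ vertices, $\poly\log n$ renamed colors, and $O(\log\log n)$-bit local IDs (which we also obtain by renaming inside the cluster), \cite{BKM19} runs in $\poly\log\log n$ rounds per cluster, each of which is realized in $O(d\cdot x)$ actual rounds of $G$ by the distance-$2$ routability of the decomposition. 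After each color class, newly colored nodes notify their d2-neighbors through $S$ so that later clusters work with up-to-date lists. Summing costs, the total runtime is the decomposition time plus $c\cdot d\cdot x\cdot \poly\log\log n$, i.e., at most $\ND\cdot \poly\log\log n=2^{O(\sqrt{\log\log n})}$ rounds.

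The main obstacle is precisely the network decomposition step: we need the construction at distance $2$ under a global ID space of size $\poly n$, the regime in which the newer $\poly\log n$-time construction of \cite{RG19} does not apply, so the only available primitive is \cite{Portmann19} at $2^{O(\sqrt{\log\log n})}$ rounds, which sets the overall rate. Everything after the decomposition—the color renaming inside each cluster, the Steiner simulation of d2 communication, and the invocation of \cite{BKM19}—contributes only factors polylogarithmic in $\log n$ and is absorbed into the same bound.
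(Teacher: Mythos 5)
Your high-level plan — distance-$2$ decomposition via Ghaffari--Portmann, local ID renaming, palette renaming, then \cite{BKM19} cluster by cluster — is the right skeleton and matches the paper, but the palette-renaming step has a genuine gap that the paper works hard to avoid. You propose to aggregate $\bigcup_{u\in X}L_u$ at a cluster leader, rank the elements, and distribute the ranking. The union has $\poly\log n$ colors, but each color is an element of $[\Delta^2]$ and thus costs $\Theta(\log\Delta)$ bits. Aggregating and then broadcasting the ranking table therefore moves $\Theta(\poly\log n\cdot\log\Delta)$ bits through the bottleneck edge of the cluster, i.e.\ $\Theta(\poly\log n\cdot\log\Delta/\log n)$ rounds. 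For $\Delta$ in the intermediate regime this is easily $\poly\log n$ rounds (e.g.\ already for $\Delta=2^{\log n/\log\log n}$), which blows past the $2^{O(\sqrt{\log\log n})}$ target. The same issue recurs in your ``newly colored nodes notify their d2-neighbors through $S$'' step: with up to $\Deltahat=O(\log n)$ $U$-neighbors per Steiner node and $O(\log\Delta)$-bit colors, a single such notification round costs $O(\log\Delta)$ actual rounds, which you cannot afford $c$ times.

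The paper's fix is precisely to \emph{not} transmit the lists. It computes a colorspace reduction $f:[\Delta^2]\to[N^{10}]$ that is injective on every $L_u$ and, crucially, is representable by $O(\log\log n)$ bits (a seed selecting an evaluation point $e\in\mathbb{F}_p$ for a family of degree-$d$ polynomials). The seed is found by the method of conditional expectation: each node evaluates locally whether a candidate partial seed keeps $f$ injective on its own list, and only small sums of conditional expectations are aggregated — one cheap aggregation per bit of the $O(\log\log n)$-bit seed, so $O(\log\log n)\cdot 2^{O(\sqrt{\log\log n})}=2^{O(\sqrt{\log\log n})}$ rounds total. Because $f$ is a tiny description, clusters can broadcast it two hops out, already-colored neighbors return $f(c)$ in $O(\log\log n)$ bits, and the inter-class list refinement stays within $O(\log\log n)$ rounds. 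To repair your argument you would need to replace the explicit ranking with such a compactly describable, derandomized renaming; without that, the runtime claim does not hold.
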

The preconditions for \Cref{lem:postShattering} are satisfied after the last preshattering step (due to \Cref{lem:shattersmallcomponent,lem:steinerNodes,lem:learnlistsublog}).
\Cref{lem:postShattering} uses two subroutines from previous work. First, a network decomposition algorithm that works for $G^k$ and does not rely on a small IDspace.

\begin{definition}[Network Decomposition, $x$-CONGEST-routable \cite{awerbuch89}]
\label{def:decomposition}
  A weak \emph{$\big(d(n),c(n)\big)$-network-decomposition} of an
  $n$-node graph $G=(V,E)$ is a partition of $V$ into clusters such
  that each cluster has weak  diameter at most $d(n)$ and the
  cluster graph is properly colored with colors $1,\dots,c(n)$. 
	If the decomposition is equipped with a routing backbone such that one can simulate one round of communication within clusters of $G^k$ in $k\cdot x$ rounds of communication on $G$ (if only clusters of one color class communicate at the same time)  the decomposition is called \emph{$x$-CONGEST-routable}.
\end{definition}
The above definition of a network decomposition relying on weak diameter is suitable for the \LOCAL model where congestion cannot occur: Due to the weak diameter vertices in a cluster can communicate with each other using communication links that are not part of the cluster itself and there cannot be congestion due to different clusters sharing the same edge for communication as message size in the \LOCAL model is unbounded. In the \CONGEST model one also needs to specify the communication structure outside of clusters that vertices use and guarantee that an edge is not used by too many clusters to prevent congestion. As we only use network decomposition in a blackbox manner we do not detail this additional structure; it appears as $x$-\CONGEST-routable in the above definition and is automatically provided by the next theorem.
\begin{theorem}[Network Decomposition of $G^k$, \cite{Portmann19}]
\label{thm:netComp}
There is a deterministic distributed algorithm that in any $N$-node network $G$, which
has $S$-bit identifiers and supports $O(S)$-bit messages for some arbitrary $S$, computes a $(g(N); g(N))$-network decomposition of $G^k$ in $k\cdot g(N) \cdot \log^* S$ rounds, for any $k$ and $g(N) = 2^{O(\sqrt{\log N})}$. The decomposition is $2^{O(\sqrt{\log N})}$-CONGEST-routable. 
\end{theorem}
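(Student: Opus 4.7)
}
The plan is to combine an identifier-compression step with a classical Panconesi--Srinivasan-style network decomposition, both executed on the power graph $G^k$ while simulating each of its rounds on $G$ in a congestion-aware way. The main output will be clusters of $G^k$ of weak diameter $2^{O(\sqrt{\log N})}$, a proper $2^{O(\sqrt{\log N})}$-coloring of the cluster graph, and an explicit routing backbone used to certify CONGEST-routability.

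First I would reduce the identifier space. Since no known deterministic decomposition algorithm runs in $\mathrm{poly}(\log N)$ time when IDs are of size $S \gg \log N$, I will apply Linial's $O(D^2)$-coloring algorithm directly to $G^k$, where $D$ denotes the maximum degree of $G^k$. One round of Linial's procedure on $G^k$ is simulated by $k$ rounds on $G$, and the Linial iteration only requires comparing current color labels among distance-$k$ neighbors. This yields, in $O(k \cdot \log^* S)$ rounds, a proper coloring of $G^k$ with $\mathrm{poly}(D) = N^{O(1)}$ colors, which we treat as new \emph{local} IDs of length $O(\log N)$ that are unique among any two nodes at $G$-distance at most $k$. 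From this point on, every protocol step operates with $O(\log N)$-bit identifiers, so each message fits in the $O(S)$-bit channel.

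Next I would run the Panconesi--Srinivasan decomposition on $G^k$ using the new short IDs. Partition the ID-space into $B = O(\sqrt{\log N})$ blocks of $O(\sqrt{\log N})$ bits each; in phase $b = 1, \dots, B$, process the clusters whose representative's IDs agree on the first $b-1$ blocks, using the standard ball-growing rule: a cluster is retained in the output if growing a ball of radius $r$ in $G^k$ around it achieves a geometric mass increase, and is otherwise swallowed by a neighbor. This gives clusters of weak $G^k$-radius $2^{O(\sqrt{\log N})}$, together with a proper coloring of the cluster graph using $2^{O(\sqrt{\log N})}$ colors (one per phase, or per bit-pattern). Each phase requires $2^{O(\sqrt{\log N})}$ rounds on $G^k$, hence $k \cdot 2^{O(\sqrt{\log N})}$ rounds on $G$, matching the claimed bound. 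During the ball-growing I would store, at each intermediate node used as a relay, pointers toward the current cluster leader, which will serve as the routing backbone.

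The main obstacle is the CONGEST simulation. Simulating one round of $G^k$ on $G$ naively has two issues: message proliferation (a node may need to send many distinct messages along a single $G$-edge to reach different $G^k$-neighbors) and path multiplicity (multiple clusters sharing the same $G$-edge as a relay). I would address this by scheduling: at any given time only one color class of clusters is active, bounding the number of simultaneous clusters using a given $G$-edge; and by using the routing backbone constructed during ball-growing so that each edge carries only pointers toward the unique cluster leader of the active color class rather than arbitrary $G^k$-traffic. Because each cluster has weak $G^k$-diameter $g(N) = 2^{O(\sqrt{\log N})}$, a single round of in-cluster $G^k$-communication can be routed along a BFS tree in $G$ of depth $k \cdot g(N)$ using pipelining, giving the $k \cdot g(N)$-round simulation required by the definition of $g(N)$-CONGEST-routability. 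Putting the three parts together---ID reduction in $k \log^* S$ rounds, $B$ phases each costing $k \cdot 2^{O(\sqrt{\log N})}$, and the backbone construction subsumed by the ball-growing---gives a total of $k \cdot 2^{O(\sqrt{\log N})} \cdot \log^* S$ rounds with the required structural guarantees.
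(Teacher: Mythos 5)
This theorem is imported as a black box from Ghaffari and Portmann \cite{Portmann19}; the paper gives no proof of it, so there is no in-paper reference to compare against, but your reconstruction has a concrete flaw. The overall skeleton --- reduce IDs to $O(\log N)$ bits, then run a Panconesi--Srinivasan-style ball-growing decomposition on $G^k$ --- is in the right spirit, but the round bound you assert for the first step does not hold in \CONGEST. You claim that each round of Linial's coloring on $G^k$ ``is simulated by $k$ rounds on $G$.'' That is true in \LOCAL and false in \CONGEST: for $k \ge 2$, a node $v$ may have $\Theta(\Delta^k)$ neighbors in $G^k$ while having only $\Delta$ incident $G$-edges, and one Linial iteration requires $v$ to learn the current color of \emph{every} $G^k$-neighbor. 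Even with $O(\log N)$-bit colors this is $\Theta(\Delta^k \log N)$ bits squeezed through $\Delta$ edges of bandwidth $O(S)$, i.e.\ $\Omega(\Delta^{k-1}\log N / S)$ rounds, which for general graphs dwarfs $2^{O(\sqrt{\log N})}$. (In the paper's actual application the post-shattering degree is $O(\log n)$ and $N=\polylog n$, so it would be tolerable there, but the theorem is stated for arbitrary $N$-node networks and your argument does not prove that.) Your congestion discussion in the final paragraph only addresses the ball-growing phase, where a cluster backbone is available for routing; it does not rescue the Linial pre-processing, which in your plan must run \emph{before} any backbone exists.

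The Ghaffari--Portmann algorithm is organized precisely to avoid a global Linial pass on $G^k$. The $\log^* S$ factor is incurred \emph{inside} the iterated clustering: identifier/label reduction is applied to a cluster graph of bounded degree (not to $G^k$ itself), and every message in each phase is routed along cluster BFS trees, so that per color class the load on a $G$-edge stays bounded. Communication between $G^k$-neighbors always travels through the weak-diameter backbone maintained throughout, never by distance-$k$ flooding. To repair your proposal you would either need to postpone ID reduction until a routing backbone exists, or give a genuinely congestion-aware simulation of the Linial step; the claim ``one $G^k$-round equals $k$ $G$-rounds'' is not a substitute for such an argument.
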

Second, a \CONGEST algorithm that can list-color graphs efficiently if their diameter, the maximum degree and the color space size are small.\footnote{One can alternatively use the $(deg+1)$-list coloring algorithm of \cite{kuhn20_coloring}  which does not depend on the diameter. However, both algorithm yield the same runtime and the colorspace reduction cannot be avoided in either one.} 

\begin{theorem}[Diameter List Coloring, \cite{BKM19}]
\label{thm:diameterColoring}
There is a deterministic \CONGEST algorithm that given a list-coloring instance $G=(V,E)$ with color space $[C]$, lists $L(v)\subseteq[C]$ for which $|L(v)|\geq\deg(v)+1$ holds for all $v\in V$ and an initial $m$-coloring of $G$, list-colors all nodes in $O\big(D\cdot\log N \log C\cdot (\log\Delta+\log m+\log\log C)\big)$ rounds.

When the result is applied to a subgraph of a communication graph $G$, $N$ refers to the number of nodes in the subgraph, $\deg(v)$ refers to the degree of $v$ in the subgraph and $\Delta$ to the maximum degree of the subgraph, but the diameter $D$ refers to the diameter of $G$. 

The message size of the algorithm is $O(\log C + \log m +\log \Delta)$.
\end{theorem}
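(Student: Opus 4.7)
The plan is to combine the distance-$2$ network decomposition of Portmann--Ghaffari (Theorem \ref{thm:netComp}) with the diameter-based list coloring of \cite{BKM19} (Theorem \ref{thm:diameterColoring}), using the Steiner set $S$ to emulate $G^2[U]$-communication by short paths in $G$. Two complications beyond a direct application arise: the color space $[\Delta^2]$ is far larger than the parameter $C$ permitted by Theorem \ref{thm:diameterColoring} if we want a polylogarithmic $\log C$, and the identifier space is $\poly(n)$ while components have only $\poly\log n$ vertices; both are addressed before the list-coloring call.

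First, I would invoke Theorem \ref{thm:netComp} with $k=2$ on the communication graph $K$ to obtain a $(g(N),g(N))$-network decomposition of $K^2$, where $g(N)=2^{O(\sqrt{\log N})}=2^{O(\sqrt{\log\log n})}$. Since every pair of $d2$-neighbors in $U$ is connected in $K$ by a length-$2$ path through some $s\in S$, the restriction of this decomposition to $U$ covers every edge of $G^2[U]$. The cost is $2\cdot g(N)\cdot \log^*\log n = 2^{O(\sqrt{\log\log n})}$ rounds, and the decomposition is $2^{O(\sqrt{\log\log n})}$-CONGEST-routable, so one round of intra-cluster communication in $K^2$ (hence in $G^2[U]$) can be simulated in $2^{O(\sqrt{\log\log n})}$ rounds on $G$.

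Next, I would perform a per-cluster colorspace reduction and then iterate over the $g(N)$ cluster-color classes. For each cluster $Q$, the set $L_Q:=\bigcup_{v\in Q}L_v$ has at most $|Q|\cdot L=\poly\log n$ distinct colors; using the routing backbone and pipelining I aggregate $L_Q$ at a cluster leader, fix a bijection $\pi_Q:L_Q\to[|L_Q|]$, and broadcast it back, so that each $v$ uses $\tilde L_v:=\pi_Q(L_v)$, which has size $|L_v|\ge \deg_{G^2[U]}(v)+1$ and lives in $[\poly\log n]$; cluster-internal identifiers are also reassigned in $[|Q|]$. Clusters of a common color class are pairwise non-adjacent in $G^2[U]$ and are processed in parallel by applying Theorem \ref{thm:diameterColoring} with $D=g(N)$, $N,C=\poly\log n$, $\Delta=\hat\Delta=O(\log n)$, and $m=1$; this costs $O(D\log N\log C(\log\hat\Delta+\log\log C))=2^{O(\sqrt{\log\log n})}$ intra-cluster rounds, each simulated in $2^{O(\sqrt{\log\log n})}$ $G$-rounds. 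Between classes, newly colored vertices notify their $d2$-neighbors through the Steiner nodes in $O(1)$ $G^2$-rounds and the recipients prune $\tilde L_v$ accordingly; since each pruning also decreases the uncolored $d2$-degree by one, the $\deg+1$ list-size invariant is preserved. Summing over $g(N)$ classes yields the claimed $2^{O(\sqrt{\log\log n})}$ total.

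The principal obstacle is the colorspace reduction: list entries originally live in $[\Delta^2]=[\poly n]$, so the aggregation of $L_Q$ and the distribution of $\pi_Q$ must be carried out through the weak-diameter routing backbone without exceeding either the $O(\log n)$-bit bandwidth of $G$ or the $2^{O(\sqrt{\log\log n})}$ round budget; careful pipelining across clusters of weak diameter $g(N)$ is required. A secondary subtlety is that $\pi_Q$ is cluster-local, so when a $d2$-neighbor in an already-processed cluster commits to a color, that color must be transmitted as a full $[\Delta^2]$ identifier and re-mapped into the current cluster's renaming before it can be pruned from $\tilde L_v$.
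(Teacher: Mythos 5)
The statement you were asked to prove, \Cref{thm:diameterColoring}, is a black-box citation of a result from~\cite{BKM19}: a deterministic \CONGEST list-coloring algorithm whose round complexity is bounded in terms of the graph diameter $D$, the subgraph size $N$, the color-space size $C$, the degree $\Delta$, and the input coloring $m$. The paper does not prove it; it is imported from prior work and its proof would require reproducing the machinery of~\cite{BKM19} (a locally-iterative defective-coloring-based scheme combined with global aggregation over the diameter).

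Your proposal does not prove this theorem. It instead sketches how the theorem is applied in the postshattering phase --- compute a distance-$2$ network decomposition via \Cref{thm:netComp}, reduce the color space and identifier space per cluster, and then iterate over color classes invoking \Cref{thm:diameterColoring} --- which is precisely the content of \Cref{lem:postShattering}, a different statement. Indeed your text explicitly says ``processed in parallel by applying \Cref{thm:diameterColoring}'': the very theorem you are supposed to establish appears as a subroutine, so as a proof of \Cref{thm:diameterColoring} the argument is circular. If your target really were \Cref{lem:postShattering}, the sketch is broadly aligned with the paper's approach, though the paper's color-space reduction is \emph{not} the simple ``collect $L_Q$ at a leader and rename'' bijection you describe; the paper derandomizes a hash-based map $f_e:[\Delta^2]\to[N^{10}]$ via the method of conditional expectation (\Cref{lem:colorSpaceReduction}), because the map must be describable in $O(\log\log n)$ bits so that it can be broadcast two hops outside the cluster and applied there to translate the colors of already-colored $d2$-neighbors --- an issue you identify at the end as a ``secondary subtlety'' but do not resolve, and which the naive renaming does not handle within the bandwidth budget.
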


We will need additional reasoning to execute the algorithm of \Cref{thm:diameterColoring} on parts of $G^2[U]$ while the communication network is $G$; for that it is essential that we reduce the color space. The core steps of the postshattering phase are as follows.

\subsubsection{Postshattering: Algorithm Overview}
\begin{enumerate}
    \item \textbf{Network decomposition:} Compute a distance-2 network decomposition $\mathcal{D}$ of connected components in graph $K$ using the algorithm of \Cref{thm:netComp} (or an alternative algorithm). 
    \item \textbf{ID space reduction:} Assign new IDs to vertices in $U$ that are unique within each cluster of $\mathcal{D}$. The size of the IDspace is bounded by the cluster size and by $N$.
    \item \textbf{Colorspace reduction:} Within each cluster $\mathcal{C}$ deterministically  compute a colorspace reduction $f_C:[\Delta^2]\rightarrow \poly N$. $f$ is a \emph{colorspace reduction} for the cluster $\mathcal{C}$ if it injectively maps each color list $L_u$ for $u\in \mathcal{C}$.
    
    \textbf{Core Idea:} A random hash function (from a suitable space of hash functions), in expectation, fails for few vertices of the cluster. We derandomize the process of picking such a random hash function with the method of conditional expectation, similar to \cite{CPS17,DKM19, BKM19}.
    \label{step:colorSpacereduction}
    \item \textbf{Final $(deg+1)$-list coloring:} Iterate through the color classes of the network decomposition $\mathcal{D}$ and run the $(deg+1)$-list coloring algorithm of \Cref{thm:diameterColoring} on each cluster.
    Care is needed when refining the lists, i.e., when deleting colors of $d2$-neighbors of previously colored clusters.
    \label{step:listColorComponents}
 \end{enumerate}

\subsubsection{Step 1-2: Network Decomposition, New IDspace}
\label{ssec:NDIDSpace}
We compute a distance-2 network decomposition $\mathcal{D}$ of (the connected components of) $K$ using the algorithm of \Cref{thm:netComp}. \Cref{thm:netComp} can deal with the original $O(\log n)$-bit identifiers and the runtime is upper bounded by $d=2^{O(\sqrt{\log N})}$. We obtain clusters of (weak) diameter $O(d)$ and due to the theorem statement we can simulate one communication round within the cluster, if only clusters with the same color communicate, in $d$ rounds in $G$. Further, the clusters are colored with $d$ colors. 
We assign unique IDs inside each cluster by building a BFS tree inside each cluster (iterating through the color classes of the decomposition $\mathcal{D}$ and handling clusters with the same color in parallel), aggregating the number of nodes of $U$ inside each subtree, and in a convergecast splitting the IDspace $N$ accordingly to the subtrees. The BFS tree might also contain edges and vertices outside of the cluster (see the comment on the additional communication structure after \Cref{def:decomposition}; but only vertices in $U$ are assigned new IDs.
Note that these IDs inside each cluster in particular form a coloring (that we will use for tiebreaking) of the vertices inside a cluster with $N$ colors.

\subsubsection{Step \ref{step:colorSpacereduction}: Color Space Reduction }
\label{ssec:colorSpace}
Throughout this section we fix a cluster $\mathcal{C}$ of the network decomposition. Given, such a cluster we desire to deterministically compute  a \emph{colorspace reduction} $f:[\Delta^2]\rightarrow N^{10}$  of the vertices in the cluster, that is, $f$ is injective on each color list $L_u$ of a vertex $u\in \mathcal{C}$.  
We prove the following result. 
\begin{lemma}[Deterministic Colorspace Reduction]
\label{lem:colorSpaceReduction}
Consider one cluster $\mathcal{C}$ of the network decomposition and let $L_u$ be the list of vertex $u\in \mathcal{C}$ of size $L\leq N$. 
There is a deterministic $2^{O(\sqrt{\log N})}$ round algorithm that computes a colorspace reduction $f: [\Delta^2]\rightarrow [N^{10}]$ such that $|f(L_u)|=|L_u|$ for all $u\in C$.

The colorspace reduction $f$ can be described with $O(\log\log n)$ bits. 
\end{lemma}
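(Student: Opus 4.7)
The plan is to take $f$ from a short-seed almost-universal hash family for which a random choice is good in expectation on the cluster, and then derandomize the seed via the method of conditional expectations aggregated inside $\mathcal{C}$.

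Concretely, I would fix a prime $q\in[N^{10},2N^{10}]$, write each color $c\in[\Delta^2]$ in base $q$ as $c=\sum_{i=0}^{k-1}c_i q^i$ with $k=O(\log n/\log N)$, and consider the polynomial-evaluation family
\[
  h_a(c)=\sum_{i=0}^{k-1} c_i\,a^i\bmod q,\qquad a\in\mathbb{F}_q,
\]
whose seed has length $\lceil\log q\rceil=O(\log N)=O(\log\log n)$. For any fixed distinct $c_1,c_2\in[\Delta^2]$ the polynomial $P_{c_1,c_2}(z)=\sum_i(c_{1,i}-c_{2,i})z^i\in\mathbb{F}_q[z]$ is nonzero of degree $<k$, so Schwartz--Zippel gives $\Pr_a[h_a(c_1)=h_a(c_2)]\le (k-1)/q=O(\log n/(N^{10}\log N))$. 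Summing over the at most $\binom{L}{2}\le N^2/2$ pairs per list and the $|\mathcal{C}|\le N$ vertices, the expected number of colliding pairs is $O(\log n/N^7)\ll 1$, so a good seed $a^\ast$ exists.

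To locate $a^\ast$ I would fix the $O(\log N)$ bits of $a$ in most-significant-first order, greedily minimizing the pessimistic estimator
\[
  \Phi(\sigma)=\sum_{u\in\mathcal{C}}\ \sum_{\{c_1,c_2\}\subseteq L_u}\Pr\!\bigl[h_a(c_1)=h_a(c_2)\mid\text{top bits of }a\text{ equal }\sigma\bigr].
\]
Each per-vertex summand depends only on $L_u$ and $\sigma$: vertex $u$ computes the at most $k-1$ roots of each $P_{c_1,c_2}$ (in local time $\poly(k,\log q)$) and counts how many fall into the sub-interval of $\mathbb{F}_q$ compatible with $\sigma$. At each of the $O(\log N)$ steps, a single convergecast over the CONGEST-routable backbone of $\mathcal{C}$ returns both candidate values of $\Phi$ to a root, which broadcasts back the better bit. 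Each round-trip costs $O(d)$ rounds in $G$ with $d=2^{O(\sqrt{\log N})}$, so the whole derandomization finishes in $O(d\log N)=2^{O(\sqrt{\log N})}$ rounds, and the output $f=h_{a^\ast}$ is described by the $O(\log\log n)$-bit seed $a^\ast$.

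The main obstacle is containing the per-step aggregation within \CONGEST bandwidth on $G$ (the cluster lives in $K^2$, not $G$). Each per-vertex summand is bounded by $\binom{L}{2}\cdot k=\poly(\log n)$ and fits in $O(\log\log n)$ bits, and the CONGEST-routable backbone of \Cref{thm:netComp} simulates one round of $K^2$ in $O(d)$ rounds of $G$, which is already absorbed into the $O(d)$ convergecast cost. A secondary subtlety is that the compatibility set for $\sigma$ must be an interval of $\mathbb{F}_q$ under a fixed ordering, so that counting roots inside it is elementary; committing to most-significant-first bit order achieves this. Beyond these bookkeeping points, the construction is the standard conditional-expectation derandomization of \cite{CPS17,DKM19,BKM19}.
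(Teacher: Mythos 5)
Your proposal is correct and follows essentially the same approach as the paper: both use a polynomial-evaluation hash family over a prime field of size $\Theta(N^{10})$, show the expected number of per-vertex collisions is $\ll 1$ by a Schwartz--Zippel argument, and then derandomize the $O(\log N)$-bit seed bit-by-bit via conditional expectations aggregated over the cluster's CONGEST-routable backbone in $O(2^{O(\sqrt{\log N})}\log N)$ rounds. The only cosmetic difference is the concrete polynomial encoding: you write colors in base $q$ to get polynomials of degree $O(\log n/\log N)$, while the paper fixes an arbitrary injective map $\psi:[\Delta^2]\to\mathcal{P}_p^d$ with $d=N^5$ and evaluates $\psi(x)$ at a random point; both give collision probability small enough for the union bound, and your choice is in fact the tighter and more explicit one. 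One small bookkeeping point you mention but the paper handles slightly differently: the paper introduces an auxiliary indicator $Y$ for the event that the $\lceil\log_2 p\rceil$-bit seed exceeds $p$, folding it into $\Phi$, which is cleaner than having to reason about truncated sub-intervals of $\mathbb{F}_q$ near the top of the range when fixing bits.
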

To compute a colorspace reduction for all clusters we iterate through the $2^{O(\sqrt{\log N})}$ color classes of the decomposition $\mathcal{D}$ and apply \Cref{lem:colorSpaceReduction} in parallel to all clusters with the same color. The runtime is bounded by $2^{O(\sqrt{\log N})}\cdot 2^{O(\sqrt{\log N})}=2^{O(\sqrt{\log N})}$.

\medskip

Let $N^{10}/2< p < N^{10}$ be a fixed prime which exists due to Bertrand's postulate. We next, define a colorspace reduction $f_e$ for each element  $e\in \mathbb{F}_p$.
Fix $d=N^5$, let $\mathcal{P}_p^d$ be the space of all polynomials over $\mathbb{F}$ of degree $d$ and fix a globally known injective map $\psi: [\Delta^2]\rightarrow \mathcal{P}_p^d$ which exists as $|\mathcal{P}_p^d|=p^{d+1}\geq \Delta^2+1$ ($\psi$ assigns each input color a polynomial). 
Given an element $e\in \mathbb{F}_p$ we define the map $f_e:[\Delta^2]\rightarrow \mathbb{F}_p, x\mapsto (\psi(x))(e)$, that is, color $x$ is first mapped to the polynomial $\psi(x)$ which is then evaluated at position $e$.

We now investigate how 'likely' $f_e$ is a colorspace reduction for a cluster if the element $e\in \mathbb{F}_p$ is chosen uniformly at random. For that purpose, let
 $X_u$ be the random variable that equals $0$ if $|f_e(L_u)|=|L_u|$ and $1$ otherwise.
\begin{lemma}
\label{lem:randReduction}
If $e\in \mathbb{F}_p$ is chosen uniformly at random we have $E\big[\sum_{u\in  \mathcal{C}} X_u\big]\leq 1/N^2$.
\end{lemma}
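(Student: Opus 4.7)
The plan is to apply the Schwartz–Zippel-style argument for polynomial identity testing. Observe that $X_u = 1$ exactly when there exist two distinct colors $x, y \in L_u$ with $f_e(x) = f_e(y)$, i.e., $(\psi(x))(e) - (\psi(y))(e) = 0$. Since $\psi$ is globally fixed and injective, $\psi(x) - \psi(y)$ is a nonzero polynomial in $\mathcal{P}_p^d$ of degree at most $d = N^5$, and hence has at most $d$ roots in $\mathbb{F}_p$. Therefore, for a uniformly random $e \in \mathbb{F}_p$, the probability that $e$ is a root of $\psi(x) - \psi(y)$ is at most $d/p$.

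Next, I would union-bound over all pairs in $L_u$. Since $|L_u| \le L \le N$,
\[
\Pr[X_u = 1] \;\le\; \binom{|L_u|}{2} \cdot \frac{d}{p} \;\le\; \frac{L^2 d}{2p} \;\le\; \frac{N^2 \cdot N^5}{2 \cdot (N^{10}/2)} \;=\; \frac{1}{N^3},
\]
using $p > N^{10}/2$ from the choice of $p$ via Bertrand's postulate.

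Finally, I would sum over $u \in \mathcal{C}$. The cluster $\mathcal{C}$ is contained in a connected component of $K$, which by the preconditions has at most $N$ vertices, so $|\mathcal{C}| \le N$. By linearity of expectation,
\[
\mathbb{E}\!\left[\sum_{u \in \mathcal{C}} X_u\right] \;=\; \sum_{u \in \mathcal{C}} \Pr[X_u = 1] \;\le\; \frac{|\mathcal{C}|}{N^3} \;\le\; \frac{1}{N^2},
\]
which is the claimed bound. There is no real obstacle here; the only point to double-check is the cluster size bound $|\mathcal{C}| \le N$, which follows immediately from the fact that $\mathcal{D}$ is a network decomposition of (connected components of) $K$ and these components have size at most $N$.
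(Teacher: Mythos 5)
Your proof is correct and follows essentially the same route as the paper: bound the per-pair collision probability by $d/p$ via the degree-$d$ polynomial $\psi(x)-\psi(y)$, union-bound over the $\binom{|L_u|}{2}$ pairs in each list, and sum over the $\le N$ nodes of the cluster by linearity of expectation. If anything your arithmetic is slightly more careful than the paper's (you correctly use $p>N^{10}/2$ rather than treating $d/p$ as exactly $1/N^5$, and you avoid the paper's typo of writing $1-N^3$ for $1-1/N^3$).
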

\begin{proof}
Two distinct colors $x,x'\in [\Delta^2]$ are  mapped to the same element in $\mathbb{F}_p$ if $\psi(x)(e)=\psi(x')(e)$, as $\psi(x)(\cdot)=\psi(x')(\cdot)$ for at most $d$ elements (they are polynomials of degree at most $d< p$ over $\mathbb{F}_p$). Thus the probability for them to map to the same element is upper bounded by $d/p=1/N^5$. With a union bound over all $\binom{|L_u|}{2}$ pairs in the list $L_u$ we obtain
\begin{align}
Pr(X_u=0)\geq 1-\binom{L}{2}(d/p)\geq 1-N^2d/p\geq 1-N^7/N^{10}=1-N^3
\end{align}
Thus we obtain $\E[X_u]=Pr(X_u=1)\leq 1/N^3$ and the claim follows with linearity of expectation and because the cluster size $|\mathcal{C}|$ is bounded by $N$. 
\end{proof}

\begin{proof}[Proof of \Cref{lem:colorSpaceReduction}]
To compute a colorspace reduction for a cluster we use the method of conditional expectation to perform a bitwise derandomization of the following process:
A random bit seed of length $\ell=\lceil\log_2 p\rceil$ is picked uniformly at random. 
If the corresponding value of the seed is  $\geq p$ (interpret the bitstring as an integer represented in base $2$), the process fails, otherwise the process selects an element $e\in \mathbb{F}_p$ and we associate the map $f_e$ with the seed. For technical reasons we introduce a random variable $Y$ which is $0$ if an element is selected, and $1$ otherwise. If an element was selected and $f$ is not a colorspace reduction for $u$  $X_u$ equals $1$, otherwise $X_u$ equals $0$ (in particular in the case that no element was selected).
Let $\Phi=Y+\sum_{u\in \mathcal{C}}{X_u}$. As $\E[Y]\leq 1/2$ and due to \Cref{lem:randReduction} we have  
\begin{align}
    \E[\Phi]=E\big[Y+\sum_{u\in C}X_u\big]\leq 1/2+1/N^2~.
\end{align}
We now use the method of conditional expectation to find a \emph{good seed}, that is, a seed $s$ for which $\E[\Phi\mid \text{ seed fixed to } s]<1$. As there is no randomness involved once the  seed is fixed and the random variables only take  integral values, we obtain $Y=0$ and $X_u=0$ for all $u\in C$. As the process of finding a good seed has moved to the standard repertoire of techniques, e.g., it has been used in \cite{CPS17,BKM19,DKM19,HKM20}, we only sketch it: We iteratively fix the bits of the bitstring beginning with no fixed bit. Assume that bits $1,\ldots,i-1$ are fixed. Setting bit $i$ to $0$ or $1$ splits the remaining probability space into two parts and in one part the expectation of $\Phi$ is at most its expectation on both parts combined. The $i$-th bit is fixed such that its expectation is minimized as follows: Each vertex $u\in \mathcal{C}\cap U$ computes the expectation of $X_u$ in both parts and we aggregate the sum of the expectations at a leader of the cluster; the leader also adds the respective expected values of $Y$. With both sums of expectations at hand the leader can fix the $i$-th bit to the 'better' choice and inform all nodes in the cluster. Due to the properties of the network decomposition the aggregation for one bit of the $\ell=O(\log N)$ bits takes $2^{O(\sqrt{\log N})}$ rounds. Thus, the total runtime is bounded by $\log N\cdot 2^{O(\sqrt{\log N})}=2^{O(\sqrt{\log N})}$.

During this process vertices cannot aggregate the exact expected values of the random variables $X_u$ as the \CONGEST model only allows to send values with a certain precision. However, the precision can be set to be $1/\poly n$ such that the guaranteed expectation of $\Phi$ only increases by an additive $1/\poly n$ in each of the $\ell=O(\log N)$ steps.\footnote{A precision of $1/\poly N$ would be sufficient, but as we have full $O(\log n)$ bits for each message in this aggregation we can also use the better precision of $1/\poly n$.} Thus we are guaranteed that $\E[\Phi \mid \text{seed fixed to }s]$ remains strictly smaller than $1$ and the computed seed is good.
\end{proof}

\subsubsection{Step \ref{step:listColorComponents}: Coloring of Small Components}
We next prove \Cref{lem:postShattering} by iterating through the color classes of the network decomposition and solving the respective $(deg+1)$-list coloring problems, that are obtained by deleting colors from $d2$-neighbors in previously colored clusters from the list. The color space reduction is essential as the runtime of \Cref{thm:diameterColoring} has a $\log C$ factor, which implies a $\Theta(\log \Delta)$ factor in the runtime if we do not reduce the color space.
\begin{proof}[Proof of \Cref{lem:postShattering}]
First, we compute a network decomposition $\mathcal{D}$ of $K$ as detailed in \Cref{ssec:NDIDSpace} and compute a new IDspace of size $2^{O(\log N)}$ for the vertices in $U$ of each cluster; we use this IDspace as an input coloring when applying \Cref{thm:diameterColoring}. First note, that after this IDspace reduction nodes in $U\cap\mathcal{C}$ can in $O(\log \log n)$ rounds learn about their neighbors in the graph $G^2[U\cap \mathcal{C}]$ as each node is equipped with an ID with  $O(\log (2^{O(\sqrt{\log N})}))=O(\log N)$ bits and each node $s\in S$ has at most $O(\log n)$ $U$-neighbors. Thus, for a node $s\in \mathcal{C}$ to inform a neighbor $u$ of $s$ about all its $d2$-neighbors that can be reached through $s$ takes $O(\log\log n \cdot \log n)$ bits in total, which can be sent in $O(\log \log n)$ rounds.

After we have computed the network decomposition $\mathcal{D}$  we perform a colorspace reduction for each cluster as detailed in \Cref{ssec:colorSpace}, in particular, we apply \Cref{lem:colorSpaceReduction} to each cluster. 

To finally list-color the vertices in $U$ we, again, iterate through the $2^{O(\sqrt{\log N})}$ color classes of the network decomposition. In iteration $i$ we color all vertices in clusters of color $i$ and distinct clusters with color $i$ are handled in parallel. 

\textbf{We now detail on iteration $i$ in one cluster $\mathcal{C}$:}
First, vertices refine their list by erasing colors from their list that are used by other $d2$-neighbors that have been colored in iterations $1,\ldots, i-1$: Vertices in $U\cap\mathcal{C}$ broadcast  the cluster's colorspace reduction $f$ for $2$ hops outside the cluster. Each vertex who got colored in iteration $1,\ldots,i-1$ with some color $c$ and receives $f$ computes $f(c)$ and convergecasts this information back to the cluster. Both steps can be implemented with $O(\log \log n)$ overhead as $f$ can be described with $O(\log \log n)$ bits and a vertex has to forward at most $O(\Deltahat)=O(\log n)$ messages of $O(\log\log n)$ size, a total of $O(\log\log n\cdot \log n)$ bits which can be sent in $O(\log\log n)$ rounds. A vertex that receives $f(c)$ removes the value from its list $f(L_u)$. Note, that it might be that a vertex $u$ receives a value $f(c)$ but $c$ was not in $u$'s original list $L_u$; however, removing $f(c)$ from $f(L_u)$ does not hurt as the remaining list size of $u$ remains larger than its uncolored degree (in the cluster). After, each node has refined its list we apply \Cref{thm:diameterColoring} on $G^2[U\cap \mathcal{C}]$. One step of this algorithm can be simulated in $G$ in 
$\poly\log\log n\cdot 2^{O(\sqrt{\log N})}$ rounds, where 
the $2^{O(\sqrt{\log N})}$ term stems from the fact that one round of communication inside the cluster might take $2^{O(\sqrt{\log N})}$ rounds in $G$ due to \Cref{thm:netComp} and the $\poly\log\log n$ term is due to simulating $G^2[U\cap \mathcal{C}]$ in $G[U\cap \mathcal{C}]$, where we use that the message size of the algorithm is bounded by $O(\log\log n)$ bits and an intermediate node has to forward at most $O(\log n)$ messages. The runtime for one cluster can be upper bounded by $\poly\log\log n\cdot 2^{O(\sqrt{\log N})}$ times the following term 
\begin{align}
O\big(D\cdot\log N \log C\cdot (\log\Deltahat+\log m+\log\log C)\big)=O\big(2^{O(\sqrt{\log \log n})}\big)
\end{align}
rounds, where $N=\poly\log n$, $D=2^{O(\sqrt{\log N})}$, $C=\poly N$, $\Deltahat=O(\log n)$ and $m=2^{O(\sqrt{\log N})}$.
Using $N=\poly\log n$  we can bound the total runtime per color class of the network decomposition by $2^{O(\sqrt{\log N})}$ and the total runtime by $2^{O(\sqrt{\log N})}\cdot 2^{O(\sqrt{\log N})}=2^{O(\sqrt{\log \log n})}$.
\end{proof}

\section*{Acknowledgements}
This project was supported by the European Union's Horizon 2020 Research and  Innovation Programme under grant agreement no.~755839 (Yannic Maus) and by the Icelandic Research Fund grant 174484 (Magn\'us M. Halld\'orsson and Alexandre Nolin).
\clearpage
\bibliographystyle{abbrv}
\bibliography{refs}
\appendix

\clearpage
\newcommand{\bud}{\textsc{Buddies}}
\newcommand{\pop}{\textsc{Popular}}
\newcommand{\Hgraph}{H_{2\epsilon}^{\textsc{Pop}}}

\section{Concentration Bounds and Probabilistic Lemmas}
Before we continue with the details of our algorithm we state the following standard Chernoff bound that we utilize frequently in our proofs. 

\begin{proposition}[Chernoff]
Let $X_1, X_2, \ldots, X_n$ be independent Bernoulli trials,
$X = \sum_{i=1}^n X_i$, and $\mu = E[X]$. Then for $\delta>0$,
\begin{align}
\label{eq:chernoff-upper}
\Pr[X \ge (1+\delta)\mu] & 
\le \left(\frac{e^\delta}{(1+\delta)^{1+\delta}}\right)^\mu
\quad\stackrel{(\text{if } \delta\leq 1)}{\le}\quad e^{-\mu \delta^2/3},     \\
\label{eq:chernoff-lower}
\Pr[X \le (1-\delta)\mu] & \le e^{-\mu \delta^2/2}\ . 
\end{align}
\label{P:chernoff}
\end{proposition}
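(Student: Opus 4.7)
The plan is the standard Chernoff exponential moment argument. For the upper tail, I would apply Markov's inequality to $e^{tX}$ with a free parameter $t > 0$:
\[ \Pr[X \ge (1+\delta)\mu] \;=\; \Pr\!\big[e^{tX} \ge e^{t(1+\delta)\mu}\big] \;\le\; e^{-t(1+\delta)\mu}\,\E[e^{tX}]. \]
Independence of the $X_i$ factors the MGF as $\E[e^{tX}] = \prod_i \E[e^{tX_i}]$, and each factor is $1 + p_i(e^t - 1) \le \exp(p_i(e^t - 1))$ using $1+x \le e^x$. Multiplying over $i$ gives $\E[e^{tX}] \le \exp(\mu(e^t - 1))$. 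Setting $t = \ln(1+\delta)$ minimizes the resulting exponent $\mu(e^t - 1 - t(1+\delta))$ and yields exactly $\bigl(e^\delta/(1+\delta)^{1+\delta}\bigr)^\mu$, the first stated inequality.

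To extract the simpler $e^{-\mu\delta^2/3}$ valid for $0<\delta\le 1$, I would replace the exact exponent by the elementary calculus estimate $(1+\delta)\ln(1+\delta) - \delta \ge \delta^2/3$ on $[0,1]$, which one checks by comparing values and derivatives at $0$ and invoking monotonicity. The lower tail is handled symmetrically: apply Markov to $e^{-tX}$ with $t > 0$, obtain $\Pr[X \le (1-\delta)\mu] \le \exp(\mu(e^{-t} - 1 + t(1-\delta)))$, optimize at $t = -\ln(1-\delta) > 0$, and finish with the (slightly sharper) inequality $(1-\delta)\ln(1-\delta) + \delta \ge \delta^2/2$ on $[0,1)$ to get $e^{-\mu\delta^2/2}$.

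Since this is a textbook Chernoff bound I expect no conceptual obstacle. The only mildly delicate step is verifying the two one-variable calculus inequalities that take the exact MGF-optimization output into the clean exponential forms; these are routine but are precisely where the asymmetry between the constants $1/3$ and $1/2$ in the upper and lower tails originates (the lower tail's MGF is better behaved because $e^{-t}$ is convex and bounded by its second-order Taylor expansion from above on $t \ge 0$).
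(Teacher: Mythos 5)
Your proof is the standard exponential-moment (MGF) derivation of the Chernoff bounds and it is correct. The paper states this proposition without proof, treating it as a textbook fact, so there is no alternative argument in the paper to compare against; the route you take — Markov on $e^{\pm tX}$, factor the MGF via independence, bound each factor by $\exp(p_i(e^{\pm t}-1))$, optimize $t$, and clean up with the one-variable inequalities $(1+\delta)\ln(1+\delta)-\delta\ge\delta^2/3$ and $(1-\delta)\ln(1-\delta)+\delta\ge\delta^2/2$ — is exactly the usual one and all the stated steps check out.
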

We also use the following inequalities:
\begin{align}
    (1-1/x)^{x-1} & \ge 1/e, \text{ for any } x > 1. \label{eq:inv-e} \\
    1-x & \ge (1/4)^x, \text{ for any } 0 \le x \le 1/2. \label{E:inv-e2}
\end{align}

\section{Implementation of Almost-Clique Decomposition}
\label{app:acd}

We start with the key definitions from \cite{HSS16}, which we adapt to the distance-2 setting\footnote{The authors of \cite{HSS16} used the term $\epsilon$-sparse for what we term $\epsilon$-friendly and $\epsilon$-dense for what we call $\epsilon$-unfriendly.}.
Recall that nodes are $\epsilon$-similar if they have at least $(1-\epsilon) \Delta^2$ d2-neighbors in common.
Observe that if $u$ and $v$ are $\epsilon$-similar and $v$ and $w$ are $\epsilon'$-similar, then $u$ and $w$ are $\epsilon+\epsilon'$-similar.

\begin{definition}
Two nodes are $\epsilon$-\emph{friends} if they are both d2-neighbors and $\epsilon$-similar.
A node is $\epsilon$-\emph{friendly} if it has at least $(1-\epsilon)\Delta^2$ $\epsilon$-friends, and otherwise $\epsilon$-\emph{unfriendly}. 
\end{definition}

We use the following result from \cite{ACK19} that implies that the unfriendly nodes are easily colored by random guesses.
\begin{proposition}[\cite{ACK19}, Prop.~2.2]
An $\epsilon$-unfriendly node is $\epsilon^2 \Delta^2$-sparse.
\label{P:un-sparse}
\end{proposition}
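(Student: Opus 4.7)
The plan is to translate the shortage of $\epsilon$-friends of $v$ directly into a shortage of edges in $G^2[v]$. Since sparsity is measured against the reference $\binom{\Delta^2}{2}$, I will argue that each ``missing'' $\epsilon$-friend removes roughly $\epsilon\Delta^2$ incidences from $G^2[v]$, and then sum these contributions.

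First I would identify what I'll call the \emph{bad slots}: the $\Delta^2$ potential d2-neighbor positions of $v$ that are not occupied by an $\epsilon$-friend. Because $v$ has fewer than $(1-\epsilon)\Delta^2$ $\epsilon$-friends, there are at least $\epsilon\Delta^2$ bad slots, and each is of one of two types: (i) an absent d2-neighbor, i.e., $|N_{G^2}(v)|$ falls short of $\Delta^2$, or (ii) a d2-neighbor $u\in N_{G^2}(v)$ that fails $\epsilon$-similarity, so $|N_{G^2}(u)\cap N_{G^2}(v)| < (1-\epsilon)\Delta^2$.

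Next I would count missing edges via non-degrees. Extend $N_{G^2}(v)$ by dummy vertices to a reference set $W^*$ of size exactly $\Delta^2$; then
\[
  2\Delta^2\zeta \;=\; 2\!\left(\binom{\Delta^2}{2} - |E(G^2[v])|\right) \;=\; \sum_{u\in W^*}\deg_{\overline{G^2[v]}}(u),
\]
where the non-degrees are taken inside the complete graph on $W^*$. A type-(i) slot is a dummy vertex with non-degree exactly $\Delta^2-1 \ge \epsilon\Delta^2$, while a type-(ii) slot $u$ has non-degree at least $\Delta^2-1-|N_{G^2}(u)\cap N_{G^2}(v)| > \epsilon\Delta^2-1$. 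In either case, each of the at least $\epsilon\Delta^2$ bad slots contributes at least $\epsilon\Delta^2-1$ to the non-degree sum, so
\[
  2\Delta^2\zeta \;\ge\; \epsilon\Delta^2\bigl(\epsilon\Delta^2-1\bigr),
\]
which rearranges to $\zeta \ge \epsilon^2\Delta^2/2 - O(1)$, matching the proposition up to an absolute constant.

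The main obstacle is unifying the two types of bad slots, since absent vertices and $\epsilon$-dissimilar real vertices cause missing edges through different mechanisms; the reference-set trick resolves this by converting both into clean non-degree contributions against $K_{W^*}$. Recovering the exact constant $\epsilon^2\Delta^2$ stated in \cite{ACK19} (as opposed to the $\epsilon^2\Delta^2/2$ from this counting) can be obtained by a slightly sharper accounting that exploits the double-counting of non-edges with both endpoints bad, or by rescaling $\epsilon$ by a constant; neither adjustment affects how the proposition is used in Sec.~\ref{ssec:similarity}.
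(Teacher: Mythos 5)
The paper does not supply a proof of this proposition at all; it imports it verbatim from~\cite{ACK19}. Your argument is essentially the standard one used there: pad $N_{G^2}(v)$ with dummies to a reference set $W^\ast$ of size $\Delta^2$, observe that $\epsilon$-unfriendliness of $v$ forces $>\epsilon\Delta^2$ ``bad slots,'' and that each bad slot $u$ has non-degree $>\epsilon\Delta^2-1$ in the complement on $W^\ast$ (a dummy has non-degree $\Delta^2-1$; a d2-neighbor that is not an $\epsilon$-friend has $|N_{G^2}(u)\cap N_{G^2}(v)|<(1-\epsilon)\Delta^2$, hence $G^2[v]$-degree $<(1-\epsilon)\Delta^2$). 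Summing non-degrees and dividing by $2$ gives roughly $\epsilon^2\Delta^4/2$ non-edges, i.e.\ $\zeta\gtrsim \epsilon^2\Delta^2/2$. So far this is correct and is exactly the~\cite{ACK19} derivation, which is stated there as ``at most $(1-\epsilon^2)\binom{\Delta}{2}$ edges in $N(v)$,'' i.e.\ at least $\epsilon^2\binom{\Delta^2}{2}$ missing edges in the d2 setting, which in the paper's normalization is $\zeta\ge\epsilon^2(\Delta^2-1)/2$.

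The one place you go astray is the closing claim that the remaining factor of~$2$ ``can be obtained by a slightly sharper accounting that exploits the double-counting of non-edges with both endpoints bad.'' That sharpening does not exist: the worst case is exactly when the bad slots are pairwise non-adjacent in $G^2[v]$, so most non-edges incident to bad slots have \emph{both} endpoints bad and the factor-$\tfrac12$ loss from converting non-degree sums to non-edge counts is genuine. Concretely, with $B\approx\epsilon\Delta^2$ bad slots each of non-degree $\approx\epsilon\Delta^2$, all non-edges concentrated among bad slots gives about $\binom{B}{2}\approx\epsilon^2\Delta^4/2$ distinct non-edges, not $\epsilon^2\Delta^4$. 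The resolution is the other option you mention — take the bound with the $1/2$, or rescale $\epsilon$ — not a tighter count. The ``$\epsilon^2\Delta^2$'' in the proposition box is a loose paraphrase of the cited constant; the downstream use (sparsity $\ge\epsilon^2\Delta^2/4$ in Property~1 of Definition~\ref{D:acd}, invoked in the ACD construction in Appendix~\ref{app:acd}) only needs a lower bound of the form $\Omega(\epsilon^2\Delta^2)$, so the discrepancy only perturbs explicit constants such as $c_0$ and $c_{13}$ in Lemma~\ref{O:sparse}, not any asymptotic statement.
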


Let $\Vfr_\epsilon$ be the set of $\epsilon$-friendly nodes and let $H^{\text{HSS}}_\epsilon = (\Vfr_\epsilon, E^{\text{HSS}}_\epsilon)$ be a graph on $\Vfr_\epsilon$ with edges between $\epsilon$-friends.
The following lemma captures the key properties of these graphs.
\begin{lemma}[\cite{HSS16}]
Assume $\epsilon \le 1/5$. Let $C$ be a connected component of $H^{\text{HSS}}_\epsilon$.
Each vertex has at most 
$3\epsilon \Delta^2$ non-neighbors in $C$ (i.e., $|C \setminus N_{G^2}(v)| \le 3 \epsilon \Delta^2$).
Furthermore, the nodes in $C$ are mutually $2\epsilon$-similar. 
\label{L:HSS}
\end{lemma}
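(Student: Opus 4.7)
The strategy is to reduce both claims to the single structural fact that the friend graph $H^{\text{HSS}}_\epsilon$ restricted to $C$ has diameter at most~$2$: mutual $2\epsilon$-similarity follows immediately from transitivity (two nodes sharing a common friend are each $\epsilon$-similar to it, hence $2\epsilon$-similar to each other), and the non-neighbor bound follows from a double count of friend edges.

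To establish diameter~$\le 2$, take any $v, w \in C$ with a shortest $v$-$w$ path $v = v_0, \ldots, v_k = w$ in $H^{\text{HSS}}_\epsilon$. Transitivity of similarity along the path gives that $v_0$ and $v_k$ are $k\epsilon$-similar, so they share at least $(1-k\epsilon)\Delta^2$ common d2-neighbors. Since $v_0$ (resp.\ $v_k$) is $\epsilon$-friendly, at most $\epsilon\Delta^2$ of its d2-neighbors fail to be $\epsilon$-similar to it; removing the offenders on both sides leaves at least $(1-(k+2)\epsilon)\Delta^2$ common $\epsilon$-friends of $v_0$ and $v_k$. For $\epsilon \le 1/5$ this count is strictly positive whenever $k \le 2$ and non-negative at $k = 3$. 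The existence of a common friend at friend-distance $3$ would contradict the shortest-path assumption, forcing $k \le 2$; longer paths are ruled out by applying the same argument to any length-$3$ subpath.

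For the non-neighbor bound, set $B := C \setminus (N_{G^2}(v) \cup \{v\})$. For every $w \in B$, $2\epsilon$-similarity with $v$ yields $|N_{G^2}(w) \cap N_{G^2}(v)| \ge (1-2\epsilon)\Delta^2$; combined with $w$ having $\ge (1-\epsilon)\Delta^2$ friends (all d2-neighbors of $w$), inclusion--exclusion produces at least $(1-3\epsilon)\Delta^2$ friends of $w$ inside $N_{G^2}(v)$. I would then double count the friend edges between $B$ and $N_{G^2}(v)$: the lower bound is $|B|\cdot(1-3\epsilon)\Delta^2$, while on the upper side each of the $\ge (1-\epsilon)\Delta^2$ friends $x$ of $v$ contributes at most $\epsilon\Delta^2 - 1$ friend-edges into $B$ (using $|N_{G^2}(x) \setminus N_{G^2}(v) \setminus \{v\}| \le \epsilon\Delta^2 - 1$, since $v \in N_{G^2}(x)$), while the at most $\epsilon\Delta^2$ non-friend d2-neighbors of $v$ contribute at most $\Delta^2$ each. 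Rearranging gives $|B| \le O(\epsilon)\Delta^2$; recovering the stated $3\epsilon$ constant would require sharpening this count, e.g.\ by observing that a non-friend d2-neighbor of $v$ still shares many d2-neighbors with $v$ and so cannot contribute the full $\Delta^2$ term.

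The main obstacle is the borderline case $\epsilon = 1/5$ in the diameter step, where the common-friend count touches zero precisely at $k = 3$; handling it cleanly requires either a strict-inequality refinement or an inductive scheme in which the already-established ``any two vertices at friend-distance $\le 2$ are $2\epsilon$-similar'' improves the $k\epsilon$ bound back to $3\epsilon$ at each inductive step, keeping the common-friend count above zero. Tightening the double-counting constant to match the stated $3\epsilon$ in the non-neighbor claim is a secondary technical point.
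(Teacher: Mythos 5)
The paper cites this lemma from \cite{HSS16} without providing its own proof, so I evaluate your proposal on its own merits; the skeleton (common friends, transitivity, double-counting) is the right one, but two steps break. First, the diameter reduction has a type mismatch: the common $\epsilon$-friend $z$ of $v_0$ and $v_3$ that you extract is d2-adjacent and $\epsilon$-similar to each of them, but nothing forces $z$ to itself be $\epsilon$-friendly, so $z$ need not be a vertex of $H^{\text{HSS}}_\epsilon$, and $v_0$--$z$--$v_3$ is not a path in $H^{\text{HSS}}_\epsilon$. Hence you get no contradiction with $d_{H^{\text{HSS}}_\epsilon}(v_0,v_3)=3$, and ``diameter at most $2$'' does not follow. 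The repair is to drop the diameter framing: a common $\epsilon$-friend already gives $2\epsilon$-similarity of $v_0,v_3$ by transitivity, which is all that is needed; run the induction on ``nodes of $C$ at $H^{\text{HSS}}_\epsilon$-distance at most $k$ are $2\epsilon$-similar,'' never on the diameter of $H^{\text{HSS}}_\epsilon$. With that reformulation the $\epsilon=1/5$ boundary still remains: the chain ($v_0,v_{k-1}$ are $2\epsilon$-similar) plus ($v_{k-1},v_k$ are $\epsilon$-similar) yields only $3\epsilon$-similarity of $v_0,v_k$, hence at least $(1-5\epsilon)\Delta^2\geq 0$ common friends, vacuous at $\epsilon=1/5$. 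You correctly flag this, but the inductive scheme you propose does not close it; the argument as it stands needs strict inequality $\epsilon<1/5$ (immaterial here, as the paper sets $\epsilon=1/60$).

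Second, and more substantively, the friend-edge double count cannot reach the stated $3\epsilon$ constant, and the fix you sketch (non-friend d2-neighbors of $v$ share many d2-neighbors with $v$) is backwards: non-friends are by definition d2-neighbors that are \emph{not} $\epsilon$-similar to $v$, so they may share arbitrarily few. The correct fix changes what is counted. Count $G^2$-adjacency pairs between $B := C\setminus(N_{G^2}(v)\cup\{v\})$ and $N_{G^2}(v)$: each $w\in B$ contributes at least $(1-2\epsilon)\Delta^2$ (from $2\epsilon$-similarity alone, rather than the weaker $(1-3\epsilon)\Delta^2$ you get by additionally demanding a friend relation), while each $\epsilon$-friend $x$ of $v$ contributes at most $|N_{G^2}(x)\setminus N_{G^2}(v)|\leq\epsilon\Delta^2$ and each of the at most $\epsilon\Delta^2$ non-friend d2-neighbors of $v$ contributes at most $|B|$. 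Rearranging (WLOG $|B|\geq\epsilon\Delta^2$, else we are done) gives $|B|\leq\frac{(1-\epsilon)\epsilon}{1-3\epsilon}\Delta^2\leq 3\epsilon\Delta^2$ for all $\epsilon\leq 1/4$, which covers the stated range. With friend edges the analogous calculation gives roughly $\frac{(1-\epsilon)\epsilon}{1-4\epsilon}\Delta^2$, which already exceeds $3\epsilon\Delta^2$ around $\epsilon=1/8$, so the claimed constant does not come out from your count.
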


It is easy to verify the friendship and friendliness properties in the {\LOCAL} model by examining the whole subgraph within distance 4 from a given node. In {\CONGEST}, however, we must be more circumspect. Instead, we determine these properties only approximately.

\begin{definition}
The predicate \bud$_\epsilon(u,v)$ is true if the nodes are $\epsilon$-friends, and false if they are not $2\epsilon$-friends. When neither case applies, the predicate can return either value. If the predicate holds true, we say that the nodes are $\epsilon$-buddies. 

The predicate \pop$_\epsilon(v)$ is true if $v$ has at least $(1-\epsilon)\Delta^2$ $\epsilon$-buddies, and false if it has fewer than $(1-2\epsilon)\Delta^2$ $\epsilon$-buddies. A node $v$ is \emph{$\epsilon$-popular} if \pop$_\epsilon(v)$ holds, and \emph{$\epsilon$-unpopular} otherwise.
\end{definition}

Observe that an $\epsilon$-unpopular node is also $\epsilon$-unfriendly, while an $\epsilon$-popular node is guaranteed to be $2\epsilon$-friendly.

We adapt a method of \cite{ACK19} from the streaming setting to implement the above predicates efficiently in distance-2 {\CONGEST} setting.

\begin{lemma}
There is a $O(\log n)$ round randomized {\CONGEST} algorithm that implements the \bud$_\epsilon$ and \pop$_\epsilon$ predicates, for any fixed $\epsilon > 0$, w.h.p.
Implementing these predicates means that each node knows if it is $\epsilon$-popular, and it knows which pairs $(u,v)$ of its immediate neighbors satisfy \bud$_\epsilon(u,v)$.
\label{L:buddies}
\end{lemma}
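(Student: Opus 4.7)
The plan is to detect both predicates by randomized sampling, adapting the Assadi--Lee--Khanna streaming sketch to the distance-$2$ \CONGEST setting. Sample a set $S\subseteq V$ by including each node independently with probability $p=c_{10}\log n/\Delta^2$ for a sufficiently large constant $c_{10}=c_{10}(\epsilon)$, so that for each node $v$ the set $S_v:=S\cap N_{G^2}(v)$ has expected size $\Theta(\log n)$ and concentrates. The key identity is that $|S_u\cap S_v|$ is a sum of independent Bernoullis with mean $p\cdot|N_{G^2}(u)\cap N_{G^2}(v)|$; Chernoff together with a union bound over the $O(n\Delta^2)$ d2-pairs will give that a single threshold $\tau\approx(1-\tfrac{3}{2}\epsilon)\,p\Delta^2$ separates the two similarity regimes w.h.p. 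Each intermediate node $w$ will then simply declare $\bud_\epsilon(u_1,u_2)$ for each pair $(u_1,u_2)\in N_G(w)^2$ by comparing $|S_{u_1}\cap S_{u_2}|$ to $\tau$.

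I will implement this as four $O(\log n)$-round \CONGEST phases. (i) Hash each node's ID into $[\Delta^4]$, obtaining $O(\log\Delta)$-bit labels that, by Lemma~\ref{lem:node-hashing}, are essentially unique in every d2-neighborhood. (ii) Each $s\in S$ sends $h_s$ to its immediate neighbors, and each intermediate $w$ then forwards the collected set $\{h_u:u\in S\cap N_G(w)\}$, of size $O(\log n)$ w.h.p., along each incident edge; the per-edge cost is $O(\log n\cdot\log\Delta)$ bits, hence $O(\log n)$ rounds. After this step every $v$ knows $S_v$. (iii) Each $v$ transmits $S_v$ over each of its immediate edges, again in $O(\log n)$ rounds. (iv) Locally at each $w$, compute $|S_{u_1}\cap S_{u_2}|$ and decide $\bud_\epsilon(u_1,u_2)$ for every pair in $N_G(w)^2$. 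Viewing each node as the intermediate of the pairs among its immediate neighbors, this already gives the second conclusion of the lemma.

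For $\pop_\epsilon$ I will reuse the same sample. After phase (iv) each intermediate $w$ knows, for every $u\in S\cap N_G(w)$ and $v\in N_G(w)$, the bit $\bud_\epsilon(u,v)$; $w$ sends to each $v$ the list $\{(h_u,\bud_\epsilon(u,v)):u\in S\cap N_G(w)\}$ of $O(\log n)$ labelled hashes, again in $O(\log n)$ rounds. Node $v$ unions the reports over all its immediate neighbors and de-duplicates by hash---this is consistent because $\bud_\epsilon(u,v)$ is a deterministic function of $S_u,S_v$ alone, independent of which intermediate evaluates it. Scaling the resulting count by $1/p$ gives a Chernoff-concentrated estimate of the true number of $\epsilon$-buddies of $v$, which with probability $1-1/\poly(n)$ crosses any threshold strictly between $(1-2\epsilon)\Delta^2$ and $(1-\epsilon)\Delta^2$, so $v$ decides $\pop_\epsilon(v)$ correctly.

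The hard part is staying within \CONGEST bandwidth: a naive implementation would have each intermediate forward $\Theta(\log n)$ full $O(\log n)$-bit IDs per incident edge, blowing up each phase to $\Theta(\log^2 n)$ bits. This is avoided only because (a) the IDs are compressed to $O(\log\Delta)$ bits via Lemma~\ref{lem:node-hashing}, so that $\Theta(\log n/\log\Delta)$ hashes can be packed per round, and (b) the sample size $p\Delta^2=\Theta(\log n)$ is chosen as the \emph{minimum} that supports Chernoff concentration across the $\epsilon/2\epsilon$ gap. The two subsidiary facts to verify are the hash-collision bound of Lemma~\ref{lem:node-hashing} (so that intersections measured through hashes match the intended set intersections) and the consistency of de-duplication across intermediates (already noted above); both are straightforward.
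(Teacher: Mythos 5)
Your sampling-and-intersection-counting scheme is the same as the paper's: sample $S$ at rate $p=\Theta(\log n/\Delta^2)$, have each intermediate $w$ learn $S_v$ for each of its immediate neighbors $v$, threshold $|S_u\cap S_v|$ at roughly $(1-\tfrac32\epsilon)p\Delta^2$ for \bud, then count sampled buddies for \pop. Two remarks, one substantive.

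The substantive point is your reuse of $S$ for \pop. The paper draws a \emph{fresh} sample $T$, so that conditioned on $S$ (hence on the realized buddy set $B=\{u:\bud_\epsilon(u,v)\}$), the count $|T_v\cap B|$ is a genuine Binomial and Chernoff applies directly. In your version $q_v=|S_v\cap B|$ is a function of $S$ through \emph{both} the sample $S_v$ and the buddy set $B$ itself: the indicator $[u\in S]$ is independent of $[\bud_\epsilon(u,v)]$ term by term (since $u\notin N_{G^2}(u)\cap N_{G^2}(v)$), but the terms are correlated with each other because $[u\in S]$ can affect $[\bud_\epsilon(u',v)]$ whenever $u\in N_{G^2}(u')\cap N_{G^2}(v)$. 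So ``Chernoff-concentrated estimate of the true number of $\epsilon$-buddies'' is not immediate. It can be rescued: conditioning on correct \bud\ determination you have $F\subseteq B\subseteq F'$ for the deterministic sets $F$ of $\epsilon$-friends and $F'$ of $2\epsilon$-friends, hence $|S\cap F|\le q_v\le|S\cap F'|$, and Chernoff applies cleanly to $|S\cap F|$ and $|S\cap F'|$; this gives exactly the guarantees needed downstream ($\epsilon$-friendly $\Rightarrow$ popular, not $2\epsilon$-friendly $\Rightarrow$ unpopular). But you should either include this sandwiching argument or, more simply, follow the paper and draw a fresh sample $T$.

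The minor point: hashing the IDs to $O(\log\Delta)$ bits is not needed for the $O(\log n)$ round bound you are proving. Your claim that the naive implementation ``blows up each phase to $\Theta(\log^2 n)$ bits'' per edge is true, but $\Theta(\log^2 n)$ bits per edge is exactly $\Theta(\log n)$ \CONGEST rounds, which is the stated target. The paper's own proof of this lemma simply pipelines the $O(\log n)$ full IDs in $O(\log n)$ rounds; the hashing trick only enters the paper in \Cref{ssec:hashing} where the ACD is recomputed within an $O(\log\Delta)$-round budget. (Also, you should note the corner case $\Delta^2=O(\log n)$, where $p\ge 1$ and the algorithm degenerates to learning the full $2$-neighborhood by flooding, which the paper handles explicitly at the start of the proof.)
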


\begin{proof}
When $\Delta^2 = O(\log n)$, the nodes can learn of all d2-neighbors and their d2-neighbors in $\Delta^2 = O(\log n)$ rounds. This allows them to compute exactly the friends and friendliness relations.
We focus from now on the case that $\Delta^2 \ge N$, where $N = c_{10}\log n$, for appropriate constant $c_{10}$.

We first implement \bud$_\epsilon$. 
Each node chooses independently with probability $p = c_{10}(\log n)/\Delta^2$ whether to enter a set $S$. Nodes in $S$ inform their d2-neighbors of that fact. For each node $v$, let $S_v$ be the set of d2-neighbors in $S$. W.h.p., $|S_v| = O(\log n)$ (by Prop.~\ref{P:chernoff}). Each node $v$ informs its immediate neighbors of $S_v$, by pipelining in $O(\log n)$ steps.
Note that a node $w$ can now determine the intersection $S_{vu} = S_v \cap S_u$, for its immediate neighbors $v$ and $u$.
Now $w$ determines that $u$ and $v$ are $\epsilon$-buddies iff 
$|S_{vu}| \ge (1-\sqrt{2}\epsilon) N$. 

Let $I_{uv} = G^2[u]\cap G^2[v]$ be the intersection of the d2-neighborhoods of $u$ and $v$. 
For each $w \in I_{uv}$, let $X_w$ be the indicator r.v.\ that $w$ is
selected into the random sample $S$ and let $X = \sum_{w \in I_{uv}} X_w = |S_{uv}|$. Note that $\mu = E[X] = N/\Delta^2 \cdot |I_{uv}|$.

First, suppose $|I_{uv}| \ge (1-\epsilon) \Delta^2$. Then, $\mu \ge (1-\epsilon) N$.
Observe that $(1-\sqrt{2}\epsilon)N \le ((1-2\epsilon)/(1-\sqrt{2}\epsilon))\mu \le (1-(2-\sqrt{2})\epsilon)\mu$,
Then, setting $c_{10} \le 10/((2-\sqrt{2})^2\epsilon^2 (1-\epsilon)$, we have that
the probability that the algorithm incorrectly identifies $u$ and $v$ as $\epsilon$-non-buddies is at most
  \[ \Pr[|S_{vu}| \le (1-\sqrt{2}\epsilon) N] = \Pr[X \le (1-(2-\sqrt{2})\epsilon)\mu] \le e^{-(2-\sqrt{2})^2\epsilon^2/2 \cdot \mu} \le e^{5\ln n} = n^{-5} \ , \]
using (\ref{eq:chernoff-lower}).

Second, let $Q = |I_{uv}|/\Delta^2$ and note that $\mu = Q \cdot N$.
Suppose $|I_{uv}| \le (1-2\epsilon) \Delta^2$, i.e., $Q \le (1-2\epsilon)$.
Let $\delta = (1-\sqrt{2}\epsilon)/Q-1 \ge (1-\sqrt{2}\epsilon)/(1-2\epsilon)-1 \ge (2-\sqrt{2})\epsilon$.
Note that $4/3 \mu \ge (1+\delta) \mu = (1-\sqrt{2}\epsilon)N$.
Then, setting $c_{10} = 20/((1-\sqrt{2}\epsilon)(2-\sqrt{2})^2\epsilon^2)$ and applying (\ref{eq:chernoff-upper}), we have that
the probability that the algorithm incorrectly identifies $u$ and $v$ as $\epsilon$-buddies is at most
  \begin{align*} \Pr[|S_{vu}| \ge (1-\sqrt{2}\epsilon) N] &= \Pr[X \ge (1+\delta)\mu] 
    \le e^{-\delta^2 \mu/3} \\& \le e^{-\delta^2/4 \cdot (1+\delta)\mu} = e^{-\delta^2/4 \cdot (1-\sqrt{2}\epsilon)N} \\
&\le e^{-5\ln n} = n^{-5} \ . \end{align*}

Implementing \pop$_\epsilon$ is nearly identical. The nodes opt into a random sample $T$ with probability $p = N/\Delta^2$. Each node $v$ then gathers $T_v = T \cap N_{G^2}(v)$ from its d2-neighbors, which is of size $O(\log n)$, w.h.p. They distribute their set to their immediate neighbors, who inform them which of the elements in $T_v$ are $\epsilon$-buddies and compute $q_v = |T_v \cap \{u : \bud_\epsilon(u,v)\}|$, the number of $\epsilon$-buddies in the random sample. Node $v$ then determines that it is $\epsilon$-popular iff $q_v \ge (1-\sqrt{2}\epsilon)N$. The correctness is identical to that of \bud$_\epsilon$.
\end{proof}


\noindent\textbf{Lemma \ref{lem:acd}.} \emph{There is a $O(\log n)$ round {\CONGEST} algorithm to form an almost-clique decomposition, for any fixed $\epsilon > 0$. Afterwards, each node knows its component number.}

\begin{proof}
We first implement \bud$_{2\epsilon}$ and \pop$_{2\epsilon}$
to obtain the graph $\Hgraph$ on the $2\epsilon$-popular nodes. We also identify the $\epsilon/2$-popular nodes by implementing \bud$_{\epsilon/2}$ and \pop$_{\epsilon/2}$, 
Let $C_1, C_2, \ldots, C_k$ be the components of $\Hgraph$ that contain an $\epsilon/2$-popular node, and let $V_\ast = V \setminus \bigcup_i C_i$ be the remaining nodes (both those outside $V(H^{\textsc{pop}})$ and those in components of $H^{\textsc{pop}}$ that don't contain an $\epsilon/2$-popular node). Let $\hat{C}_i = C_i \cup \{u \in V_\ast : \exists v \in C_i, \text{\textsc{Buddy}}_{\epsilon/2}(u,v) \}$ be the component extending $C_i$ with all the $\epsilon/2$-buddies (in $V_\ast$) of nodes in $C_i$.
We claim that this yields the desired almost-clique decomposition with parameter $\epsilon$. 

All nodes in $V_\ast$ are $\epsilon/2$-unfriendly, since by definition no node in $V_\ast$ is $\epsilon/2$-popular. This implies Property 1 of Def.~\ref{D:acd}, by Prop.~\ref{P:un-sparse}.
Property 2(e) follows immediately from the construction of $\hat{C_i}$.
We proceed with the rest of Property 2.

Consider a connected component $C_i$ and let $v$ be a $\epsilon/2$-popular node in $C_i$. As remarked earlier, $v$ is then $\epsilon$-friendly.
Let $S_v$ be the set of $2\epsilon$-friends of $v$ and note that 
$|S_v| \ge (1-\epsilon)\Delta^2$ (since it has at least that many $\epsilon$-friends). Any vertex $w$ in $S_v$ has at least $(1-\epsilon)\Delta^2$ common d2-neighbors with $v$, and of those, at most $\epsilon\Delta^2$ are not in $S_v$ (since fewer than $\epsilon \Delta^2$ of $v$'s d2-neighbors are not its $\epsilon$-friends). Thus, $w$ has at least $(1-2\epsilon)\Delta^2$ d2-neighbors in $S_v$. Further, any pair of nodes in $S_v$ is $2\epsilon$-similar, since both are $\epsilon$-friends of $v$.
Hence, each node in $S_v$ is $2\epsilon$-friendly and belongs to $C_i$. It follows that $|C_i| \ge |S_v| \ge (1-2\epsilon)\Delta^2$, establishing Property 2(a).

Observe that $C_i$ is connected in $\Hgraph$ by $2\epsilon$-buddy relationships.
Since $2\epsilon$-buddies are also $4\epsilon$-friends, $C_i$ is also connected in $H_{4\epsilon}^{HSS}$. Thus, by Lemma \ref{L:HSS}, nodes in $C_i$ are mutually $8\epsilon$-similar. Since each node in $\hat{C}_i \setminus C_i$ is $\epsilon$-similar to a node in $C_i$, it follows by transitivity that nodes in $\hat{C}_i$ are mutually $10\epsilon$-similar, establishing Property 2(b).

It also follows from Lemma \ref{L:HSS} that nodes in $C_i$ have at most $12\epsilon\Delta^2$ non-neighbors in $C_i$. So, $C_i$ contains at most $(1+12\epsilon)\Delta^2$ nodes. Each node $u$ in $C_i$ has at least $(1-9\epsilon)\Delta^2$ d2-neighbors in $C_i$, since $u$ is $8\epsilon$-similar to the $\epsilon/2$-popular node $v$ and $v$ has at least $(1-\epsilon)\Delta^2$ d2-neighbors in $C_i$. 
Each node in $\hat{C_i}$ therefore has at least $(1-10\epsilon)\Delta^2$ d2-neighbors in $C_i$, since it is $\epsilon$-similar to a node in $C_i$, establishing Property 2(d). 
Furthermore, there are at most $9\epsilon\Delta^2 |C_i| \le 9\epsilon (1+12\epsilon)\Delta^4$ 2-paths with one endpoint in $C_i$ and the other in $\hat{C_i}$. Hence, 
\[ |\hat{C_i}\setminus C_i| \le \frac{9\epsilon(1+12\epsilon)\Delta^4}{(1-10\epsilon)\Delta^2} 
\le \frac{9\epsilon(1+12\epsilon)\Delta^2}{(1-10\epsilon)}
\le 16\epsilon \Delta^2\ , \]
using that $\epsilon \le 1/40$. Thus, each node in $\hat{C}_i$ has at most $(12+16)\epsilon\Delta^2$ non-neighbors in $\hat{C}_i$, establishing Property 2(c).

We next show that the sets $\hat{C_i}$ are disjoint. 
Let $u$ be a node in $\hat{C}_i \setminus C_i$. It is $\epsilon$-similar to a node $v$ in $C_i$, who is $8\epsilon$-similar to an $\epsilon$-popular node $v'$ in $C_i$. At least $(1-\epsilon)\Delta^2$ d2-neighbors of $v'$ are in $C_i$. So, $u$ and $v'$ have $(1-9\epsilon)\Delta^2$ common d2-neighbors, and at least $(1-10\epsilon)\Delta^2$ of those are in $C_i$. Since $\epsilon < 1/20$, $u$ cannot be $\epsilon$-similar to a node in another component $C_j$. Hence, the $\hat{C_i}$ are disjoint. 

Finally, we need the nodes to learn the component ID in which they belong. First, observe that
any pair of nodes in $\hat{C_i}$ has a common neighbor, since they are $10\epsilon$-similar and $\epsilon < 1/10$. Hence, $\Hgraph[C_i]$ has diameter 2. 
In four rounds, the nodes can then identify as leader the node with the smallest ID of an $\epsilon/2$-popular node and let it define the ID of the component.
\end{proof}

Note that a spanning tree of each component of depth 4 (in $G$) can be formed for aggregation purposes as a BFS tree from the leader of the component.

\section{Small and Large Maximum Degree Case}
\label{app:smallLarge}
Recall that we consider three regimes for $\Delta$: 

\begin{description}
\item[Large degree:] $\Delta \in 2^{\Omega(\log n)}$,
\item[Small degree:] $\log n \cdot \polyloglog n$,
\item[Intermediate degree:] $2^{o(\log n)} \cup \Omegatilde(\log n)$.
\end{description}

In the main text, we have discussed the most interesting intermediate degree case in detail. We now also discuss the much simpler large and small degree cases. The large degree case is the simplest: in this case, we run the $O(\log n)$ algorithm from Theorem~\ref{thm:d2ColoringRand}, described in Section~\ref{sec:randAlg}, as $O(\log n) = O(\log \Delta)$ in this case.

The small degree case is also much simpler than the intermediate degree case, but still requires some work. The high-level idea is that the postshattering phase we describe for the intermediate degree requires a set of conditions that does not include any assumption about the value of $\Delta$, and can be obtained in a much simpler way than in the intermediate degree case with simple compression tricks like sending a many colors in a single \CONGEST message. 

Suppose $\Delta=\log n\cdot \poly\log\log n$. In this section, we prove that there is an $\polyloglog n$-rounds algorithm to get all the preconditions necessary to the execution of the postshattering phase described in~\Cref{sec:postshattering}, thus yielding an algorithm of $2^{O(\sqrt{\log\log n})}$ rounds in total. Said differently, we describe an algorithm that replaces Steps~\ref{step:acd} to~\ref{step:steiner} of the algorithm for the intermediate degree case.

\paragraph*{Degree reduction, graph splitting}

In our algorithm for the intermediate degree case, step~\ref{step:degreereduction} and~\ref{step:degreeestimation} serve to obtain guarantees similar to that of Lemma~5.4 in~\cite{BEPS12}: that we get a graph whose uncolored nodes are split into two sets $U^{lo}$ and $U^{hi}$ such that nodes of $U^{lo}$ have small uncolored degree $O(\log n)$ and that $U^{hi}$ induces a subgraph also of small degree $O(\log n)$. For $\Delta \in \Otilde(\log n)$, we can achieve this with a direct simulation of the Lemma~5.4 in~\cite{BEPS12}.

In this regime of $\Delta$, we have bandwidth to send/receive a $O(\log\log n)$-bit message to and from each d2-neighbor in $\poly\log\log n$ rounds. Thus, the graph splitting process of~\cite{BEPS12} can be efficiently (with multiplicative $\poly\log\log n$ overhead) simulated on $G^2$ using that the original algorithm only needs to send messages of size $\log \Delta \in O(\log\log n)$ in each round, as nodes' messages only consists of colors they are either trying or taking. The nodes can also learn whether their ID is greater or smaller than those of their d2-neighbors' in 2 rounds at the beginning and we have enough bandwidth to keep track of the colors of the neighbors.

Thus, after $O(\log \Delta\cdot \polyloglog n)=O(\poly\log\log n)$ rounds we obtain a partition of the sets of uncolored nodes into two sets $U^{lo}$ and $U^{hi}$, with the desired properties, w.h.p. In addition, all the nodes know their palette exactly, since the algorithm we simulated on $G^2$ has the nodes broadcast their color to all their neighbours upon coloration.

\paragraph*{Proceeding with low and high degree nodes}

As in our algorithm for the intermediate degree case and in~\cite{BEPS12}, we first proceed to color the nodes of $U^{lo}$ before coloring the nodes of $U^{hi}$. The next steps are thus run twice, and in then, $U$ must be understood as being $U^{lo}$ the first time they are run, and $U^{hi}$ the second, and thus to be a set of uncolored nodes of uncolored d2-degree $O(\log n)$ in both runs.

Notice that after coloring $U^{lo}$, the nodes of $U^{lo}$ can broadcast their new color in only $O(\polylog n)$. Thus, when doing all the steps again with $U^{hi}$, we can still assume that they all know their palette perfectly.

\paragraph*{Shattering in small connected components}

In our algorithm for the intermediate degree case, step~\ref{step:shattering} serves to obtain guarantees similar to that of Lemma~5.3 in~\cite{BEPS12}. For $\Delta \in \Otilde(\log n)$, we can simulate Lemma~5.3 in~\cite{BEPS12} in $O(\polylog n)$ rounds directly just like we did with Lemma~5.4 before. Thus, starting from a subset of the uncolored nodes $U$ of maximum uncolored degree $O(\log n)$, applying an $O(\polylog n)$ rounds algorithm we are able to randomly color nodes of $U$ such that the connected components of $G^2[U]$ are of size $O(\polylog n)$ w.h.p.

\paragraph*{Steiner nodes}

We are almost ready to apply the postshattering phase described in~\Cref{sec:postshattering} to $U$, only missing Steiner nodes to make the connected components $G^2[U]$ also connected in $G$, keeping their size $O(\polylog n)$ (this is step~\ref{step:steiner} of our algorithm when in the intermediate degree case). This is easily achieved by having each node of $U$ pick all its direct neighbors as Steiner nodes. This choice of Steiner nodes $S$ makes every connected component in $G^2[U]$ connected in $G[U \cup S]\setminus E(G[S])$, and keeps the size of every connected component in $O(\polylog n)$ since each node of a connected component in $G^2[U]$ adds at most $\Delta \in \Otilde(\log n)$ Steiner nodes to the set of Steiner nodes.

\paragraph*{Postshattering}
Having met all the preconditions necessary to apply the postshattering phase to $U$, we do so as described in~\Cref{sec:postshattering}, in exactly the same way we do when $\Delta$ is in the intermediate regime. This is the only step of the algorithm that takes $\omega(\polylog n)$ rounds.



\end{document}